\newcommand{\pr}{{\rm{pr}}}
\newcommand{\purple}{\color{black}}
\newcommand{\E}{{{E}}}
\newcommand{\red}{\color{black}}
\newcommand{\blue}{\color{black}}
\newcommand{\bluee}{\color{black}}
\newtheorem{lemma}{Lemma}{\bf}{\it}
\newtheorem{theorem}{Theorem}{\bf}{\it}
{\bf}{\it}
\theoremstyle{definition}
{\bf}{\it}
\newtheorem{assumption}{Assumption}{\bf}{\it}
{\bf}{\it}
\newtheorem{remark}{Remark}{\bf}{\it}
\newcommand{\blind}{1}
\begin{document}

\def\spacingset#1{\renewcommand{\baselinestretch}%
{#1}\small\normalsize} \spacingset{1}

\if1\blind
{
   \title{\bf Optimal Distributed Subsampling for Maximum Quasi-Likelihood Estimators with Massive Data}
  \author{ Jun Yu, HaiYing Wang, Mingyao Ai and Huiming Zhang \footnote{
  Jun Yu is an Assistant Professor, School of Mathematics and Statistics, Beijing Institute of Technology, Beijing 100081, China (yujunbeta@bit.edu.cn);
  HaiYing Wang is an Assistant Professor, Department of Statistics, University of Connecticut, Storrs, Mansfield, CT 06269, USA (haiying.wang@uconn.edu);
  Mingyao Ai is a Professor,  LMAM, School of Mathematical Sciences and Center for Statistical Science, Peking University, Beijing 100871, China (myai@math.pku.edu.cn);
  and Huiming Zhang is a PhD student, School of Mathematical Sciences and Center for Statistical Science, Peking University, Beijing 100871, China (zhanghuiming@pku.edu.cn).
   Correspondence should be addressed to Mingyao Ai.
  Yu's work was partially supported by Beijing Institute of Technology research fund program for young scholars. Wang's work was partially supported by NSF grant 1812013. Ai's work was partially supported by NSFC grants 11671019 and LMEQF.
  The authors are grateful to the co-Editor, the Associate Editor, and three referees for their valuable comments and suggestions.
      }}
  \date{} 
  \maketitle
} \fi

\if0\blind
{
  \bigskip
  \bigskip
  \bigskip
  \begin{center}
    {\LARGE\bf Optimal Distributed Subsampling for Maximum Quasi-Likelihood Estimators with Massive Data}
\end{center}
  \medskip
} \fi
\begin{abstract}
Nonuniform subsampling methods are effective to reduce computational burden and maintain estimation efficiency for massive data. Existing methods mostly focus on subsampling with replacement due to its high computational efficiency. If the data volume is so large that nonuniform subsampling probabilities cannot be calculated all at once, then subsampling with replacement is infeasible to implement. This paper solves this problem using Poisson subsampling. We first derive optimal Poisson subsampling probabilities in the context of  quasi-likelihood estimation under the A- and L-optimality criteria. For a practically implementable algorithm with approximated optimal subsampling probabilities, we establish the consistency and asymptotic normality of the resultant estimators. To deal with the situation that the full data are stored in different blocks or at multiple locations, we develop a distributed subsampling framework, in which statistics are computed simultaneously on smaller partitions of the full data. Asymptotic properties of the resultant aggregated estimator are investigated. We illustrate and evaluate the proposed strategies through numerical experiments on simulated and real data sets.
\end{abstract}
\noindent%
{\it Keywords:}  Big Data, Distributed Subsampling,  Poisson Sampling, Quasi-Likelihood

\newpage
\spacingset{1.45} %
\section{Introduction} \label{sec:intro}
Nowadays, the sizes of collected data are ever increasing, and the
incredible sizes of big data bring new challenges for data
analysis. Although many traditional statistical {\blue methods} are still valid
with big data, it is often computationally infeasible to perform statistical analysis
due to relatively limited computing power. In this
scenario, the bottleneck for big data analysis is the limited
computing resources, and extracting useful information
from massive data sets is a primary goal.

In general, there are two computational barriers for big data analysis: the first is that the data set is too large to be held in a computer's memory; and the second is that the computation takes too long to obtain the results.
Faced with these two challenges, current research on statistical inference for big data sets can be categorized into two basic approaches.
 One approach utilizes parallel computing platforms by dividing the whole data set into subsets to compute; the results from subsets are then  combined to obtain a final estimator, see \cite{Lin2011Aggregated,Duchi2012Dual,Li2013Statistical,Kleiner2015A,schifano2016online,Jordan2018communication} and the references therein.
The other approach uses subsampling to reduce the computational burden by carrying out intended calculations on a subsample drawn from the full data, see \cite{Drineas2011Faster,Dhillon2013New,Ma2015A,Matias2018Speeding}, among others.

A key tactic of subsampling {\bluee methods} is to specify nonuniform sampling probabilities to include more informative data points with higher probabilities.
Typical examples are the leverage score-based subsampling \citep[see][]{Drineas2011Faster,Mahoney2012Randomized,Ma2015A} and optimal subsampling method under the A-optimality criterion \citep{Wang2017Optimal}.  \cite{Wang2017Information} proposed the information based optimal subdata selection for linear models which selects the subsample deterministically without random sampling.

It is worth mentioning that most of the current subsampling strategies  focus on linear regression models and logistic regression models.
However, many more complicated models are required in mining massive data because a linear regression model or a logistic regression model may not be sufficient to fit a complicated large data set. For example, the paper citation data set  (\url{https://www.aminer.cn/citation}) contains text information for over four million research papers. Although we can extract numerical features from these texts, a linear regression or a logistic regression is clearly not adequate to model the number of citations for these papers. As another example, the airline data set (\url{http://stat-computing.org/dataexpo/2009/the-data.html}) has more than one hundred million observations, and a primary goal is to model the airline delays, which are right skewed and always positive. A log or power transform may help to alleviate the skewness, but a Gamma regression may give better interpretability. More details about these two data sets will be provided in Section~\ref{sec:experiments}.
In order to support more statistical models, this paper focuses on the quasi-likelihood estimator  which only requires assumptions on the moments of the response variable and the form of the distribution is not specified.

Subsampling with replacement according to unequal probabilities requires  {\bluee accessing} subsampling probabilities for the full data all at once. This takes a large memory to implement and may reduce the computational efficiency. To overcome this challenge, we propose an algorithm based on Poisson sampling \citep{Sarndal1992Model}. Compared with subsampling with replacement, Poisson subsampling also has a high estimation efficiency with nonuniform subsampling probabilities. %
In order to utilize parallel computing facilities, %
a distributed version of the algorithm is also developed which enables us to select subsamples in parallel or in different locations simultaneously.
To the best of our knowledge, theoretical and methodological discussions with statistical guarantees on optimal subsampling from massive data are limited for statistical models beyond linear regression models and logistic regression models.
This paper not only develops optimal subsampling method for quasi-likelihood estimators but also solves storage constraints imposed by large scale data sets.

The rest of the paper is organized as follows.
In Section \ref{sec:model}, we introduce the model setup, present the general {\blue Poisson} subsampling algorithm, and derive theoretical results for the resultant estimator.
 Section \ref{sec:opt-sample} presents optimal subsampling strategies based on the A- and L-optimality criteria for quasi-likelihood estimators.
 Some practical issues to approximate and implement the optimal subsampling procedures are also considered with theoretical justifications.
Section \ref{sec:dist-sample} designs a distributed version of the Poisson subsampling algorithm and presents asymptotic properties of the resultant estimators.
Section \ref{sec:experiments} provides numerical results on simulated and real data sets. All proofs are deferred in the {supplementary material}.

\section{Preliminaries}\label{sec:model}
In this section, we first provide a brief overview of quasi-likelihood estimation and then present the general Poisson subsampling algorithm.

\subsection{Models and Assumptions}\label{sec:glm}
 We adopt the notations for quasi-likelihood estimator discussed in \cite{chen1999Strong}.
 Let $\{({\bm x}_i, y_i)\} _{i = 1}^N$ be a sequence of independent and identically distributed (i.i.d) random variables with each covariate ${\bm x}_i\in \mathbb{R}^{d}$ and response $y_i \in \mathbb{R}$.
The conditional expectation of the response $y_i$ given ${\bm x}_i$ is
 \begin{equation}\label{eq:model1}
   E(y_i| {\bm x}_i)=\psi({\bm \beta_t}^{T}{\bm x}_i),\quad i = 1,2, \ldots ,N,
 \end{equation}
 for some true regression parameter vector $\bm\beta_t\in\mathbb{R}^d$, where $\psi(\cdot)$ is a twice continuously differentiable function such that $\dot{\psi}(t):=d\psi(t)/dt>0$ for all $t$.
 The quasi-likelihood estimator $\hat{\bm\beta}_{\rm QLE}$ is the solution to the following estimation equation:
 \begin{equation}\label{eq:likelihood}
   Q(\bm\beta):=\sum_{i=1}^N \{y_i-\psi({\bm \beta}^{T}\bm x_i)\}\bm x_i=\bm 0.
 \end{equation}
 The inference procedure based on \eqref{eq:likelihood} is very general, and a typical example is the maximum likelihood estimation for generalized linear models \citep{Mccullagh1989Generalized}.
 More details  can be found in \cite{fahrmeir2001multivariate,Chen2011Quasi} and the references therein.

\subsection{General  Poisson Subsampling Algorithm}\label{sec:subsamplingalg}

{\blue Let $p_i$ be the probability to sample the $i$-th data point  for $i = 1,...,N$, and let $S$ be a set of subsample observations and the corresponding sampling probabilities.}
A general Poisson subsampling algorithm is presented in Algorithm~\ref{alg:1}.
\newpage

\begin{algorithm}[H]
\SetAlgoLined
 {\bf Initialization} $S=\varnothing$\;
 \For{$i=1,\ldots,N$}{
  Generate a Bernoulli variable $\delta_i\sim\text{Bernoulli}(p_i)$\;
  \If {$\delta_i=1$}{ Update $S=S\cup\{(\bm {x}_{i},y_{i},p_{i})\}$\;}{}
 }
 {\bf Estimation:}\
  Solve the following weighted estimation equation to obtain  $\tilde{\bm\beta}$ based on the subsample $S$,
    \begin{equation}\label{eq:reweigt}
      Q^*(\bm\beta)=\sum_{S}\frac{1}{{p}_{i}}\{y_{i}-\psi({\bm \beta}^{T}\bm x_{i})\}\bm x_{i}={\bm 0}.
    \end{equation}
 \caption{General Poisson subsampling algorithm}\label{alg:1}
\end{algorithm}

An advantage of Poisson subsampling is that the decision of inclusion for each data point $({\bm x}_i, y_i)$ is made on the basis of $p_i$ only. We do not need to use all $p_i$ for $i=1,...N$ together.
In Algorithm \ref{alg:1}, $p_i$ can be used one-by-one or block-by-block to generate $\delta_i$ {while scanning through the full data.  %
Therefore, there is no memory constraint problem for massive data.}

The subsample size, say $r^*$, in Algorithm \ref{alg:1} is random such that $E(r^*)=\sum_{i=1}^Np_i$.
We use $r=\sum_{i=1}^Np_i$ to denote the expected subsample size, and  further assume $r<N$ throughout this paper, which is {\blue natural} in the big data setting.

To establish our asymptotic results, we need the following assumptions.
\begin{assumption}\label{assumptionA}
The regression parameter lies in the $l_1$ ball
        $\Lambda=\left\lbrace  \bm\beta \in \mathbb{R}^{p}:\: \|\bm\beta\|_1 \le B \right\rbrace, $ with $\bm\beta_t$ and $\hat{\bm\beta}_{\rm QLE}$ being the inner points of $\Lambda$,
        where $B$ is a constant.
\end{assumption}
\begin{assumption}\label{assumptionB}
Suppose that
\begin{align*}
  &{\rm(i)}~E(\left\| {{\bm x_1}} \right\|^9)<\infty, & \qquad&
{\rm(ii)}~E(|y_1|^6)<\infty,\\
  &{\rm(iii)}~E\Big\{\sup_{\bm\beta\in\Lambda}{\psi}^6({\bm\beta}^T\bm x_1)\Big\}<\infty, & \qquad &
  {\rm(iv)}~E\Big\{\sup_{\bm\beta\in\Lambda}{\dot\psi}^6({\bm\beta}^T\bm x_1)\Big\}<\infty.
\end{align*}
\end{assumption}
\begin{assumption}\label{assumptionC}
Let $\Sigma_\psi({\bm\beta}) = N^{-1}\sum_{i=1}^N\dot{\psi}({\bm\beta}^T\bm x_{i})\bm x_{i}\bm x_{i}^T$, and further assume that  it satisfies
$\lim_{N\rightarrow\infty}\inf_{\bm\beta\in\Lambda}\lambda_{\textrm{min}}\{\Sigma_\psi({\bm\beta})\}>0$ with probability approaching one, where $\lambda_{\textrm{min}}(A)$ means the smallest eigenvalue of matrix $A$.
\end{assumption}
\begin{assumption}\label{assumptionD}
Assume that both $\psi(\bm\beta^T\bm x_i)$ and $\dot\psi(\bm\beta^T\bm x_i)\bm x_i\bm x_i^T$ are $m(\bm x_i)$-Lipschitz continuous. To be precise, for all $\bm\beta_1,\bm\beta_2\in \Lambda$, there exist $m_1(\bm x_i)$ and $m_2(\bm x_i)$ such that
     $\|\psi(\bm\beta_1^T\bm x_i)-\psi(\bm\beta_2^T\bm x_i)\|\le m_1(\bm x_i)\|\bm\beta_1-\bm\beta_2\|$ and $\|\dot\psi(\bm\beta_1^T\bm x_i)\bm x_i\bm x_i^T-\dot\psi(\bm\beta_2^T\bm x_i)\bm x_i\bm x_i^T\|_s\le m_2(\bm x_i)\|\bm\beta_1-\bm\beta_2\|$, where $\|A\|_s$ denotes the spectral norm of matrix $A$. %
     Further assume that both $E\{m_1^3(\bm x_i)\}$ and $ E\{m_2(\bm x_i)\}$ are finite.
\end{assumption}
\begin{assumption}\label{assumptionE}
Assume that $ \max_{i=1,\ldots,N}\ (Np_i)^{-1}={\blue O_P(r^{-1})}$.
\end{assumption}

Assumption \ref{assumptionA} is required to guarantee consistency and it is commonly used in the literature such as \cite{NEWEY1994Large}.
Assumption \ref{assumptionB} imposes some moment assumptions, and similar conditions are also assumed in \cite{chen1999Strong}.
Conditions (iii) and (iv) in Assumption \ref{assumptionB} are satisfied by many examples of generalized linear models  such as linear regressions, logistic regressions, and binomial regressions when the covariate distributions are sub-Gaussian.
Assumption \ref{assumptionC} is mainly to ensure that the quasi-likelihood estimator is unique, since this condition indicates that the quasi log-likelihood function is convex \citep[cf.][]{Tzavelas1998note,Rao2007Linear,Chen2011Quasi}.
Assumption \ref{assumptionD} adds restrictions on smoothness. Similar assumptions are common in statistics \citep[see][Chapter 5 as an example]{Vaart2000Asymptotic}.
Assumption \ref{assumptionE} restricts the weights in the estimation equation~\eqref{eq:reweigt}. It is mainly to protect the estimation equation from being dominated by data points with extremely small subsampling probabilities. {\blue This assumption is quite common in classic sampling
techniques %
\citep[see][as examples]{berger2016empirical,Breidt2000Local}. \red In this paper, we allow the subsampling probability $p_i$ to dependent on the observed data, so we use the $O_P$ notation in Assumption~\ref{assumptionE}. }

To facilitate the presentation, denote the full data by $\mathcal{F}_N=\{{\bm x}_i, y_i\} _{i = 1}^N$.
The following theorems establish consistency to the full data QLE and asymptotically normality of $\tilde{\bm\beta}$ from Algorithm~\ref{alg:1}.

\begin{theorem}\label{thm:as-general-alg}
  If Assumptions 1 -- 5 hold, %
  then as $N\rightarrow\infty$ and $r\rightarrow\infty$, $\tilde{\bm\beta}$ is consistent to $\hat{\bm\beta}_{\rm QLE}$ in conditional probability,
  given $\mathcal{F}_N$ in probability. Moreover, the rate of convergence is $r^{-1/2}$.
  That is, with probability approaching one, for any $\epsilon>0$, there exists a finite $\Delta_\epsilon$ and $r_\epsilon$ such that
  \begin{equation}\label{eq:18}
    P(\|\tilde{\bm\beta}-\hat{\bm\beta}_{\rm QLE}\|\ge
    r^{-1/2}\Delta_\epsilon|\mathcal{F}_N)<\epsilon
  \end{equation}
  for all $r>r_\epsilon$.
\end{theorem}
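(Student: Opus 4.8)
The plan is to exploit the fact that Poisson subsampling makes the weighted estimating equation \eqref{eq:reweigt} conditionally unbiased for the full-data equation \eqref{eq:likelihood}, and then to run a concavity argument on a reparametrized objective. Writing $\delta_i$ for the Bernoulli inclusion indicators, one first rewrites $Q^*(\bm\beta)=\sum_{i=1}^N(\delta_i/p_i)\{y_i-\psi(\bm\beta^T\bm x_i)\}\bm x_i$, so that conditionally on $\mathcal{F}_N$ — on which both the $p_i$ and $\hat{\bm\beta}_{\rm QLE}$ are fixed while the $\delta_i$ are independent with $E(\delta_i|\mathcal{F}_N)=p_i$ — we have $E\{Q^*(\bm\beta)|\mathcal{F}_N\}=Q(\bm\beta)$ for every $\bm\beta$. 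In particular the conditional mean of $Q^*(\hat{\bm\beta}_{\rm QLE})$ is exactly $\bm 0$, which drives the whole argument. Since $\dot\psi>0$, the map $\ell^*(\bm\beta)=\sum_S p_i^{-1}\{y_i\bm\beta^T\bm x_i-\Psi(\bm\beta^T\bm x_i)\}$ with $\dot\Psi=\psi$ is concave, with gradient $Q^*$ and Hessian $-\Sigma^*(\bm\beta)$, where $\Sigma^*(\bm\beta)=\sum_S p_i^{-1}\dot\psi(\bm\beta^T\bm x_i)\bm x_i\bm x_i^T\succeq\bm 0$. Thus $\tilde{\bm\beta}$ maximizes $\ell^*$, and it suffices to bound the maximizer $\tilde{\bm u}:=r^{1/2}(\tilde{\bm\beta}-\hat{\bm\beta}_{\rm QLE})$ of the rescaled concave field $\bm u\mapsto (r/N)\{\ell^*(\hat{\bm\beta}_{\rm QLE}+r^{-1/2}\bm u)-\ell^*(\hat{\bm\beta}_{\rm QLE})\}$.

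First I would control the score at the full-data QLE, showing $\bm W_r:=r^{1/2}N^{-1}Q^*(\hat{\bm\beta}_{\rm QLE})=O_P(1)$. Using the conditional-mean-zero property, the conditional second moment equals $E(\|N^{-1}Q^*(\hat{\bm\beta}_{\rm QLE})\|^2|\mathcal{F}_N)=N^{-2}\tr\{\sum_i p_i^{-1}(1-p_i)(y_i-\psi(\hat{\bm\beta}_{\rm QLE}^T\bm x_i))^2\bm x_i\bm x_i^T\}$, which by Assumption~\ref{assumptionE} is at most $O_P(r^{-1})\cdot N^{-1}\sum_i(y_i-\psi(\hat{\bm\beta}_{\rm QLE}^T\bm x_i))^2\|\bm x_i\|^2$. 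Bounding the summand by $2y_i^2\|\bm x_i\|^2+2\sup_{\bm\beta\in\Lambda}\psi^2(\bm\beta^T\bm x_i)\|\bm x_i\|^2$, the average is $O_P(1)$ by the law of large numbers and the moment bounds in Assumption~\ref{assumptionB} (the $\sup_{\bm\beta\in\Lambda}$-versions absorb the randomness of $\hat{\bm\beta}_{\rm QLE}\in\Lambda$, and Hölder reconciles the required products with the stated moments). Hence the conditional second moment is $O_P(r^{-1})$, and a conditional Markov inequality converts this into $\bm W_r=O_P(1)$ in the required double-asymptotic sense.

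Next I would establish that, conditionally on $\mathcal{F}_N$ and uniformly over $\bm\beta\in\Lambda$, $N^{-1}\Sigma^*(\bm\beta)$ stays close in spectral norm to $\Sigma_\psi(\bm\beta)=E\{N^{-1}\Sigma^*(\bm\beta)|\mathcal{F}_N\}$, so that $\lambda_{\min}\{N^{-1}\Sigma^*(\bm\beta)\}\ge c>0$ with probability approaching one by Assumption~\ref{assumptionC}. The pointwise fluctuation is handled by the same variance-plus-Assumption~\ref{assumptionE} bookkeeping as above, and the upgrade to a uniform statement over $\Lambda$ uses the $m_2(\bm x_i)$-Lipschitz continuity of $\dot\psi(\bm\beta^T\bm x_i)\bm x_i\bm x_i^T$ (Assumption~\ref{assumptionD}) together with the high-order moments in Assumption~\ref{assumptionB} to bound the weighted increments over a covering of $\Lambda$. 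A second-order Taylor expansion of $\ell^*$ about $\hat{\bm\beta}_{\rm QLE}$ then gives, for some intermediate $\bar{\bm\beta}$ on the segment joining $\hat{\bm\beta}_{\rm QLE}$ and $\tilde{\bm\beta}$,
\[
0\le \tfrac{r}{N}\{\ell^*(\tilde{\bm\beta})-\ell^*(\hat{\bm\beta}_{\rm QLE})\}=\tilde{\bm u}^T\bm W_r-\tfrac12\,\tilde{\bm u}^T\{N^{-1}\Sigma^*(\bar{\bm\beta})\}\tilde{\bm u},
\]
where the first inequality holds because $\tilde{\bm\beta}$ maximizes $\ell^*$. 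Combining with the eigenvalue lower bound yields $\tfrac{c}{2}\|\tilde{\bm u}\|^2\le\|\tilde{\bm u}\|\,\|\bm W_r\|$, hence $\|\tilde{\bm u}\|\le(2/c)\|\bm W_r\|=O_P(1)$, which is precisely \eqref{eq:18} once the $O_P(1)$ bound is translated into the stated $(\Delta_\epsilon,r_\epsilon)$ form; existence and uniqueness of $\tilde{\bm\beta}$ in the interior of $\Lambda$ follow from strict concavity of $\ell^*$ (guaranteed with high probability by the positive-definiteness above) and Assumption~\ref{assumptionA}.

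The main obstacle is the uniform spectral-norm control of $N^{-1}\Sigma^*(\bm\beta)$ over $\Lambda$: the weights $p_i^{-1}$ can be large, so establishing the eigenvalue lower bound for the \emph{subsampled} Hessian is delicate, and it is precisely here that the heavy moment requirements (e.g.\ the ninth moment of $\|\bm x_1\|$) and the Lipschitz condition in Assumption~\ref{assumptionD} are consumed. A second, more bookkeeping-level difficulty is that every probabilistic bound must be phrased at two levels — conditionally on $\mathcal{F}_N$, and then with probability approaching one over $\mathcal{F}_N$ — because both the $p_i$ and $\hat{\bm\beta}_{\rm QLE}$ depend on the full data; keeping the $O_P(\cdot)$ statements consistent across these two layers is where most of the care is needed.
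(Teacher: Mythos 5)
Your proof is correct, but it takes a genuinely different route from the paper's. The paper argues in two stages: it first proves consistency of $\tilde{\bm\beta}$ to $\hat{\bm\beta}_{\rm QLE}$ by showing that $N^{-1}Q^*(\bm\beta)-N^{-1}Q(\bm\beta)\to 0$ on the compact set $\Lambda$ and invoking the classical Z-estimator consistency theorem (Theorem 5.9 of van der Vaart), and only then Taylor-expands the vector-valued equation $Q^*(\tilde{\bm\beta})=\bm 0$ coordinate-by-coordinate around $\hat{\bm\beta}_{\rm QLE}$; a separate lemma showing $N^{-1}\sum_i(\delta_i/p_i)\dot\psi((\hat{\bm\beta}_{\rm QLE}+\bm s_r)^T\bm x_i)\bm x_i\bm x_i^T\to\Sigma_\psi(\hat{\bm\beta}_{\rm QLE})$ for any $\bm s_r\to0$ (so only \emph{local} Hessian control is needed, which is why consistency must come first) converts the expansion into $\tilde{\bm\beta}-\hat{\bm\beta}_{\rm QLE}=-\Sigma_\psi(\hat{\bm\beta}_{\rm QLE})^{-1}N^{-1}Q^*(\hat{\bm\beta}_{\rm QLE})+o_{P|\mathcal{F}_N}(\|\tilde{\bm\beta}-\hat{\bm\beta}_{\rm QLE}\|)$, whence the $r^{-1/2}$ rate. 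You instead run a one-shot concavity (``basic inequality'') argument on the subsampled quasi-likelihood, which delivers consistency and the rate simultaneously and avoids the coordinate-wise intermediate points of the vector Taylor expansion; the price is that you must lower-bound $\lambda_{\min}\{N^{-1}\Sigma^*(\bar{\bm\beta})\}$ at an intermediate point that is not yet known to be near $\hat{\bm\beta}_{\rm QLE}$, hence uniformly over $\Lambda$ — your covering argument via the $m_2$-Lipschitz condition and $E\{m_2(\bm x_i)\}<\infty$ in Assumption~\ref{assumptionD} does deliver this, since $d$ is fixed and the pointwise fluctuations are $O_P(r^{-1})$ in conditional mean square. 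Your score bound $\bm W_r=O_P(1)$ is the same computation as the paper's (conditional variance of $N^{-1}Q^*(\hat{\bm\beta}_{\rm QLE})$ controlled via Assumption~\ref{assumptionE} plus the moment conditions, with the $\sup_{\bm\beta\in\Lambda}$ device absorbing the data-dependence of $\hat{\bm\beta}_{\rm QLE}$). It is worth noting that your concave-objective device is essentially the one the paper itself deploys later, in the proof of Lemma~\ref{lem:local} supporting Theorem~\ref{thm:asy-2step-alg4}, where the quasi-likelihood $U_S$ and an eigenvalue lower bound on the subsampled Hessian at an intermediate point play exactly the roles you assign them here; so your route is both valid under the stated assumptions and consonant with the paper's toolkit.
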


\begin{theorem}\label{thm:CLT}
  If Assumptions 1 -- 5 hold, %
   then as $N\rightarrow\infty$ and $r\rightarrow\infty$, conditional on $\mathcal{F}_N$ in probability,
  \begin{equation}\label{normal}
    V^{-1/2}(\tilde{\bm\beta}-\hat{\bm\beta}_{\rm QLE})
    \longrightarrow N(0,I)
  \end{equation}
  in distribution, where
  \begin{equation}\label{varorder}
    V=\Sigma_\psi(\hat{\bm\beta}_{\rm QLE})^{-1}V_c\Sigma_\psi(\hat{\bm\beta}_{\rm QLE})^{-1}
  \end{equation}
  and
  \begin{equation}
    V_c=\frac{1}{N^2}
    \sum_{i=1}^N\frac{\{y_i-{\psi}(\hat{\bm\beta}_{\rm QLE}^T\bm x_i)\}^2\bm x_i\bm x_i^T}{p_i}-\frac{1}{N^2}\sum_{i=1}^N{\{y_i-\psi(\hat{\bm\beta}_{\rm QLE}^T\bm x_i)\}^2\bm x_i\bm x_i^T}.
    \label{eq:vc}
  \end{equation}
\end{theorem}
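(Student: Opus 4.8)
The plan is to follow the standard M-estimation linearization, exploiting the fact that conditional on $\mathcal{F}_N$ the inclusion indicators $\delta_i\sim\text{Bernoulli}(p_i)$ are independent, so that the subsampled estimating equation $Q^*(\bm\beta)=\sum_{i=1}^N (\delta_i/p_i)\{y_i-\psi(\bm\beta^T\bm x_i)\}\bm x_i$ is a sum of independent (non-identically distributed) terms to which a Lindeberg--Feller central limit theorem applies. First I would perform a Taylor expansion of the identity $\bm 0=Q^*(\tilde{\bm\beta})$ about $\hat{\bm\beta}_{\rm QLE}$ to obtain
\[
  \tilde{\bm\beta}-\hat{\bm\beta}_{\rm QLE}=-\Big\{\tfrac{1}{N}\dot Q^*(\bm\beta')\Big\}^{-1}\tfrac{1}{N}Q^*(\hat{\bm\beta}_{\rm QLE}),
\]
where $\bm\beta'$ lies on the segment joining the two estimators and $\dot Q^*$ is the Jacobian of $Q^*$. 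The argument then reduces to identifying the limit of the first factor and establishing a conditional CLT for the second.

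For the score term $N^{-1}Q^*(\hat{\bm\beta}_{\rm QLE})$, taking the conditional expectation given $\mathcal{F}_N$ yields $N^{-1}Q(\hat{\bm\beta}_{\rm QLE})=\bm 0$ (since $E(\delta_i\mid\mathcal{F}_N)=p_i$ and $\hat{\bm\beta}_{\rm QLE}$ solves the full-data equation \eqref{eq:likelihood}), while its conditional variance is exactly $V_c$ in \eqref{eq:vc}, because $\mathrm{Var}(\delta_i/p_i\mid\mathcal{F}_N)=p_i^{-1}-1$. I would then verify the Lindeberg condition for the triangular array of normalized summands $V_c^{-1/2}N^{-1}(\delta_i/p_i)\{y_i-\psi(\hat{\bm\beta}_{\rm QLE}^T\bm x_i)\}\bm x_i$ to conclude that $V_c^{-1/2}N^{-1}Q^*(\hat{\bm\beta}_{\rm QLE})\to N(\bm 0,I)$ conditionally on $\mathcal{F}_N$ in probability. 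For the Jacobian I would split $N^{-1}\dot Q^*(\bm\beta')=\{N^{-1}\dot Q^*(\bm\beta')-N^{-1}\dot Q^*(\hat{\bm\beta}_{\rm QLE})\}+N^{-1}\dot Q^*(\hat{\bm\beta}_{\rm QLE})$; the bracketed difference is controlled by the Lipschitz continuity in Assumption~\ref{assumptionD} together with the consistency rate $\|\tilde{\bm\beta}-\hat{\bm\beta}_{\rm QLE}\|=O_P(r^{-1/2})$ supplied by Theorem~\ref{thm:as-general-alg}, and the remaining term converges in conditional probability to $-\Sigma_\psi(\hat{\bm\beta}_{\rm QLE})$ by a conditional weak law of large numbers for the Poisson-weighted sum, whose limit is nonsingular by Assumption~\ref{assumptionC}. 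A Slutsky argument combines the two factors into $V^{-1/2}(\tilde{\bm\beta}-\hat{\bm\beta}_{\rm QLE})\to N(\bm 0,I)$ with $V=\Sigma_\psi(\hat{\bm\beta}_{\rm QLE})^{-1}V_c\Sigma_\psi(\hat{\bm\beta}_{\rm QLE})^{-1}$.

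The main obstacle is verifying the Lindeberg condition, since each summand carries the inflation factor $1/p_i$, which can be very large for data points with small inclusion probabilities. Here Assumption~\ref{assumptionE}, $\max_{i}(Np_i)^{-1}=O_P(r^{-1})$, is essential: it bounds the worst-case weight so that every normalized summand is of order $O_P(r^{-1/2})$, and combined with the moment conditions in Assumption~\ref{assumptionB} (the sixth-moment bounds on $y_1$ and on $\psi$, $\dot\psi$, and the ninth-moment bound on $\|\bm x_1\|$) it forces the truncated second-moment sum to vanish. A secondary technical point is that all convergences are stated conditional on $\mathcal{F}_N$ yet must hold \emph{in probability} with respect to the full-data randomness; I would therefore establish the relevant expectation and variance limits by computing unconditional expectations over $\mathcal{F}_N$ and then pass to conditional statements via Markov and Chebyshev bounds, taking care to show that the normalizing matrices $V_c$ and $\Sigma_\psi(\hat{\bm\beta}_{\rm QLE})$ stabilize, remaining bounded and bounded away from singularity with probability approaching one.
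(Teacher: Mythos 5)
Your proposal is correct and follows essentially the same route as the paper: a Taylor expansion of $Q^*(\tilde{\bm\beta})=\bm 0$ about $\hat{\bm\beta}_{\rm QLE}$, a Lindeberg--Feller conditional CLT for the score $N^{-1}Q^*(\hat{\bm\beta}_{\rm QLE})$ with conditional variance $V_c$ (using Assumption~\ref{assumptionE} and the moment bounds to verify Lindeberg via higher-moment/H\"older estimates), a decomposition of the Jacobian into a Lipschitz-controlled perturbation term plus a Chebyshev-controlled Poisson-weighted average converging to $\Sigma_\psi(\hat{\bm\beta}_{\rm QLE})$, and Slutsky. The only cosmetic difference is that the paper applies the mean-value theorem coordinatewise (with distinct intermediate points $\acute{\bm\beta}_{(j)}$) rather than with a single $\bm\beta'$, and establishes consistency via a uniform-convergence argument before deriving the $r^{-1/2}$ rate from the expansion itself, but these do not change the substance of the argument.
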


\begin{remark}
  When $r/N\rightarrow 0$, the second term on the right-hand-side of
  \eqref{eq:vc} can be ignored. In this case, the result is the same as that for sampling with replacement in logistic regression \citep[see][]{Wang2017Optimal}.
  However, when $r/N\rightarrow c\in(0,1]$, Poisson subsampling will
  lead to a smaller variance.
\end{remark}

\section{Optimal Poisson Subsampling}\label{sec:opt-sample}
In this section, we derive optimal subsampling probabilities to better approximate $\hat{\bm\beta}_{\rm QLE}$.

\subsection{Optimal Subsampling Strategies}\label{sec:opt-prob}
The result in Theorem~\ref{thm:CLT} can be used to find optimal subsampling probabilities that minimize the asymptotic mean squared error (MSE) of $\tilde{\bm\beta}$ in approximating $\hat{\bm\beta}_{\rm QLE}$.
This is equivalent to minimizing ${\rm tr}(V)$, which corresponds to the A-optimality in the language of optimal design \citep[see][]{pukelsheim2006optimal}.

\begin{theorem}\label{thm:3}
   For {\blue ease} of presentation, define
  \begin{equation}\label{eq:pi-amse-w}
    \hbar_i^{\mathrm{MV}}=
    |y_i-{\psi}(\hat{\bm\beta}_{\rm QLE}^T\bm x_i)|\|\Sigma_\psi(\hat{\bm\beta}_{\rm QLE})^{-1}\bm x_i\|,\;
    i=1 ,\ldots, N,
  \end{equation}
  and let $\hbar_{(1)}^{\mathrm{MV}}\le\hbar_{(2)}^{\mathrm{MV}} \le\ldots\le \hbar_{(N)}^{\mathrm{MV}}$ denote the order statistics of $\{\hbar_{i}^{\mathrm{MV}}\}_{i=1}^N $. For convenience, denote $\hbar_{(N+1)}^{\mathrm{MV}}=+\infty$, and assume that $\hbar_{(N-r)}^{\mathrm{MV}}>0$.
  The asymptotic MSE of $\tilde{\bm\beta}$, $\mathrm{tr}(V)$, attains its minimum, if $p_i$'s in Algorithm \ref{alg:1} are chosen to be
  \begin{equation}\label{eq:1}
    p_i^{\rm MV}=r\frac{\hbar_i^{\rm MV}\wedge M}{\sum_{j=1}^N(\hbar_j^{\rm MV}\wedge M)},
  \end{equation}
   where $a\wedge b=\min(a,b)$,
  \begin{equation}\label{eq:2}
    M=\frac{1}{r-k}\sum_{i=1}^{N-k}\hbar_{(i)}^{\mathrm{MV}},
  \end{equation}
  and
  \begin{equation}
    k=\min\left\{s\ \Bigg{|} 0\le s \le r, \quad
  (r-s)\hbar_{(N-s)}^{\mathrm{MV}}< \sum_{i=1}^{N-s}\hbar_{(i)}^{\mathrm{MV}}\right\},
  \end{equation}
that is, $k$ satisfies
\begin{equation}\label{eq:3}
(r-k+1)\hbar_{(N-k+1)}^{\mathrm{MV}}
  \ge
  \sum_{i=1}^{N-k+1}\hbar_{(i)}^{\mathrm{MV}}
  \quad\text{and}\quad
  (r-k)\hbar_{(N-k)}^{\mathrm{MV}}
  <\sum_{i=1}^{N-k}\hbar_{(i)}^{\mathrm{MV}}.
\end{equation}
\end{theorem}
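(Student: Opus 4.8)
The plan is to write $\mathrm{tr}(V)$ explicitly as a function of the subsampling probabilities $\{p_i\}$, reduce the problem to a constrained convex program, and then verify that the candidate $p_i^{\mathrm{MV}}$ in \eqref{eq:1} satisfies the Karush--Kuhn--Tucker (KKT) conditions of that program, which by convexity certifies global optimality. First I would substitute $V=\Sigma_\psi^{-1}V_c\Sigma_\psi^{-1}$ from Theorem~\ref{thm:CLT} (with $\Sigma_\psi$ evaluated at $\hat{\bm\beta}_{\rm QLE}$) and use the identity $\mathrm{tr}(\Sigma_\psi^{-1}\bm x_i\bm x_i^T\Sigma_\psi^{-1})=\|\Sigma_\psi^{-1}\bm x_i\|^2$, valid since $\Sigma_\psi$ is symmetric. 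Because the second term of $V_c$ in \eqref{eq:vc} does not involve $p_i$, minimizing $\mathrm{tr}(V)$ over $\{p_i\}$ is equivalent to minimizing
\[
\sum_{i=1}^N\frac{\{y_i-\psi(\hat{\bm\beta}_{\rm QLE}^T\bm x_i)\}^2\|\Sigma_\psi^{-1}\bm x_i\|^2}{p_i}=\sum_{i=1}^N\frac{(\hbar_i^{\mathrm{MV}})^2}{p_i},
\]
subject to $\sum_{i=1}^N p_i=r$ and $0<p_i\le 1$; the equality constraint fixes the expected subsample size, and the upper bound $p_i\le1$ is intrinsic to Poisson sampling and is exactly what distinguishes this problem from the with-replacement case.

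Next I would analyze this program. The objective is strictly convex in $\{p_i\}$ and the constraints are linear, so the KKT conditions are necessary and sufficient for the global minimum. Introducing a multiplier $\lambda$ for the equality constraint and multipliers $\mu_i\ge0$ for the bounds $p_i\le1$, stationarity reads $-(\hbar_i^{\mathrm{MV}})^2/p_i^2+\lambda+\mu_i=0$. This forces a two-regime structure: interior points have $\mu_i=0$ and hence $p_i=\hbar_i^{\mathrm{MV}}/\sqrt{\lambda}$, while saturated points satisfy $p_i=1$ with $\mu_i=(\hbar_i^{\mathrm{MV}})^2-\lambda\ge0$, which requires $\hbar_i^{\mathrm{MV}}\ge\sqrt{\lambda}$. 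Thus the largest $\hbar_i^{\mathrm{MV}}$ values saturate at $1$, and writing $M=\sqrt{\lambda}$ recovers the truncated form $p_i\propto\hbar_i^{\mathrm{MV}}\wedge M$. Enforcing $\sum_i p_i=r$ with $k$ saturated points gives $k+\frac{1}{M}\sum_{i=1}^{N-k}\hbar_{(i)}^{\mathrm{MV}}=r$, which pins down $M=\frac{1}{r-k}\sum_{i=1}^{N-k}\hbar_{(i)}^{\mathrm{MV}}$ exactly as in \eqref{eq:2}.

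The crux of the argument, and the main obstacle, is to determine the correct number $k$ of saturated points and to confirm that the proposed $(M,k)$ yields a genuinely feasible KKT point; this is where the delicate inequalities in \eqref{eq:3} enter. I would show that these two conditions are precisely the feasibility/optimality requirements. The strict inequality $(r-k)\hbar_{(N-k)}^{\mathrm{MV}}<\sum_{i=1}^{N-k}\hbar_{(i)}^{\mathrm{MV}}$ is equivalent to $\hbar_{(N-k)}^{\mathrm{MV}}<M$, guaranteeing that each non-saturated probability $p_i=\hbar_i^{\mathrm{MV}}/M$ indeed lies strictly below $1$; the complementary inequality $(r-k+1)\hbar_{(N-k+1)}^{\mathrm{MV}}\ge\sum_{i=1}^{N-k+1}\hbar_{(i)}^{\mathrm{MV}}$ rearranges to $\hbar_{(N-k+1)}^{\mathrm{MV}}\ge M$, so the $k$ largest points correctly saturate with nonnegative multipliers $\mu_i=(\hbar_i^{\mathrm{MV}})^2-M^2\ge0$. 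Finally, substituting $M$ into $\sum_{i=1}^N(\hbar_i^{\mathrm{MV}}\wedge M)=\sum_{i=1}^{N-k}\hbar_{(i)}^{\mathrm{MV}}+kM=(r-k)M+kM=rM$ confirms both that the saturated points receive weight $p_i=rM/(rM)=1$ and that $\sum_i p_i=r$. Defining $k$ as the smallest index $s$ for which the strict inequality holds guarantees existence and uniqueness of this crossover, and the hypothesis $\hbar_{(N-r)}^{\mathrm{MV}}>0$ ensures $M>0$ so the construction is well defined; together these verify all KKT conditions and, by strict convexity, identify $p_i^{\mathrm{MV}}$ as the unique global minimizer of $\mathrm{tr}(V)$.
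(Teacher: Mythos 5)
Your proposal is correct, and it reaches the result by a genuinely different route from the paper. Both arguments begin identically — reduce $\mathrm{tr}(V)$ to minimizing $\sum_{i=1}^N(\hbar_i^{\mathrm{MV}})^2/p_i$ subject to $\sum_i p_i=r$ and $p_i\le 1$, discarding the $p$-free second term of $V_c$ — but from there the paper does \emph{not} use KKT conditions. It first applies the Cauchy--Schwarz inequality to show that $p_i\propto\hbar_i^{\mathrm{MV}}$ is optimal whenever that choice respects $p_i\le1$, then argues that if the bound is violated the largest point must saturate at $p_{(N)}=1$, and treats the remaining problem (with $N-1$ points and budget $r-1$) as ``a typical recursion problem,'' peeling off saturated points one at a time until the inequalities in \eqref{eq:3} are met; it finishes by plugging the candidate $p_i^{\mathrm{MV}}$ into the objective and checking it attains the recursively derived minimum value, and by separately verifying that an $M$ with $\hbar_{(N-k)}^{\mathrm{MV}}<M\le\hbar_{(N-k+1)}^{\mathrm{MV}}$ exists via a monotonicity/continuity argument in $M$. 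Your Lagrangian treatment buys a cleaner and in some ways more rigorous package: strict convexity plus linear constraints makes KKT necessary and sufficient in one shot, the two-regime structure $p_i\propto\hbar_i^{\mathrm{MV}}\wedge M$ with $M=\sqrt{\lambda}$ falls out of stationarity rather than being guessed, and the two inequalities in \eqref{eq:3} are exposed as exactly primal feasibility ($\hbar_{(N-k)}^{\mathrm{MV}}<M$) and dual feasibility ($\mu_i=(\hbar_i^{\mathrm{MV}})^2-M^2\ge0$) — whereas the paper's ``we can easily see that $p_N=1$'' peeling step is left informal. The paper's approach, in exchange, is more elementary (only Cauchy--Schwarz) and produces the explicit minimum value of the objective as a by-product. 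Two small points you should still tidy up: handle the degenerate case $\hbar_i^{\mathrm{MV}}=0$ separately (the paper sets those $p_i=0$ up front, since stationarity $\lambda+\mu_i=0$ is inconsistent with $\lambda>0$ there), and note explicitly that the defining set for $k$ is nonempty because at $s=r$ the strict inequality reads $0<\sum_{i=1}^{N-r}\hbar_{(i)}^{\mathrm{MV}}$, which is guaranteed by the hypothesis $\hbar_{(N-r)}^{\mathrm{MV}}>0$.
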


\begin{remark} In \eqref{eq:1}, if $r\hbar_{(N)}^{\rm MV}/(\sum_{j=1}^N\hbar_j^{\rm MV})<1$, then $\hbar_{(N)}^{\rm MV}< M=r^{-1}\sum_{j=1}^N\hbar_j^{\rm MV}$ and
  the optimal subsampling probabilities reduce to $p_i^{\rm MV}=r\hbar_i^{\rm MV}/(\sum_{j=1}^N\hbar_j^{\rm MV})$. In this case, all $p_i^{\rm MV}$'s are smaller than one and the inclusion of any data point in the subsample is random. If $r\hbar_i^{\rm MV}/(\sum_{j=1}^N\hbar_j^{\rm MV})\ge1$ for some $i$, then some $p_i^{\rm MV}$'s will be equal to one. For this scenario, $k$ is the number of $p_i^{\rm MV}$'s that are one and $M$ is the threshold that satisfies
   \begin{equation}\label{eq:M}
 \max_{i=1,\ldots,N}\frac{r(\hbar_i^{\rm MV}\wedge M)}{\sum_{j=1}^N(\hbar_j^{\rm MV}\wedge M)}=1.
\end{equation}
From \eqref{eq:2} and \eqref{eq:3}, we see that
\begin{equation}
  \hbar_{(N-k)}^{\rm MV} < M \le \hbar_{(N-k+1)}^{\rm MV}.
\end{equation}
\end{remark}

\begin{remark}
  In order to determine the value of $k$, we need to find and sort at most $r$ largest values of $\hbar_i^{\rm MV}$'s. Thus, the required time to find the value of $k$ is $O(N+r\log r)$ using partition based partial selection algorithm.
  The simulation results reveal that when {\blue$r/N\to c>0$}, it also works well if we select $M$ as some quantile of $\{\hbar_i\}_{i=1}^N$.
\end{remark}

As observed in (\ref{eq:pi-amse-w}), the optimal subsampling probability $\bm{p}^{\mathrm{MV}}=\{p_{i}^{\mathrm{MV}}\}_{i=1}^{N}$ depends on data through both the covariates and the responses directly.
 For the covariates,  the terms $\|\Sigma_\psi(\hat{\bm\beta}_{\rm QLE})^{-1}\bm x_i\|$ describe the structure information of the covariates and they are similar to statistical leverage scores.
  The direct effect of the responses on the optimal subsampling probabilities is through $|y_i-{\psi}(\hat{\bm\beta}_{\rm QLE}^T\bm x_i)|$.
  Intuitively, including data points with lager values of $|y_i-{\psi}(\hat{\bm\beta}_{\rm QLE}^T\bm x_i)|$ will improve the robustness of the subsample estimator.

The optimal subsampling strategy derived in the previous section requires the calculation of $\|\Sigma_\psi(\hat{\bm\beta}_{\rm QLE})^{-1}\bm x_i\|$ for $i=1,2,...,N$, which takes $O(Nd^2)$ time even if $\Sigma_\psi(\hat{\bm\beta}_{\rm QLE})$ is available.
 To further reduce the computation time, \cite{Wang2017Optimal}  proposed to minimize  ${\rm tr}(V_c)$.
 This criterion essentially is the linear optimality (L-optimality) criterion in optimal experimental design \citep[see][]{pukelsheim2006optimal}, which is to improve the quality of the estimator for some linear combinations of unknown parameters.

 The following theorem gives the optimal subsampling probabilities that minimize ${\rm tr}(V_c)$.

\begin{theorem}\label{thm:5}
   Let
  \begin{equation}\label{eq:pi-mvc-w}
    \hbar_i^{\mathrm{MVc}}=
    |y_i-{\psi}(\hat{\bm\beta}_{\rm QLE}^T\bm x_i)|\|\bm x_i\|,\;
    i=1,\ldots,N,
  \end{equation}
   and let $\hbar_{(1)}^{\mathrm{MVc}}\le\hbar_{(2)}^{\mathrm{MVc}} \le\ldots\le \hbar_{(N)}^{\mathrm{MVc}}$ denote the order statistics of $\{\hbar_{i}^{\mathrm{MVc}}\}_{i=1}^N$. For convenience, denote $\hbar_{(N+1)}^{\mathrm{MVc}}=+\infty$ and assume that $\hbar_{(N-r)}^{\mathrm{MVc}}>0$.
  The trace of $V_c$ defined in \eqref{eq:vc} attains its minimum if $p_i$'s in Algorithm \ref{alg:1} are selected as
  \begin{equation}
    p_i^{\rm MVc}=r\frac{\hbar_i^{\rm MVc}\wedge M}{\sum_{j=1}^N\hbar_j^{\rm MVc}\wedge M},
  \end{equation}
   where
  \begin{equation}
    M=(r-k)^{-1}\sum_{i=1}^{N-k}\hbar_{(i)}^{\mathrm{MVc}},
  \end{equation}
and
  \begin{equation}
    k=\min\Bigg\{s\ \Bigg|\ 0\le s \le r, \quad
  (r-s)\hbar_{(N-s)}^{\mathrm{MVc}}< \sum_{i=1}^{N-s}\hbar_{(i)}^{\mathrm{MVc}}\Bigg\},
  \end{equation}
that is,  $k$ satisfies
\[(r-k+1)\hbar_{(N-k+1)}^{\mathrm{MVv}}\ge\sum_{i=1}^{N-k+1}\hbar_{(i)}^{\mathrm{MV}}\quad\text{and}\quad
  (r-k)\hbar_{(N-k)}^{\mathrm{MVc}}<\sum_{i=1}^{N-k}\hbar_{(i)}^{\mathrm{MVc}}.\]
\end{theorem}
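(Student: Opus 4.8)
The plan is to treat Theorem~\ref{thm:5} as the L-optimality analogue of Theorem~\ref{thm:3}, reusing the same constrained-optimization machinery but with a markedly simpler objective-reduction step. First I would substitute the explicit form \eqref{eq:vc} of $V_c$ and take the trace, using $\mathrm{tr}(\bm x_i\bm x_i^T)=\|\bm x_i\|^2$, to obtain
\[
\mathrm{tr}(V_c)=\frac{1}{N^2}\sum_{i=1}^N\frac{\{y_i-\psi(\hat{\bm\beta}_{\rm QLE}^T\bm x_i)\}^2\|\bm x_i\|^2}{p_i}-\frac{1}{N^2}\sum_{i=1}^N\{y_i-\psi(\hat{\bm\beta}_{\rm QLE}^T\bm x_i)\}^2\|\bm x_i\|^2.
\]
Since the second term does not involve the subsampling probabilities, minimizing $\mathrm{tr}(V_c)$ over $\{p_i\}$ is equivalent to minimizing $\sum_{i=1}^N (\hbar_i^{\rm MVc})^2/p_i$, with $\hbar_i^{\rm MVc}$ as in \eqref{eq:pi-mvc-w}, subject to the Poisson-sampling constraints $\sum_{i=1}^N p_i=r$ and $0<p_i\le 1$. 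In contrast with the MV criterion, no spectral manipulation is needed here, because the $\|\bm x_i\|$ appears directly from the trace.

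Second, I would solve this constrained program. The objective $\sum_i (\hbar_i^{\rm MVc})^2/p_i$ is convex on $\{p_i>0\}$ and the feasible region is a convex polytope, so the Karush--Kuhn--Tucker conditions are necessary and sufficient. Since the objective diverges as any $p_i\to 0$, the lower constraints are inactive and only the caps $p_i\le 1$ can bind. Writing stationarity with multiplier $\lambda$ for the budget and $\mu_i\ge 0$ for the caps gives $-(\hbar_i^{\rm MVc})^2/p_i^2+\lambda+\mu_i=0$. For an uncapped index ($\mu_i=0$) this yields $p_i=\hbar_i^{\rm MVc}/\sqrt{\lambda}$, i.e.\ allocation proportional to $\hbar_i^{\rm MVc}$; for a capped index ($p_i=1$), $\mu_i=(\hbar_i^{\rm MVc})^2-\lambda\ge 0$ forces $\hbar_i^{\rm MVc}\ge\sqrt{\lambda}$. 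Setting $M:=\sqrt{\lambda}$, the capped set is exactly $\{i:\hbar_i^{\rm MVc}\ge M\}$ and the optimizer has the water-filling form $p_i=(\hbar_i^{\rm MVc}\wedge M)/M$; the budget constraint then reads $\sum_j(\hbar_j^{\rm MVc}\wedge M)=rM$, which makes this identical to the stated normalized formula $p_i^{\rm MVc}=r(\hbar_i^{\rm MVc}\wedge M)/\sum_j(\hbar_j^{\rm MVc}\wedge M)$.

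Third, I would pin down $M$ and the number $k$ of capped points. Ordering the statistics and letting $k$ be the number of indices with $\hbar_i^{\rm MVc}\ge M$ (necessarily the $k$ largest), the budget equation $k+M^{-1}\sum_{i=1}^{N-k}\hbar_{(i)}^{\rm MVc}=r$ rearranges to $M=(r-k)^{-1}\sum_{i=1}^{N-k}\hbar_{(i)}^{\rm MVc}$, which matches the stated threshold. Consistency of this labelling, namely $\hbar_{(N-k)}^{\rm MVc}<M\le\hbar_{(N-k+1)}^{\rm MVc}$, is exactly the pair of inequalities defining $k$, just as in \eqref{eq:2}--\eqref{eq:3} for the MV case. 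The assumption $\hbar_{(N-r)}^{\rm MVc}>0$ guarantees that the denominator sum is strictly positive, so that $M>0$ and the uncapped allocation is well defined.

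The main obstacle I anticipate is the combinatorial bookkeeping establishing that the active (capped) set is precisely the top-$k$ order statistics and that the index $k$ defined by the minimum is the unique value rendering $M$ self-consistent. Concretely, one must show that the map $s\mapsto (r-s)^{-1}\sum_{i=1}^{N-s}\hbar_{(i)}^{\rm MVc}$, compared against $\hbar_{(N-s)}^{\rm MVc}$ and $\hbar_{(N-s+1)}^{\rm MVc}$, switches the relevant inequality exactly once as $s$ increases, so that both conditions in the definition of $k$ hold simultaneously at a single $s=k$; this requires a monotonicity argument together with care over ties and over the boundary case $k=0$ (no capping, recovering $p_i\propto\hbar_i^{\rm MVc}$). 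Everything else---convexity, the KKT reduction, and the algebra identifying $M$---is routine once the reduction of the first paragraph is in place, and runs in parallel to the proof of Theorem~\ref{thm:3}.
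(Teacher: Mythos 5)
Your proposal is correct, and it reaches the stated water-filling solution by a route that differs from the paper's. The paper proves Theorem~\ref{thm:3} in detail and then asserts that the proof of Theorem~\ref{thm:5} is analogous; its argument is elementary: after the same reduction to minimizing $\sum_i \hbar_i^2/p_i$ subject to $\sum_i p_i=r$ and $0\le p_i\le 1$, it applies the Cauchy--Schwarz inequality to show the unconstrained optimum is $p_i\propto\hbar_i$, and when this violates a cap it sets the largest probability to one and recurses on the remaining $N-1$ points with budget $r-1$, peeling off capped points one at a time until the proportional allocation is feasible; it then verifies directly that the resulting $M$ satisfies $\hbar_{(N-k)}<M\le\hbar_{(N-k+1)}$ by a monotonicity-and-continuity argument in $M$. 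You instead invoke the KKT conditions for the convex program, read off the capped set as $\{i:\hbar_i^{\rm MVc}\ge\sqrt{\lambda}\}$ from complementary slackness, and identify $M=\sqrt{\lambda}$ with the budget equation $\sum_j(\hbar_j^{\rm MVc}\wedge M)=rM$. The KKT route is more systematic: it characterizes the optimizer in one step, makes the active set transparent, and (because the objective is strictly convex on the positive orthant with affine constraints) gives sufficiency for free, whereas the paper's recursion must implicitly argue that greedily capping the largest $\hbar_i$ is optimal at each stage. The paper's route buys elementarity --- only Cauchy--Schwarz is needed --- at the cost of the recursive bookkeeping. You correctly identify the one piece of work common to both approaches that still has to be written out, namely that the index $k$ defined by the minimum is the unique value making $M$ self-consistent with $\hbar_{(N-k)}^{\rm MVc}<M\le\hbar_{(N-k+1)}^{\rm MVc}$ (handling ties and the boundary case $k=0$); this is exactly the content of the last part of the paper's proof, so nothing essential is missing from your plan.
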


The structural results for $\bm{p}^{\mathrm{MVc}}=\{p_{i}^{\mathrm{MVc}}\}_{i=1}^{N}$ and $\bm{p}^{\mathrm{MV}}$ are similar. The difference is in the covariate effect: $\bm{p}^{\mathrm{MV}}$ uses $\|\Sigma_\psi(\hat{\bm\beta}_{\rm QLE})^{-1}\bm x_i\|$ while $\bm{p}^{\mathrm{MVc}}$ uses $\|\bm x_i\|$. {\blue The computational benefits is obvious, only $O(Nd)$ time is required to compute $\bm{p}^{\mathrm{MVc}}$ while $O(Nd^2)$ is needed for $\bm{p}^{\mathrm{MV}}$.  }

\subsection{Practical Implementation}\label{sec:prac}
For ease of presentation, we use a unified notation $p_i^{\rm os}$ to denote the optimal subsampling probabilities $p_i^{\rm MV}$ or $p_i^{\rm MVc}$ derived in Theorems \ref{thm:3} or \ref{thm:5}, respectively.
To be precise,
\begin{equation}\label{eq:p-os}
  p_i^{\rm os}=r\frac{\hbar_i^{\rm os}\wedge M}{\sum_{j=1}^N(\hbar_j^{\rm os}\wedge M)}
    =r\frac{\hbar_i^{\rm os}\wedge M}{N\Psi},\quad i=1,\ldots,N,
\end{equation}
where $M=(r-k)^{-1}\sum_{i=1}^{N-k}\hbar_{(i)}^{\rm os}$, $\Psi=N^{-1}\sum_{j=1}^N(\hbar_j^{\rm os}\wedge M)$, and $\hbar_i^{os}$ is either $\hbar_i^{\rm MV}$ or $\hbar_i^{\rm MVc}$.

To practically implement the optimal subsampling probabilities, we need to  replace  the unknown $\hat{\bm\beta}_{\rm QLE}$ by a pilot estimator, {\blue say $\tilde{\bm\beta}_0$}, which can be obtained by taking a uniform subsample.
Some other sampling distributions can also be used to obtain the pilot estimator as long as they satisfy Assumption \ref{assumptionE} and are computationally feasible to implement.
Furthermore, in order to take advantage of Poisson subsampling and determine the inclusion of each data point separately, we use the pilot sample to approximate $M$ and $\Psi$.

In the setting of subsampling for computational efficiency, it is typical that $r\ll N$ and the number of cases that $\hbar_{i}^{\rm os}>M$ is small. Thus, taking $M=\infty$ will not significantly affect the optimal subsampling probabilities. In facts, if ${r\hbar_{(N)}^{\rm os}}/{(\sum_{j=1}^N\hbar_j^{\rm os})}\le1$, then taking $M=\infty$ does not affect the optimal subsampling probabilities at all. Simulation results in Section~\ref{sec:experiments} show that taking $M=\infty$ does not reduce the estimation efficiency as long as $r/N$ is small.

{\blue Let $\tilde{S}_{r_0}$ be the set of the pilot subsample and
\begin{equation}\label{eq:hatpsi}%
      \hat\Psi
  =\frac{1}{|\tilde{S}_{r_0}|}\sum_{\tilde{S}_{r_0}}
    |y_i-\psi(\tilde{\bm\beta}_0^T\bm x_i)|h(\bm x_i),
\end{equation}
  where $|\tilde{S}_{r_0}|$ is the size of $\tilde{S}_{r_0}$, %
   and $h(\bm x)=\|\bm x\|$ for {\rm MVc} or $h(\bm x)=\|\Sigma_\psi(\tilde{\bm\beta}_0)^{-1}\bm x\|$ for {\rm MV} with $\Sigma_\psi(\tilde{\bm\beta}_0)$ calculated as
    $\Sigma_\psi(\tilde{\bm\beta}_0)={{|\tilde{S}_{r_0}|}}^{-1}\sum_{\tilde{S}_{r_0}}\dot{\psi}(\tilde{\bm\beta}_0^T\bm x_i^*)\bm x_i^*\bm x_i^{*T}.$}
Let $\tilde{p}_i^{\rm os}$ be the approximated subsampling probabilities with $\hat{\bm\beta}_{\rm QLE}$, $M$, and $\Psi$ {\blue in (\ref{eq:p-os}) } replaced by the pilot estimator {\blue $\tilde{\bm\beta}_0$, $M=\infty$, and $\hat\Psi$}.
The  {\bluee weighted} estimator with $\tilde{p}_i^{\rm os}$ inserted in \eqref{eq:reweigt} may be sensitive to data points with $y_i-\dot{\psi}(\tilde{\bm\beta}_{0}^T\bm x_i)\approx0$ if they are included in the subsample.
To make the estimator more stable and robust, we adopt the idea of shrinkage-based subsampling method proposed in \cite{Ma2015A}.
To be specific, we use the following subsampling probabilities
{\red
\begin{equation}\label{eq:opts-prob}
  \tilde{p}_i^{\rm sos}= (1-{\varrho})
  \frac{r|y_i-\psi(\tilde{\bm\beta}_0^T\bm x_i)|h(\bm x_i)}
  {N\hat\Psi}+{\varrho}{rN^{-1}}, \quad i=1, ..., N,
\end{equation}
where $\varrho\in(0,1)$.

Note that when $\tilde{\bm\beta}_0$ and $\hat\Psi$ are calculated from the pilot subsample, $\tilde{p}_i^{\rm sos}$ depends on the $i$-th observation $(\bm x_i,y_i)$ only. Thus, each $\tilde{p}_i^{\rm sos}$ can be calculated when scanning the data from hard drive line-by-line or block-by-block; there is no need to calculate $\tilde{p}_i^{\rm sos}$'s all at once. Therefore, there is not need to load the full data into memory to calculate all $\tilde{p}_i^{\rm sos}$'s and this is very computationally beneficially in terms of memory usage.
}

In \eqref{eq:opts-prob}, $\tilde{\bm{p}}^{\rm sos}=\{\tilde{p}_i^{\rm sos}\}_{i=1}^N$
is a convex combination of $\tilde{\bm{p}}^{\rm os}=\{\tilde{p}_i^{\rm os}\}_{i=1}^N$ and the uniform subsampling probability, and it shares the strengths of both.
When $\varrho$ is larger, the corresponding estimator will be more stable since the estimation equation will not be inflated by data points with extremely small values of $\tilde{p}_i^{\rm os}$.
The rankings of $\tilde{p}_i^{\rm sos}$ and $\tilde{p}_i^{\rm os}$ are the same, so the estimator still enjoys the benefits of the optimal subsampling strategy.
The shrinkage term not only increases small subsampling probabilities, but also shrinks large subsampling probabilities and thus protects the effects of potential outliers to some extent.

Since we approximate $\Psi$ and take $M=\infty$, some $\tilde{p}_i^{\rm sos}$ may be larger than one. Thus, we need to use inverses of $\tilde{p}_i^{\rm sos}\wedge1$'s as weights in the subsample QLE estimator.  %
For transparent presentation, we summarize the practical procedure with approximated quantities in Algorithm~\ref{alg:3}.

 \newpage

\begin{algorithm}[H]
\SetAlgoLined
 {\bf Pilot Subsampling:} Run Algorithm~\ref{alg:1} with average subsample size $r_0$ and {\blue $\bm{p}^{\rm UNIF}=\{p_i:=r_0/N\}_{i=1}^{N}$} to take {\bluee a subsample set $\tilde{S}_{r_0}$, and use it to obtain an estimate  $\tilde{\bm\beta}_0$ and $\hat\Psi$ as in (\ref{eq:hatpsi})}.

 {\bf Initialization:} $S_0=\tilde{S}_{r_0}$\;
 \For{$i=1,\ldots,N$}{
  Generate $\delta_i\sim\text{Bernoulli}(1,p_i)$ with $p_i=\tilde{p}_i^{\rm sos}\wedge 1$, where $\tilde{p}_i^{\rm sos}$ is defined in (\ref{eq:opts-prob})\;
  \eIf {$\delta_i=1$}{ Update $S_{i}=S_{{i}-1}\cup\{(y_{i},\bm {x}_{i},{p}_{i})\}$}{Set $S_{i}=S_{{i}-1}$}
 }
 {\bf Estimation:}\
  Solve the following weighted estimating equation  to obtain the estimate $\breve{\bm\beta}$ based on the subsample set $S_N$. %
  \begin{equation*}{\bluee
      Q^*(\bm\beta)=\sum_{S_N}\frac{1}{{p}_{i}}[y_{i}-\psi({\bm \beta}^{T}\bm x_{i})]\bm x_{i}=0.}
    \end{equation*}
 \caption{Practical Algorithm}\label{alg:3}
\end{algorithm}

For estimators obtained from Algorithm \ref{alg:3}, we derive  asymptotic properties as follows.

\begin{theorem}\label{thm:asy-2step-alg}
Under Assumptions~1 -- 4, if %
$r_0r^{-1/2}\rightarrow0$, then for the estimator $\breve{\bm\beta}$ obtained from Algorithm~\ref{alg:3},  %
  as ${r}\rightarrow\infty$ and $N\rightarrow\infty$, with probability approaching one, for any $\epsilon>0$, there exist   finite $\Delta_\epsilon$ and $r_\epsilon$ such that
  \begin{equation*}
  P(\|{\breve{\bm\beta}}-\hat{\bm\beta}_{\rm QLE}\|\ge {r}^{-1/2}\Delta_\epsilon|\mathcal{F}_N)<\epsilon
\end{equation*}
for all ${r}>r_\epsilon$.
\end{theorem}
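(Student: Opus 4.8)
The plan is to derive the rate by conditioning on the full data $\mathcal F_N$ together with the pilot subsample $\tilde S_{r_0}$, reducing the fresh-draw part of the problem to Theorem~\ref{thm:as-general-alg} and then separately controlling the bias created by retaining the pilot observations in $S_N$. Given the $\sigma$-field generated by $\mathcal F_N$ and $\tilde S_{r_0}$, the pilot quantities $\tilde{\bm\beta}_0$ and $\hat\Psi$ are fixed, so the probabilities $\tilde p_i^{\rm sos}$ in \eqref{eq:opts-prob} are deterministic functions of the data and the second-stage indicators $\delta_i$ are independent Bernoulli$(\tilde p_i^{\rm sos}\wedge1)$ variables. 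Writing $e_i=y_i-\psi(\hat{\bm\beta}_{\rm QLE}^T\bm x_i)$, I would split $Q^*(\hat{\bm\beta}_{\rm QLE})$ into a conditionally mean-zero fluctuation from the newly sampled points and a systematic term from the pilot points, and bound the two separately.

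First I would note that the shrinkage construction makes Assumption~\ref{assumptionE} automatic and, crucially, pilot-free: since $\tilde p_i^{\rm sos}\ge\varrho r/N$, we have $(\tilde p_i^{\rm sos}\wedge1)^{-1}\le N/(\varrho r)$ for every $i$, so $\max_i\{N(\tilde p_i^{\rm sos}\wedge1)\}^{-1}\le(\varrho r)^{-1}$ with no dependence on $\tilde S_{r_0}$. Combined with $\hat\Psi$ being bounded away from $0$ and $\infty$ in probability—an average of nonnegative terms with finite positive mean under Assumption~\ref{assumptionB}—this keeps the conditional expected subsample size at order $r$ and bounds the conditional variance of the normalized fresh-draw component $N^{-1}Q^*$ by $N^{-2}\sum_i(\tilde p_i^{\rm sos})^{-1}e_i^2\bm x_i\bm x_i^T=O_P(r^{-1})$. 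Since $E\{\delta_i/(\tilde p_i^{\rm sos}\wedge1)\mid\mathcal F_N,\tilde S_{r_0}\}=1$ whatever the weights, this component is conditionally unbiased, and the argument behind Theorem~\ref{thm:as-general-alg} shows it is $O_P(Nr^{-1/2})$ in the estimating-equation scale.

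The substantive step, and the one I expect to be the main obstacle, is the pilot-retention bias. A pilot observation sits in $S_N$ with certainty yet enters \eqref{eq:reweigt} with its second-stage weight $(\tilde p_i^{\rm sos}\wedge1)^{-1}$, so that $Q^*(\hat{\bm\beta}_{\rm QLE})$ acquires the nonzero conditional mean $\sum_{i\in\tilde S_{r_0}}(1-\tilde p_i^{\rm sos})(\tilde p_i^{\rm sos})^{-1}e_i\bm x_i$. Using the shrinkage floor $(\tilde p_i^{\rm sos})^{-1}\le N/(\varrho r)$ together with $\sum_{i\in\tilde S_{r_0}}|e_i|\,\|\bm x_i\|=O_P(r_0)$, which follows from Assumption~\ref{assumptionB} and $|\tilde S_{r_0}|=O_P(r_0)$, this term is $O_P(Nr_0/r)$ in the estimating-equation scale. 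After inversion by the conditional Hessian it contributes a shift of order $r_0/r$ to $\breve{\bm\beta}-\hat{\bm\beta}_{\rm QLE}$, and the hypothesis $r_0r^{-1/2}\to0$ is precisely what forces $r_0/r=o(r^{-1/2})$, rendering this bias negligible at the target rate.

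Finally I would assemble the estimate through the standard Taylor argument for estimating equations: expand $Q^*(\breve{\bm\beta})=\bm 0$ about $\hat{\bm\beta}_{\rm QLE}$, show that the conditional Hessian $\dot Q^*(\bar{\bm\beta})$ concentrates uniformly on $\Lambda$ at $-N\Sigma_\psi(\cdot)$ using the Lipschitz bounds of Assumption~\ref{assumptionD}, and invoke Assumption~\ref{assumptionC} to bound $\|\{\dot Q^*(\bar{\bm\beta})/N\}^{-1}\|$. This yields $\|\breve{\bm\beta}-\hat{\bm\beta}_{\rm QLE}\|=O_P(r^{-1/2})+O_P(r_0/r)=O_P(r^{-1/2})$ conditionally on $\mathcal F_N$, which is the assertion. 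The delicate point to watch while writing this out is that every $O_P$ bound must survive integrating out the pilot randomness; the pilot-free shrinkage floor is what guarantees this, since it controls both the conditional variance and the Hessian by quantities that do not depend on $\tilde S_{r_0}$.
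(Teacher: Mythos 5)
Your proposal is correct and follows essentially the same route as the paper: the shrinkage floor $\tilde p_i^{\rm sos}\ge\varrho r/N$ verifies Assumption~\ref{assumptionE}, so the argument of Theorem~\ref{thm:as-general-alg} applies to the second-stage draw conditionally on the pilot, and the hypothesis $r_0r^{-1/2}\rightarrow0$ makes the pilot sample's contribution to the estimating equation $o_{P\mid\mathcal{F}_N}(r^{-1/2})$. You actually supply more detail than the paper, which disposes of the pilot term in a single sentence; your explicit $O_P(Nr_0/r)$ bound on the pilot-retention bias is precisely the calculation that justifies that sentence and shows why the condition $r_0=o(r^{1/2})$ is the right one.
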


\begin{theorem}\label{thm:CLT-2step}
 If Assumptions 1 -- 4 hold and $r_0r^{-1/2}\rightarrow0$, then as $r_0\rightarrow\infty$, $r\rightarrow\infty$ and $N\rightarrow\infty$,  conditionally on $\mathcal{F}_N$ in probability,
  \begin{equation*}
    V^{-1/2}(\breve{\bm\beta}-\hat{\bm\beta}_{\rm QLE}) \rightarrow N(0,I) \text{ in distribution, }
  \end{equation*}
 where $V=\Sigma_\psi(\hat{\bm\beta}_{\rm QLE})^{-1}V_c\Sigma_\psi(\hat{\bm\beta}_{\rm QLE})^{-1}$  and
  \begin{equation*}
    V_c=\frac{1}{N^2}
    \sum_{i=1}^N \frac{\{1-(p_i^{\mathrm{sos}}\wedge 1)\}\{y_i-{\psi}(\hat{\bm\beta}_{\rm QLE}^T\bm x_i)\}^2\bm x_i\bm x_i^T}{p_i^{\mathrm{sos}}\wedge 1},
  \end{equation*}
  { with
  $$p_i^{\mathrm{sos}}:=(1-{\varrho})\frac{r|y_i-{\psi}(\hat{\bm\beta}_{\rm QLE}^T\bm x_i)|\|\Sigma_\psi(\hat{\bm\beta}_{\rm QLE})^{-1}\bm x_i\|}{\sum_{j=1}^N|y_j-{\psi}(\hat{\bm\beta}_{\rm QLE}^T\bm x_j)|\|\Sigma_\psi(\hat{\bm\beta}_{\rm QLE})^{-1}\bm x_j\|}+{\varrho}{\frac{r}{N}},$$
  for $MV$ criterion and
  $$p_i^{\mathrm{sos}}:=(1-{\varrho})\frac{r|y_i-{\psi}(\hat{\bm\beta}_{\rm QLE}^T\bm x_i)|\|\bm x_i\|}{\sum_{j=1}^N|y_j-{\psi}(\hat{\bm\beta}_{\rm QLE}^T\bm x_j)|\|\bm x_j\|}+{\varrho}{\frac{r}{N}},$$
  for $MVc$ criterion. }
\end{theorem}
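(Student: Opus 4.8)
\emph{Plan.} The idea is to treat Algorithm~\ref{alg:3} as a perturbation of the idealized Poisson scheme of Theorem~\ref{thm:CLT}: I would condition on the full data $\mathcal{F}_N$ together with the pilot subsample $\tilde{S}_{r_0}$ (equivalently, on $\tilde{\bm\beta}_0$, $\hat\Psi$, and the induced probabilities $\tilde{p}_i^{\rm sos}$), prove a conditional central limit theorem for the second-stage estimating equation, and then show that (i) the estimated probabilities may be replaced by the true $p_i^{\rm sos}$ and (ii) the observations carried over from the pilot via $S_0=\tilde{S}_{r_0}$ contribute negligibly, so that the limit coincides with the one obtained from deterministic probabilities. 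Since Theorem~\ref{thm:asy-2step-alg} already gives $\breve{\bm\beta}-\hat{\bm\beta}_{\rm QLE}=O_P(r^{-1/2})$, I can expand $Q^*(\breve{\bm\beta})=\bm0$ about $\hat{\bm\beta}_{\rm QLE}$ and write $\breve{\bm\beta}-\hat{\bm\beta}_{\rm QLE}=M_*^{-1}Q^*(\hat{\bm\beta}_{\rm QLE})+R$, where $M_*=-\partial Q^*/\partial\bm\beta$ at an intermediate point and $R$ is a quadratic remainder.

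\emph{Hessian and score.} For the slope matrix I would show $N^{-1}M_*\to\Sigma_\psi(\hat{\bm\beta}_{\rm QLE})$ in conditional probability: inverse-probability weighting makes its conditional mean exactly $N^{-1}\sum_{i=1}^N\dot\psi(\hat{\bm\beta}_{\rm QLE}^T\bm x_i)\bm x_i\bm x_i^T=\Sigma_\psi(\hat{\bm\beta}_{\rm QLE})$, a conditional weak law controls the fluctuation, and the $m_2$-Lipschitz bound of Assumption~\ref{assumptionD} transfers the limit from the intermediate point to $\hat{\bm\beta}_{\rm QLE}$. The leading term $Q^*(\hat{\bm\beta}_{\rm QLE})$ is, conditionally on $\mathcal{F}_N$ and the pilot, a sum of independent Bernoulli-weighted summands; it is exactly centered since $\sum_{i=1}^N\{y_i-\psi(\hat{\bm\beta}_{\rm QLE}^T\bm x_i)\}\bm x_i=Q(\hat{\bm\beta}_{\rm QLE})=\bm0$, and its conditional covariance equals $N^2\tilde{V}_c$, i.e.\ the displayed $V_c$ with $p_i^{\rm sos}\wedge1$ replaced by $\tilde{p}_i^{\rm sos}\wedge1$. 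A conditional Lindeberg--Feller theorem then yields asymptotic normality; its Lindeberg/Lyapunov condition is checked using the moment bounds of Assumption~\ref{assumptionB} and the crucial lower bound $\tilde{p}_i^{\rm sos}\ge\varrho r/N$ furnished by the shrinkage term, which keeps the inverse weights of order $N/r$.

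\emph{Main obstacle.} Two steps remain, and the second is the crux. First, $\tilde{V}_c\to V_c$: using $\tilde{\bm\beta}_0-\hat{\bm\beta}_{\rm QLE}=O_P(r_0^{-1/2})$, $\hat\Psi-\Psi=o_P(1)$, and the uniform lower bound on $\tilde{p}_i^{\rm sos}$, one obtains $\tilde{V}_c-V_c=O_P(r_0^{-1/2})V_c=o_P(V_c)$, for which only $r_0\to\infty$ is needed. Second, the observations forced into $S_N$ by the initialization must be shown not to inflate the limit: they enter the score through a term whose order relative to the $O_P(Nr^{-1/2})$ size of the main score is $r_0r^{-1/2}$, and the hypothesis $r_0r^{-1/2}\to0$ is exactly what renders it $o_P(Nr^{-1/2})$, hence asymptotically irrelevant. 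I expect this decoupling of the pilot-induced randomness from the genuine second-stage randomness — so that the limiting variance is the optimal $V_c$ rather than an inflated mixture — to be the main difficulty, and it is here that both the shrinkage lower bound and the upper bound $r_0=o(r^{1/2})$ are used simultaneously.

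\emph{Conclusion.} Collecting the pieces, $N^{-1}M_*\to\Sigma_\psi(\hat{\bm\beta}_{\rm QLE})$, the rescaled score is conditionally asymptotically $N(\bm0,N^2V_c)$, the remainder $R$ is $O_P(r^{-1})=o_P(r^{-1/2})$, and the pilot term is $o_P(r^{-1/2})$. Slutsky's theorem together with the sandwich form $V=\Sigma_\psi(\hat{\bm\beta}_{\rm QLE})^{-1}V_c\Sigma_\psi(\hat{\bm\beta}_{\rm QLE})^{-1}$ then gives $V^{-1/2}(\breve{\bm\beta}-\hat{\bm\beta}_{\rm QLE})\to N(\bm0,I)$ conditionally on $\mathcal{F}_N$ in probability; since the Gaussian limit is free of the conditioning, the stated convergence follows by a standard subsequence argument.
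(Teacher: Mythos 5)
Your proposal is correct and follows essentially the same route as the paper's proof: condition on the pilot data, apply the Lindeberg--Feller argument of Lemma~S.1 to the second-stage score with the estimated probabilities $\tilde p_i^{\rm sos}\wedge 1$, transfer the Hessian limit via the $m_2$-Lipschitz bound, replace $\tilde V_c$ by $V_c$ using the consistency of $\tilde{\bm\beta}_0$ together with the shrinkage lower bound $\tilde p_i^{\rm sos}\ge\varrho r/N$, discard the pilot contribution via $r_0r^{-1/2}\to0$, and finish with Slutsky. The only difference is one of emphasis: the paper dismisses the pilot term in a single sentence and devotes nearly all of its effort to establishing $\|\tilde V_c-V_c\|_s=o_P(r^{-1})$ through a term-by-term bound on the ratios $(p_i^{\rm sos}\wedge1)/(\tilde p_i^{\rm sos}\wedge1)$ using H\"older's inequality and the Lipschitz conditions, which your sketch compresses into the single (plausible, but unproved) claim $\tilde V_c-V_c=O_P(r_0^{-1/2})V_c$.
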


\section{Distributed Poisson Subsampling}\label{sec:dist-sample}

In this section, we discuss the distributed optimal Poisson subsampling procedure.
For large data sets, it is common to analyze them on multiple machines. %
This motivates us to develop divide-and-conquer subsampling procedures that take advantages of parallel and distributed computational architectures.
Although {\blue Poisson} subsampling can be easily implemented in parallel, pooling the subsample sets from multiple machines together may still result in a subsample set that exceeds the memory limit of a single machine. In addition, transferring data may be time consuming and subject to security issues.
Thus this method can only be used when the subsample size on each machine is not that big.
We propose to aggregate estimators derived in different machines to approximate the full data quasi-likelihood estimator.
Here we assume that the entire data set of size $N$ are stored in $K$ different machines, and let $\mathcal{F}_{Nj}$ ($j=1,...,K$) denote the data stored in the $j$-th machine. For simplicity, assume that the number of observations in different machines are all equal to $n$, and denote the observations in $\mathcal{F}_{Nj}$ as  $\{(y_{ji},\bm x_{ji})\}_{i=1}^n$.
We present the distributed optimal Poisson subsampling procedure in Algorithm \ref{alg:4}.

\begin{footnotesize}
\begin{algorithm}
\SetAlgoLined
 {\bf Step 1: Obtain the pilot estimator}

   \For{ $i=1,\ldots,N$}{
    Generate $\delta_{i}\sim\text{Bernoulli}(1,p_{i})$ with $p_{i}=r_0/N$\;
     \If{$\delta_i=1$}{ Add $(x_i,y_i,p_{i})$ to the subsample set $S_{r_0}$}{}}
For the obtained subsample $S_{r_0}$, calculate the pilot estimator $\tilde{\bm\beta}_0$, {\bluee $\hat\Psi$,} and $\dot{Q}_0$ according to (\ref{eq:reweigt}), (\ref{eq:hatpsi}) and \eqref{eq:dotq}, respectively.\

  {\bf Step 2: Subsampling and Compression}

\ForEach{ $\mathcal{F}_{Nj}, j= 1,\ldots,K$}{

{\bf Initialization:} $S_{j0}=\varnothing$\;
\For {$i=1,\ldots,n$} {
    Calculate the corresponding subsampling probabilities $\tilde{{p}}^{\rm sos}_{ji}$ {\bluee according to (\ref{eq:opts-prob})}\;
    Generate $\delta_{ji}\sim\text{Bernoulli}(1,p_{ji})$ with $p_{ji}=\tilde{p}^{\rm sos}_{ji}\wedge 1$\;
    \eIf{$\delta_{ji}=1$}{Update $S_{ji}=S_{{ji}-1}\cup\{(y_{ji},\bm {x}_{ji},p_{ji})\}$.} {Set $S_{ji}=S_{{ji}-1}$.}
    }

Obtain ${\tilde{\bm\beta}_j}$ by solving
\begin{equation}\label{eq:reweigt1}
   {\bluee    Q_j^*(\bm\beta)=\frac{1}{n}\sum_{ S_{jn}}\frac{1}{{p}_{ji}}\{y_{ji}-\psi({\bm \beta}^{T}\bm x_{ji})\}\bm x_{ji}={\bm 0},}
\end{equation}
and calculate
\begin{equation}\label{eq:dotq}
{\bluee \dot{Q}_j^*({\tilde{\bm\beta}_j})=-\frac{1}{n}\sum_{ S_{jn}}\frac{1}{p_{ji}}\dot\psi(\tilde{\bm\beta}_j^T\bm x_{ji})\bm x_{ji}{\bm x_{ji}}^{T}.}
\end{equation}
}

{\bf Step 3: Combination}

Combine the $K$ estimators and the pilot estimator by calculating
\begin{equation}\label{eq:crbeta}
  \tilde{\bm\beta}_{Kr}=\left\{\sum_{j=0}^K\dot{Q}_j^*(\tilde{\bm\beta}_j)\right\}^{-1}\sum_{j=0}^K\dot{Q}_j^*(\tilde{\bm\beta}_j){\tilde{\bm\beta}}_j.
\end{equation}
\caption{Distributed Optimal Poisson Subsampling}
  \label{alg:4}
\end{algorithm}
\end{footnotesize}
\begin{remark}
The first step in Algorithm \ref{alg:4} can be implemented by sampling the data machine-by-machine and pooling all the subsamples together. Since $r_0$ is usually small in our setting, the time of communication can be ignored.
\end{remark}

The results of consistency and asymptotic normality are presented in the following theorems.

\begin{theorem}\label{thm:asy-2step-alg4}
Under Assumptions~1 -- 4, if the estimator $\tilde{\bm\beta}_0$ based on the first step sample exists, $r_0{(Kr)}^{-1/2}\rightarrow0$ and the partition number $K$ satisfies $K=O(r^\eta)$ for some $\eta$ in $[0,1/3]$, then conditional on $\mathcal{F}_N$, for the estimator $\tilde{\bm\beta}_{Kr}$ obtained from Algorithm~\ref{alg:4}, %
  as ${r}\rightarrow\infty$ and $n\rightarrow\infty$, with probability approaching one, for any $\epsilon>0$, there exist finite $\Delta_\epsilon$ and $r_\epsilon$ such that
  \begin{equation*}
  P(\|\tilde{\bm\beta}_{Kr}-\hat{\bm\beta}_{\rm QLE}\|\ge {(Kr)}^{-1/2}\Delta_\epsilon|\mathcal{F}_N)<\epsilon
\end{equation*}
for all ${r}>r_\epsilon$.

\end{theorem}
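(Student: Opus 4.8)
The plan is to linearize each machine's weighted estimating equation around the full-data estimator $\hat{\bm\beta}_{\rm QLE}$ and then dissect the precision-weighted combination \eqref{eq:crbeta} term by term, all conditionally on $\mathcal F_N$. For every block $j\ge1$ the local estimator obeys $Q_j^*(\tilde{\bm\beta}_j)={\bm 0}$, so an exact integral mean-value expansion gives
\begin{equation*}
\tilde{\bm\beta}_j-\hat{\bm\beta}_{\rm QLE}=-\tilde D_j^{-1}Q_j^*(\hat{\bm\beta}_{\rm QLE}),\qquad
\tilde D_j=\int_0^1\dot Q_j^*\{\hat{\bm\beta}_{\rm QLE}+t(\tilde{\bm\beta}_j-\hat{\bm\beta}_{\rm QLE})\}\,dt.
\end{equation*}
Writing $\dot Q_j^*(\tilde{\bm\beta}_j)\tilde D_j^{-1}=I+R_j$ and substituting into \eqref{eq:crbeta}, I would reach
\begin{equation*}
\tilde{\bm\beta}_{Kr}-\hat{\bm\beta}_{\rm QLE}
=-\Big\{\sum_{j=0}^K\dot Q_j^*(\tilde{\bm\beta}_j)\Big\}^{-1}
\Big\{\sum_{j=0}^K Q_j^*(\hat{\bm\beta}_{\rm QLE})+\sum_{j=0}^K R_jQ_j^*(\hat{\bm\beta}_{\rm QLE})\Big\},
\end{equation*}
so the task splits into controlling the normalizing matrix, the aggregated score, and the curvature remainder.

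The decisive step is the aggregated score $\sum_{j=1}^K Q_j^*(\hat{\bm\beta}_{\rm QLE})$. Given $\mathcal F_N$ the machines sample independently, and the conditional mean of the $j$-th block is $n^{-1}\sum_{i=1}^n\{y_{ji}-\psi(\hat{\bm\beta}_{\rm QLE}^T\bm x_{ji})\}\bm x_{ji}$; summing over $j$ rebuilds $n^{-1}Q(\hat{\bm\beta}_{\rm QLE})={\bm 0}$ by \eqref{eq:likelihood}, so the conditional means cancel exactly. The aggregated score is thus a sum of $K$ independent, conditionally centered vectors. Because each block is a Poisson design with $p_{ji}^{-1}=O_P(n/r)$ (the local form of Assumption~\ref{assumptionE}), the conditional variance $n^{-2}\sum_{i=1}^n(1-p_{ji})p_{ji}^{-1}\{y_{ji}-\psi(\hat{\bm\beta}_{\rm QLE}^T\bm x_{ji})\}^2\bm x_{ji}\bm x_{ji}^T$ of each block is $O_P(r^{-1})$ under the moments of Assumption~\ref{assumptionB}; the total is $O_P(K/r)$, and a conditional Chebyshev bound yields $\sum_{j=1}^K Q_j^*(\hat{\bm\beta}_{\rm QLE})=O_P\{(K/r)^{1/2}\}$.

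For the normalizing matrix, Theorem~\ref{thm:asy-2step-alg} applied on machine $j$ gives $\|\tilde{\bm\beta}_j-\hat{\bm\beta}_{\rm QLE}\|=O_P(r^{-1/2})$ (the machine-$j$ full-data estimator lies within $O_P(n^{-1/2})$ of $\hat{\bm\beta}_{\rm QLE}$, and $r\le n$), whence a law of large numbers over the subsample gives $\sum_{j=1}^K\dot Q_j^*(\tilde{\bm\beta}_j)=-K\Sigma_\psi(\hat{\bm\beta}_{\rm QLE})$ up to an $o_P(1)$ relative error, invertible with inverse of spectral norm $O_P(K^{-1})$ by Assumption~\ref{assumptionC}. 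For the remainder, Assumption~\ref{assumptionD} bounds $\|R_j\|$ by a constant multiple of $\|\tilde{\bm\beta}_j-\hat{\bm\beta}_{\rm QLE}\|=O_P(r^{-1/2})$, so $\|R_jQ_j^*(\hat{\bm\beta}_{\rm QLE})\|=O_P(r^{-1})$ and the remainder sum is $O_P(K/r)$. The pilot block $j=0$ adds only $O_P(1)$ to the order-$K$ normalizer and $O_P(r_0^{-1/2})$ to the numerator, and the error from plugging $\tilde{\bm\beta}_0$ into the common weights $p_{ji}$ is absorbed at order $o_P\{(Kr)^{-1/2}\}$ under $r_0(Kr)^{-1/2}\to0$, as in Theorem~\ref{thm:asy-2step-alg}. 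Assembling,
\begin{equation*}
\|\tilde{\bm\beta}_{Kr}-\hat{\bm\beta}_{\rm QLE}\|
=O_P(K^{-1})\big[O_P\{(K/r)^{1/2}\}+O_P(K/r)\big]
=O_P\{(Kr)^{-1/2}\}+O_P(r^{-1}),
\end{equation*}
and $K=O(r^{1/3})$ forces $r^{-1}=o\{(Kr)^{-1/2}\}$, so the $(Kr)^{-1/2}$ term dominates; recasting this conditional $O_P$ bound into the stated $(\Delta_\epsilon,r_\epsilon)$ form is routine.

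The hard part will be the remainder $\sum_j R_jQ_j^*(\hat{\bm\beta}_{\rm QLE})$ together with the common pilot error, neither of which enjoys the cross-machine cancellation of the leading score: the curvature terms carry a systematic conditional mean of order $r^{-1}$ that accumulates linearly in $K$, and the plug-in perturbation of the weights is shared across blocks. The ceiling $K=O(r^\eta)$ with $\eta\le1/3$ is exactly what keeps these non-averaging errors below the parametric rate $(Kr)^{-1/2}$ while keeping $\sum_j\dot Q_j^*$ uniformly invertible after the $K^{-1}$ scaling. These bounds must moreover hold uniformly over the $K$ blocks, which I would obtain by a union bound over $j$ fed by the finite moments of Assumption~\ref{assumptionB}, keeping every expectation conditional so that Assumption~\ref{assumptionE} enters only in its $O_P$ form.
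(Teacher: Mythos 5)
Your route is genuinely different from the paper's. You linearize each local estimating equation around $\hat{\bm\beta}_{\rm QLE}$ and expand the precision-weighted average \eqref{eq:crbeta} directly into a normalizer, an aggregated score, and a curvature remainder. The paper instead introduces the pooled-subsample QLE $\breve{\bm\beta}$ (the solution of $\sum_j Q_j^*(\bm\beta)=\bm 0$) as an intermediate target: Lemma~S.4 shows $\sqrt{Kr}\,\|\tilde{\bm\beta}_{Kr}-\breve{\bm\beta}\|=o_P(1)$ by Taylor-expanding each $Q_j^*$ at $\tilde{\bm\beta}_j$ and bounding the quadratic remainders, and then Theorem~\ref{thm:asy-2step-alg} applied to the pooled subsample gives $\breve{\bm\beta}-\hat{\bm\beta}_{\rm QLE}=O_{P|\mathcal{F}_N}\{(Kr)^{-1/2}\}$. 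Both architectures are standard for one-shot averaging, and your treatment of the leading score term (conditional independence across blocks, cancellation of the conditional means after summing, per-block variance $O_P(r^{-1})$, Chebyshev) is essentially the same calculation the paper performs for the pooled score. One small imprecision: the individual $Q_j^*(\hat{\bm\beta}_{\rm QLE})$ are \emph{not} conditionally centered — each has conditional mean equal to block $j$'s full-data score — only their sum is; the variance additivity you use still holds by conditional independence, but the statement should be corrected.

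The genuine gap is in the remainder accounting, and it matters because your bookkeeping cannot produce the theorem's hypothesis $\eta\le 1/3$. You conclude $\|\tilde{\bm\beta}_{Kr}-\hat{\bm\beta}_{\rm QLE}\|=O_P\{(Kr)^{-1/2}\}+O_P(r^{-1})$ and invoke $K=O(r^{1/3})$ to kill the $r^{-1}$ term, but $r^{-1}=o\{(Kr)^{-1/2}\}$ only requires $K=o(r)$ — so as written your argument proves the theorem under a far weaker condition, which signals that a cost has been omitted. The omitted cost is exactly the uniformity over the $K$ blocks that you defer to a ``union bound'' in your last paragraph. Under Assumptions~\ref{assumptionB}--\ref{assumptionD} only Chebyshev-type tails are available, so the per-block events you need — $\tilde D_j$ (equivalently the local subsampled Hessian) bounded below, $\|\tilde{\bm\beta}_j-\hat{\bm\beta}_{\rm QLE}\|\le \Delta r^{-\alpha}$, and the Lipschitz average $n^{-1}\sum_i\delta_{ji}p_{ji}^{-1}m_2(\bm x_{ji})$ controlled — each fail with probability of order $r^{2\alpha-1}$ (the paper's Lemma~S.3), not exponentially small. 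A union bound over $K$ blocks therefore costs $Kr^{2\alpha-1}$, which forces $\eta<1-2\alpha$ and in particular forbids taking $\alpha=1/2$; with $\alpha<1/2$ the curvature remainder degrades from $O(r^{-1})$ to $O(r^{-2\alpha})$ per block, and requiring $\sqrt{Kr}\,r^{-2\alpha}\to 0$ forces $\eta<4\alpha-1$. Optimizing $\min(1-2\alpha,\,4\alpha-1)$ at $\alpha=1/3$ is precisely where the ceiling $\eta\le 1/3$ comes from in the paper's Lemma~S.4. Alternatively you could try to avoid the union bound by bounding $E\|R_jQ_j^*(\hat{\bm\beta}_{\rm QLE})\|$ directly and summing first moments, but $R_j$ contains $\tilde D_j^{-1}$, whose moments are not controlled without first restricting to the good event — which puts you back on the same trade-off. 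To complete your proof you must make this $\alpha$-versus-$\eta$ trade-off explicit; without it the argument does not explain, and does not actually use, the stated restriction on $K$.
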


\begin{theorem}\label{thm:CLT-2step-alg4}
Under Assumptions 1 -- 4,  if  $r_0{(Kr)}^{-1/2}\rightarrow0$ and the partition number $K$ satisfies $K=O(r^\eta)$ for some $\eta$ in $[0,1/3]$, then  for the estimator $\tilde{\bm\beta}_{Kr}$ obtained from Algorithm~\ref{alg:4}, conditionally on $\mathcal{F}_N$ in probability, as $n\rightarrow\infty$, $r\rightarrow\infty$ and $r_0\rightarrow\infty$,
  \begin{equation*}
    V_{opt}^{-1/2}(\tilde{\bm\beta}_{Kr}-\hat{\bm\beta}_{\rm QLE}) \rightarrow N(0,I) \quad\text{ in distribution, }
  \end{equation*}
 where $V_{opt}=\Sigma_\psi(\hat{\bm\beta}_{\rm QLE})^{-1}V_{c,opt}\Sigma_\psi(\hat{\bm\beta}_{\rm QLE})^{-1}$,
 \begin{equation*}
   V_{c,opt}=\frac{1}{KN^2}\sum_{i=1}^N \frac{\{1-(p_{i}^{\rm sos}\wedge 1)\}\{y_i-{\psi}(\hat{\bm\beta}_{\rm QLE}^T\bm x_i)\}^2\bm x_i\bm x_i^T}{ p_{i}^{\rm sos}\wedge 1},
  \end{equation*}
  and
  $p_i^{\rm sos}$ is defined in Theorem \ref{thm:CLT-2step}.
\end{theorem}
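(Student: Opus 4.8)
The plan is to establish the central limit theorem for the aggregated estimator $\tilde{\bm\beta}_{Kr}$ by reducing it to the single-machine CLT of Theorem~\ref{thm:CLT-2step} applied on each block, and then combining the $K+1$ block-level estimators through a weighted average. First I would analyze the combination formula~\eqref{eq:crbeta}. Writing $\dot Q_j^* := \dot Q_j^*(\tilde{\bm\beta}_j)$, the aggregated estimator is $\tilde{\bm\beta}_{Kr} = \{\sum_{j=0}^K \dot Q_j^*\}^{-1} \sum_{j=0}^K \dot Q_j^* \tilde{\bm\beta}_j$, which is the generalized-least-squares-type pooling of the individual estimators with weights given by their (negative) subsample Hessians. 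Subtracting $\hat{\bm\beta}_{\rm QLE}$ and using $\sum_j \dot Q_j^* \hat{\bm\beta}_{\rm QLE} = \{\sum_j \dot Q_j^*\}\hat{\bm\beta}_{\rm QLE}$, I would obtain the exact identity
\begin{equation*}
  \tilde{\bm\beta}_{Kr} - \hat{\bm\beta}_{\rm QLE}
  = \Big\{\textstyle\sum_{j=0}^K \dot Q_j^*\Big\}^{-1}
    \sum_{j=0}^K \dot Q_j^* (\tilde{\bm\beta}_j - \hat{\bm\beta}_{\rm QLE}).
\end{equation*}

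The second step would be to Taylor-expand each block estimating equation $Q_j^*(\tilde{\bm\beta}_j)=\bm 0$ around $\hat{\bm\beta}_{\rm QLE}$, exactly as in the proofs of Theorems~\ref{thm:as-general-alg} and~\ref{thm:CLT-2step}, to get the Bahadur-type representation $\tilde{\bm\beta}_j - \hat{\bm\beta}_{\rm QLE} = \{\dot Q_j^*\}^{-1} Q_j^*(\hat{\bm\beta}_{\rm QLE}) + R_j$, where the leading term is a normalized sum of independent mean-zero contributions (conditional on $\mathcal{F}_N$) and $R_j$ is a higher-order remainder of order $o_P(n^{-1/2})$ uniformly over blocks. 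Substituting into the identity above, the weights $\dot Q_j^*$ cancel against the $\{\dot Q_j^*\}^{-1}$ in each Bahadur expansion, so the dominant part of $\tilde{\bm\beta}_{Kr}-\hat{\bm\beta}_{\rm QLE}$ becomes $\{\sum_{j=0}^K \dot Q_j^*\}^{-1} \sum_{j=0}^K Q_j^*(\hat{\bm\beta}_{\rm QLE})$. Since $\dot Q_j^*(\tilde{\bm\beta}_j) \to \Sigma_\psi(\hat{\bm\beta}_{\rm QLE})$ in conditional probability for every $j$ (by Assumption~\ref{assumptionC} and the consistency already proved in Theorem~\ref{thm:asy-2step-alg4}), the averaged weight matrix $(K+1)^{-1}\sum_{j=0}^K \dot Q_j^*$ also converges to $\Sigma_\psi(\hat{\bm\beta}_{\rm QLE})$. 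Thus the pooled estimator behaves asymptotically like $\Sigma_\psi(\hat{\bm\beta}_{\rm QLE})^{-1}$ times the pooled score $\sum_{j=0}^K Q_j^*(\hat{\bm\beta}_{\rm QLE})$, which is precisely the structure that produces the sandwich variance $V_{opt}$ with the $K$-fold averaging reflected in $V_{c,opt}$.

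The third step would be to prove asymptotic normality of the pooled score $\sum_{j=0}^K Q_j^*(\hat{\bm\beta}_{\rm QLE})$. Conditional on $\mathcal{F}_N$, the blocks are sampled independently, so this is a sum of $K+1$ independent conditionally-mean-zero terms, each itself a Poisson-sampled average over $n$ observations. I would verify a Lindeberg-type condition for a double array (indexed jointly by the block $j$ and the growing subsample sizes), using the moment bounds of Assumption~\ref{assumptionB} to control the conditional Lyapunov ratios, and compute that the conditional covariance of the pooled score converges to $V_{c,opt}$. The role of the scaling constraints $r_0(Kr)^{-1/2}\to 0$ and $K=O(r^\eta)$ with $\eta\in[0,1/3]$ is twofold here: the first ensures the pilot-estimation error (from plugging $\tilde{\bm\beta}_0$ and $\hat\Psi$ into $\tilde p_{ji}^{\rm sos}$) is negligible relative to the $(Kr)^{-1/2}$ target rate, matching the single-machine argument of Theorem~\ref{thm:CLT-2step}; the second controls the accumulation of the $K$ remainder terms $\sum_{j} R_j$, whose aggregate must be $o_P((Kr)^{-1/2})$—since each $R_j$ is of quadratic order $O_P(r^{-1})$, summing $K$ of them and rescaling by $(Kr)^{1/2}$ gives a term of order $K^{1/2}r^{-1/2}=O(r^{(\eta-1)/2})\to 0$ exactly when $\eta<1$, and the sharper $\eta\le 1/3$ threshold accommodates the higher-order cross terms involving the pilot and the Lipschitz bounds of Assumption~\ref{assumptionD}.

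I expect the main obstacle to be controlling the cumulative remainder $\sum_{j=0}^K R_j$ uniformly across the growing number of blocks: each block-level Bahadur remainder is negligible on its own, but because $K\to\infty$ the errors can compound, and one must show the quadratic-in-$(\tilde{\bm\beta}_j-\hat{\bm\beta}_{\rm QLE})$ terms—together with the error in replacing $\dot Q_j^*(\tilde{\bm\beta}_j)$ by $\Sigma_\psi(\hat{\bm\beta}_{\rm QLE})$ in the weights—do not aggregate to something larger than the $(Kr)^{-1/2}$ signal. This is precisely where the rate restriction $K=O(r^{1/3})$ becomes essential, and the delicate bookkeeping lies in combining the Lipschitz control of Assumption~\ref{assumptionD} with the moment bounds of Assumption~\ref{assumptionB} to bound the remainders uniformly in $j$ while keeping track of the $K$-dependence in the final rescaling.
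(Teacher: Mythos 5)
Your overall plan is sound and reaches the right leading term, but it takes a genuinely different route from the paper. The paper does not prove a fresh CLT for the pooled score. Instead it introduces an intermediate object $\breve{\bm\beta}$, the QLE computed from the \emph{pooled} subsamples of all $K$ machines, and splits the problem into (i) Lemma~S.4: $\sqrt{Kr}\,\|\tilde{\bm\beta}_{Kr}-\breve{\bm\beta}\|=o_P(1)$, obtained by Taylor-expanding each block equation $Q_j^*(\breve{\bm\beta})$ around $\tilde{\bm\beta}_j$ (where $Q_j^*(\tilde{\bm\beta}_j)=\bm 0$ exactly), so that $\breve{\bm\beta}-\tilde{\bm\beta}_{Kr}=\{\sum_j\dot Q_j^*(\tilde{\bm\beta}_j)\}^{-1}\sum_j R_j^*$ is an exact identity in the remainders; and (ii) an application of Theorem~\ref{thm:CLT-2step} to $\breve{\bm\beta}$ plus Slutsky, which recycles the Lindeberg verification of Lemma~S.1 rather than redoing it for a double array as you propose. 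Your mirror-image expansion of each block around $\hat{\bm\beta}_{\rm QLE}$ and direct pooled-score CLT would also work, but costs you a second Lindeberg argument that the paper's factorization avoids.

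The one place where your proposal is materially weaker than the paper is the origin of the threshold $\eta\le 1/3$. You attribute it to ``higher-order cross terms,'' and your headline rate computation ($K^{1/2}r^{-1/2}\to 0$ whenever $\eta<1$) would suggest the restriction is far from tight. In the paper the $1/3$ arises from a concrete trade-off: Lemma~S.3 gives only a \emph{polynomial} tail bound $\pr(r^{\alpha}\|\tilde{\bm\beta}_j-\bm\beta_t\|>\Delta)=O(r^{2\alpha-1})$ for $\alpha\in(1/4,1/2)$, so the union bound over $K$ machines needed to make the remainder bound hold simultaneously on all blocks costs $K\cdot O(r^{2\alpha-1})$, forcing $\eta<1-2\alpha$; meanwhile the remainder itself satisfies $\sqrt{Kr}\,\|\breve{\bm\beta}-\tilde{\bm\beta}_{Kr}\|=O(r^{(1+\eta-4\alpha)/2})$, forcing $\eta<4\alpha-1$. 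Balancing the two at $\alpha=1/3$ yields $\eta<1/3$. Your proposal correctly identifies uniform-over-$j$ control of the remainders as the main obstacle, but it does not supply this mechanism (an explicit tail bound on each $\|\tilde{\bm\beta}_j-\bm\beta_t\|$, obtained in the paper from a second-moment/Chebyshev argument on the block quasi-likelihood, followed by a union bound), and without it the claim that $K=O(r^{1/3})$ suffices is not actually established.
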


For statistical inference, we propose to estimate the {\bluee asymptotic} variance-covariance matrix of $\tilde{\bm\beta}_{Kr}$ using %
 \begin{equation}\label{eq:24}
   \tilde{V}=\left\{\frac{1}{N}\sum_{j=0}^K\dot{Q}_j^*(\tilde{\bm\beta}_j)\right\}^{-1}\tilde{V}_c\left\{\frac{1}{N}\sum_{j=0}^K\dot{Q}_j^*(\tilde{\bm\beta}_j)\right\}^{-1},
 \end{equation}
where
  \begin{align*}    \tilde{V}_c=&\frac{1}{N^2}\Bigg\{\sum_{S_{r_0}}\frac{\{y_{0i}^*-{\psi}(\tilde{\bm\beta}_0^T\bm x_{0i}^*)\}^2\bm x_{0i}^*{\bm x_{0i}^*}^T}{(r_0/N)^2}(1-r_0/N)\\
   &\qquad+\sum_{j=1}^K\sum_{S_{jn}}\frac{\{y_{ji}^*-{\psi}(\tilde{\bm\beta}_j^T\bm x_{ji}^*)\}^2\bm x_{ji}^*{\bm x_{ji}^*}^T}{(\tilde p_{ji}^{\rm sos *})^2}(1-\tilde p_{ji}^{\rm sos *})\Bigg\}.
  \end{align*}
  This formula enables us to know how well $\tilde{\bm\beta}_{Kr}$ approximates $\hat{\bm\beta}_{\rm QLE}$. When $Kr=o(N)$, we can also draw inference on the true parameter $\bm\beta_t$, since uncertainty  of $\hat{\bm\beta}_{\rm QLE}$ can be ignored under this assumption.
It is worth mentioning that if we want to calculate~\eqref{eq:24}, we also need to have $\sum_{S_{jn}}(1-\tilde p_{ji}^{\rm sos}){\{y_{ji}-{\psi}(\tilde{\bm\beta}_i^T\bm x_{ji})\}^2\bm x_{ji}\bm x_{ji}^T}/{(\tilde{p}_{ji}^{\rm sos})^2}$ calculated on each machine.

Since the pilot estimator $\tilde{\bm\beta}_0$ has to be calculated anyway, our method is valuable even for the case $K=1$ because this avoids iterative calculation on the Step 1 sample twice.

\section{Numerical Studies}\label{sec:experiments}
In this section, we present examples of numerical experiments using the methods developed in Sections \ref{sec:opt-sample} and \ref{sec:dist-sample}. Computations are performed using {\verb"R"} \citep{Rpackage2018}.
The performance of a sampling strategy is evaluated by the empirical MSE of the resultant estimator:
\[{\blue \text{MSE}=\frac{1}{T}\sum_{t=1}^T\|{\bm\beta}_{\bm p}^{(t)}-\hat{\bm\beta}_{\rm QLE}\|^2},\]
where ${\bm\beta}_{\bm{p}}^{(t)}$ is the estimate from the $t$-th subsample with subsampling probability $\bm p$ and $\hat{\bm\beta}_{\rm QLE}$ is the quasi-likelihood estimator calculated from the whole data set. We set  $T = 1000$ throughout this section.

\subsection{Simulation Studies}\label{sec:simulation}
We take Poisson regression as an example to evaluate the finite sample performance of the proposed methods throughout this section.
We also considered logistic regression and Gamma regression models, the results were similar and thus were omitted.
Full data of size $N=500,000$ are generated from a Poisson regression model such that given the covariate $\bm x$, the response $y$ follows a Poisson distribution with mean { ${E}(y|\bm {x} )=\exp(\bm\beta^T\bm{x})$}. Here we set the true value of $\bm\beta$ as a $7\times1$ vector
of 0.5. We consider the following four scenarios to generate the covariates $\bm x_i=(x_{i1}, ..., x_{i7})^{T}$.
\begin{enumerate}[{Case} 1]

\item  The seven covariates are i.i.d from the standard uniform distribution, namely, $x_{ij}\overset{\text{i.i.d}}{\sim} U(0,1)$ for $j=1, ..., 7$.

\item %
The second covariate is $x_{i2}=x_{i1}+\varepsilon_{i}$ with $x_{i1}\sim U(0,1)$, $\varepsilon_{i}\overset{\text{i.i.d}}{\sim} U(0,1)$, and other covariates are $x_{ij}\overset{\text{i.i.d}}{\sim} U(0,1)$ for $j=1,3,\ldots,7$. %
In this scenario, the first two covariates are  correlated ($\approx 0.5$). %
\item
This scenario is the same as Case 2 except that $\varepsilon_{i}$ $\overset{\text{i.i.d}}{\sim} U(\left[0,0.1 \right] )$. For this case, the correlation between the first two covariates is close to $0.8$.

\item   This scenario is the same as Case 2 except that  $x_{ij}\overset{\text{i.i.d}}{\sim} U(\left[-1, 1 \right] )$ for $j=6, 7$. For this case, the supports for different covariates are not all the same.
\end{enumerate}

In the following, we evaluate the performance of Algorithm~\ref{alg:4} based on MV and MVc subsampling probabilities with partition number $K=1$ and $K=5$. Note that Algorithm~\ref{alg:4} with $K=1$ and Algorithm~\ref{alg:3} differ only in the way to incorporate pilot sample information, so their performances are similar. Results of uniform subsampling are also calculated for comparisons.

  We fix $r_0=200$ and $\varrho=0.2$, and
choose $r$ to be 300, 500, 700, 1000, 1200, 1500, 1700 and 2000.
Since the uniform subsampling probability does not depend on unknown parameters and no pilot subsamples are required, it is implemented with subsample size $r +r_0$ for fair comparisons.

Figure \ref{fig:1} gives the simulation results. %
It is seen that for the four data sets, subsampling methods based on MV and MVc always result in smaller empirical MSEs compared with the uniform subsampling, which agrees with the theoretical results in Section \ref{sec:opt-sample}. {\red The MSEs for all subsampling methods decrease as $r$ increases, which confirms the theoretical result on consistency of the subsampling methods.}
\begin{figure}[H]%
  \centering
  \begin{subfigure}{0.49\textwidth}
    \includegraphics[width=\textwidth]{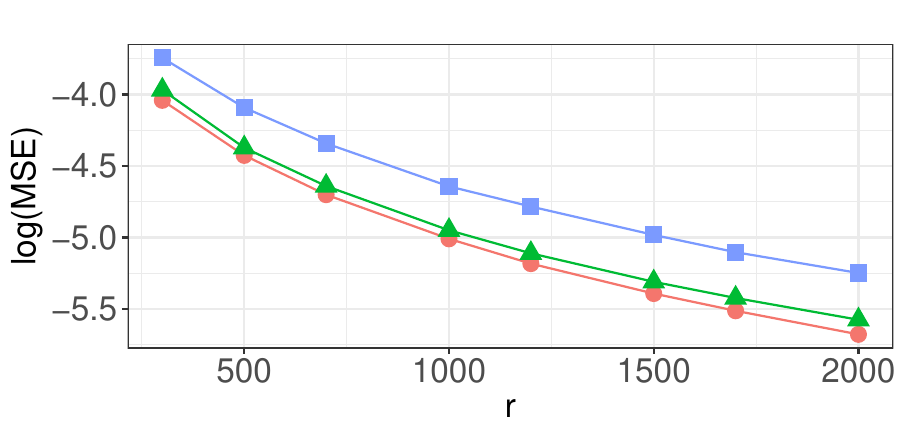}\\[-1cm]
    \caption{Case 1 (K=1)}
  \end{subfigure}
  \begin{subfigure}{0.49\textwidth}
    \includegraphics[width=\textwidth]{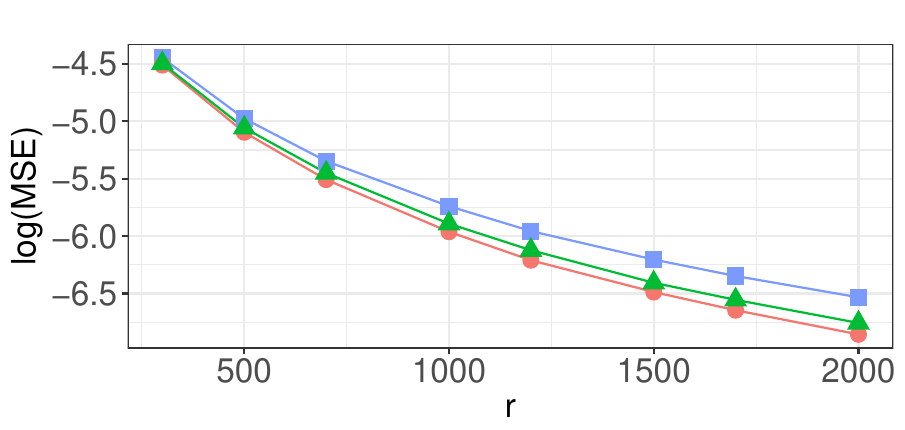}\\[-1cm]
    \caption{Case 1 (K=5)}
  \end{subfigure}\\[0.5mm]
  \begin{subfigure}{0.49\textwidth}
    \includegraphics[width=\textwidth]{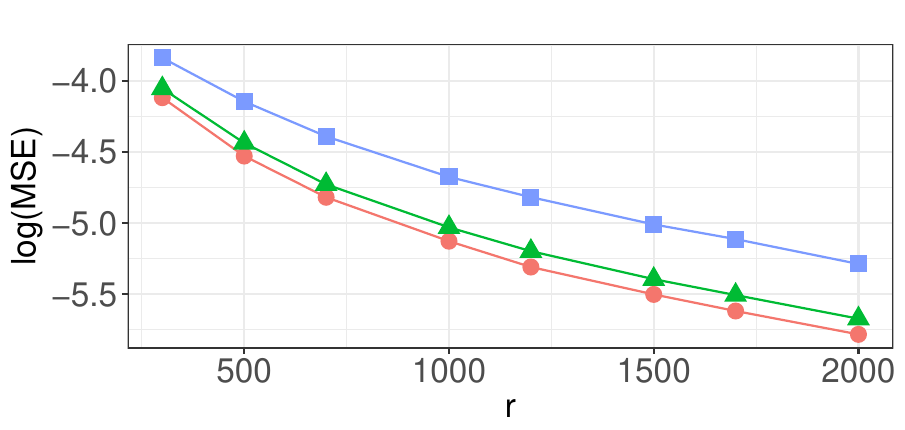}\\[-1cm]
    \caption{Case 2 (K=1)}
  \end{subfigure}
  \begin{subfigure}{0.49\textwidth}
    \includegraphics[width=\textwidth]{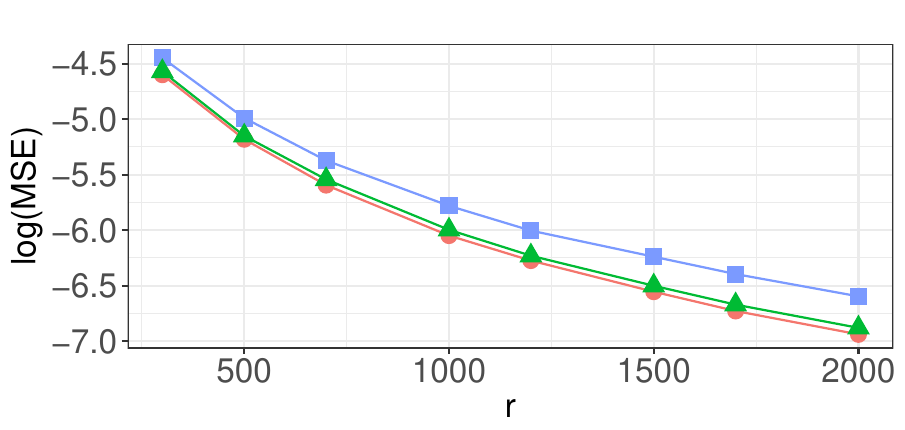}\\[-1cm]
    \caption{Case 2 (K=5)}
  \end{subfigure}\\[0.5mm]
   \begin{subfigure}{0.49\textwidth}
    \includegraphics[width=\textwidth]{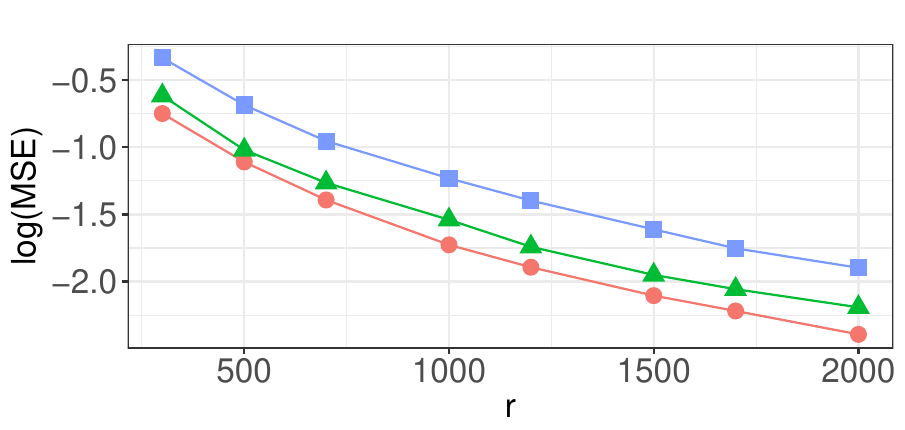}\\[-1cm]
    \caption{Case 3 (K=1)}
  \end{subfigure}
  \begin{subfigure}{0.49\textwidth}
    \includegraphics[width=\textwidth]{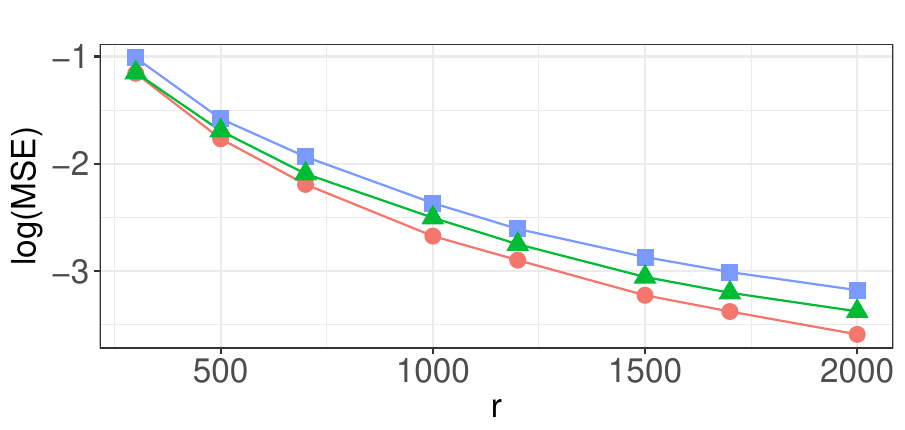}\\[-1cm]
    \caption{Case 3 (K=5)}
  \end{subfigure}\\[0.5mm]
  \begin{subfigure}{0.49\textwidth}
    \includegraphics[width=\textwidth]{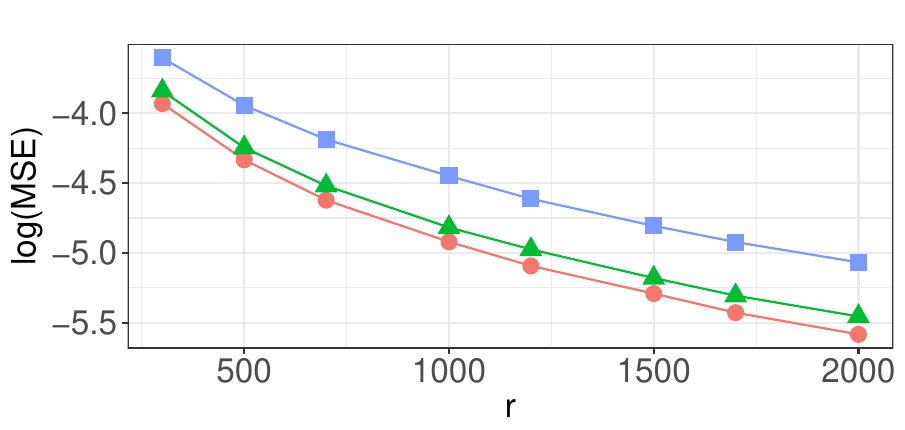}\\[-1cm]
    \caption{Case 4 (K=1)}
  \end{subfigure}
  \begin{subfigure}{0.49\textwidth}
    \includegraphics[width=\textwidth]{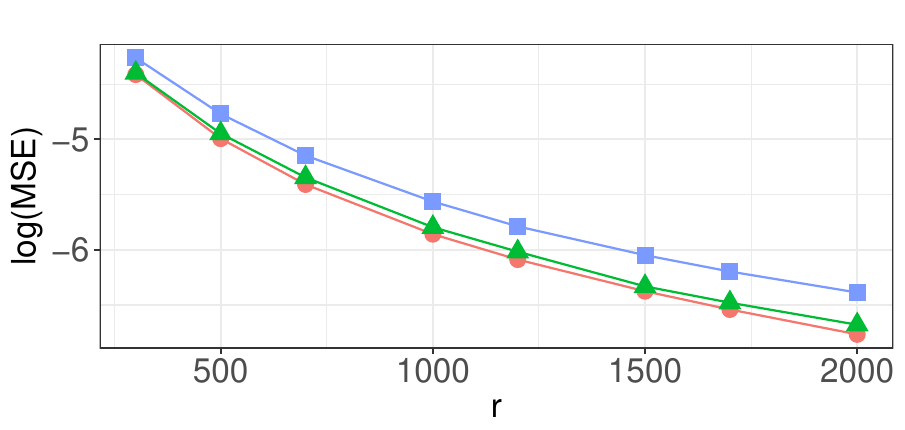}\\[-1cm]
    \caption{Case 4 (K=5)}
  \end{subfigure}\\
  \caption{A graph showing the log of MSE with different ${r}$ and $K$ for different distributions of covariates based on MV (red circle), MVc (green triangle) and uniform subsampling (blue square) methods where $r_0=200$ and $\varrho=0.2$.}
  \label{fig:1}
\end{figure}

 Next, we will explore the effect of different $\varrho$ with fixed $r_0$ and $r$.  %
The results are given in Figure \ref{fig:2} with $r_0=200$, and $r$ = 1200 and 1500. %
It is clear to see that the subsampling method outperforms the uniform subsampling method when $\varrho\in[0.01,0.99]$.
When $\varrho$ is close to 1, the performances of $\tilde{\bm{p}}^{\rm sos}$ are similar  to that of the uniform subsampling.
The two-step approach works the best when $\varrho$ is around 0.25.
This implies that the shrinkage estimator effectively protect the weighted estimating equation from data points with $|y_i-{\psi}(\tilde{\bm\beta}_0^T\bm x_i)|$ close to zero.
We only present the performance of Case 4 here because results for all other cases are similar.

\begin{figure}[H]%
  \centering
  \begin{subfigure}{0.49\textwidth}
    \includegraphics[width=\textwidth]{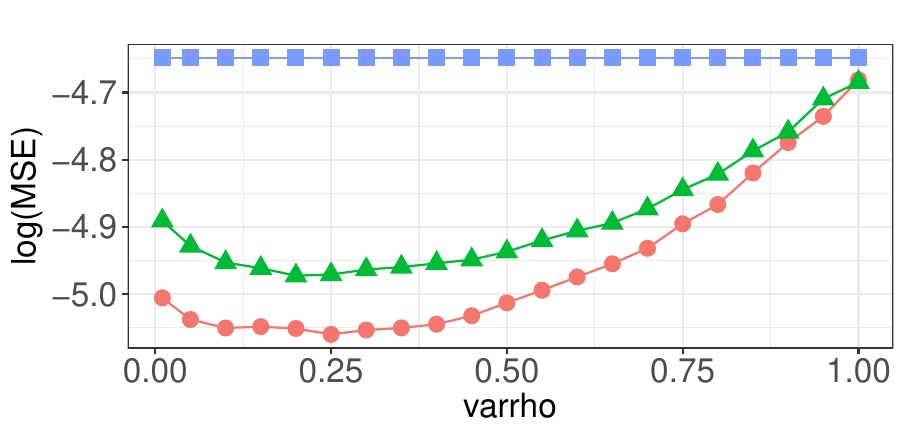}\\[-1cm]
    \caption{$r=1200$}
  \end{subfigure}
  \begin{subfigure}{0.49\textwidth}
    \includegraphics[width=\textwidth]{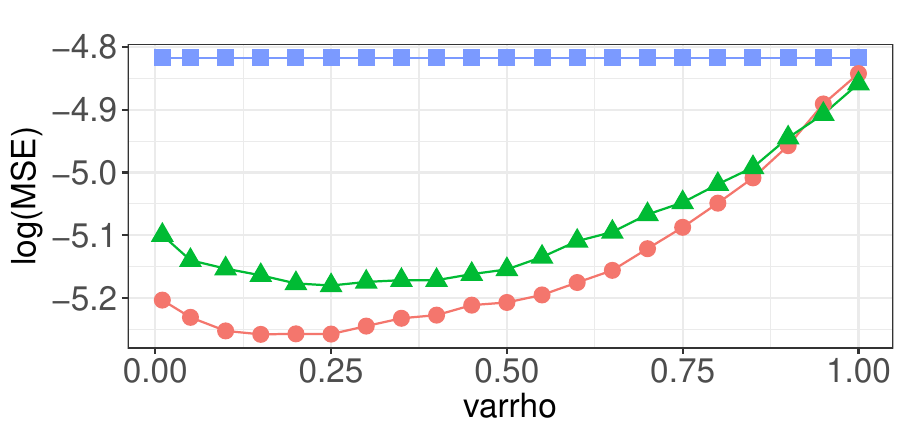}\\[-1cm]
    \caption{$r=1500$}
  \end{subfigure}\\[-0.5mm]
  \caption{Log MSEs for Case 4 with different ${\varrho}$ and a fixed $r_0=200$ based on MV (red circle), MVc (green triangle) and uniform subsampling (blue square) methods.}
  \label{fig:2}
\end{figure}

To see the effects of $M$ in $\bm{p}^{\rm sos}$, we compare the choice of $M=\infty$ with another two choices: 1) $M$ is approximated by the $(1-r/(2n))$-th quantile of $\{\hbar_i^{*\rm MV}\}_{i=1}^{r_0}$ or $\{\hbar_i^{*\rm MVc}\}_{i=1}^{r_0}$ calculated from pilot subsample set (denote this choice as $M=Q$), and 2) $M$ is calculated according to the formulas in Theorem \ref{thm:3} or \ref{thm:5} except that $\hat{\bm\beta}_{\rm QLE}$ is replaced by $\tilde{\bm\beta}_{0}$ (denote this choice as $M=E$). we  consider different values of $r/N$ with choices of 0.01, 0.1, 0.3, 0.5 and 0.7, and report results in Table \ref{tab:tab3}. %
When $r/N\le 0.3$, the choice $M=\infty$ has comparable results as the choice $M=E$ (calculating $M$ from the full). When $r/N\ge 0.5$, the choice $M=Q$ (using a quantile from the pilot subsample) still produce satisfactory results. Thus, the MSE is not very sensitive to the choice of $M$. In the big data subsampling scheme, since it is typical that $r\ll N$, we can simply use $M=\infty$.

\begin{table}[H]
  \centering
  \caption{MSE for different expected size $r$ under varying subsampling strategy with $r_0=2000$ and $\varrho=0.2$ on Case 4. Here $Q$ is the $(1-r/(2n))$-th quantile of $\{\hbar_i^{*\rm MV}\}_{i=1}^{r_0}$ or $\{\hbar_i^{*\rm MVc}\}_{i=1}^{r_0}$, and $E$ is calculated according to the formula for $M$ in Theorem \ref{thm:3} or \ref{thm:5} with $\hat{\bm\beta}_{\rm QLE}$ replaced by $\tilde{\bm\beta}_{0}$.}
  \setlength{\tabcolsep}{3.5mm}{
    \begin{tabular}{cccccc}
    \hline
    {Method} & $r/N=$0.01  & $r/N=$0.1   & $r/N=$0.3   & $r/N=$0.5   & $r/N=$0.7 \\
    \hline
        {UNIF}  & 1.75E-03 & 1.93E-04 & 5.03E-05 & 2.14E-05 & 9.21E-06 \\
        {MV with $M=\infty$}  & 1.18E-03 & 1.12E-04 & 2.35E-05 & 8.35E-06 & 3.64E-06 \\
        {MV with $M=Q$}       & 1.19E-03 & 1.11E-04 & 2.54E-05 & 8.19E-06 & 1.57E-06 \\
        {MV with $M=E$}       & 1.21E-03 & 1.16E-04 & 2.29E-05 & 7.55E-06 & 2.36E-06 \\
        {MVc with $M=\infty$}       & 1.32E-03 & 1.22E-04 & 2.82E-05 & 1.09E-05 & 5.26E-06 \\
        {MVc with $M=Q$}       & 1.28E-03 & 1.22E-04 & 2.72E-05 & 9.65E-06 & 2.12E-06 \\
        {MVc with $M=E$}       & 1.35E-03 & 1.26E-04 & 2.74E-05 & 8.76E-06 & 2.67E-06 \\
    \hline
    \end{tabular}}
\label{tab:tab3}

\end{table}

To have a closer look at the effect of $K$, we implement  Algorithm \ref{alg:4} with fixed partition number $K=5$ or $K=10$ and changing $r$ with choices of { 300, 500, 700, 1000, 1200, 1500, 1700, and 2000}. We also consider {\bluee  the cases where $r$ and $Kr$ are fixed}.%
The results for Case 4 are {\blue reported} in Figure \ref{fig:5} with $r_0= 200$ and $\varrho=0.2$. For comparisons, the uniform subsampling is also implemented through Algorithm \ref{alg:4} with $\tilde{\bm{p}}^{\rm sos}$ replaced by ${\bm{p}}^{\rm UNIF}$. Figure \ref{fig:5} shows that the subsampling method outperforms the uniform subsampling method for both $K=5$ and  $K=10$.
 If $r$ is fixed, the aggregate estimator approximates $\hat{\bm\beta}_{\rm QLE}$ better when $K$ is larger since more data are involved in each subsample set. However, when $Kr$ is fixed, as $K$ increases, the performance of the aggregate estimator deteriorates.

\begin{figure}[H]
  \centering
    \begin{subfigure}{0.49\textwidth}
    \includegraphics[width=\textwidth]{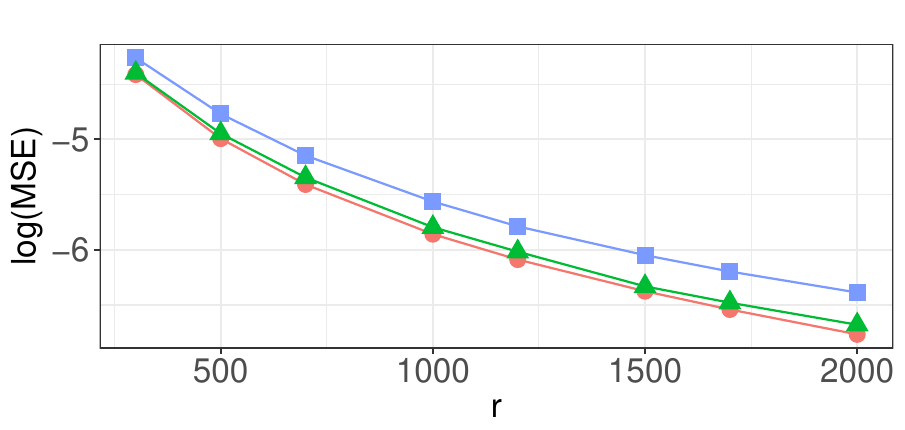}\\[-1cm]
    \caption{Case 4 (K=5) }
  \end{subfigure}
  \begin{subfigure}{0.49\textwidth}
    \includegraphics[width=\textwidth]{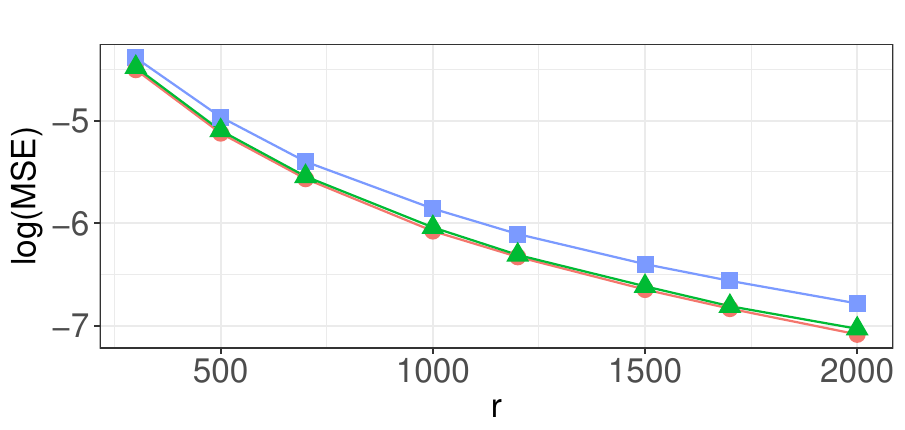}\\[-1cm]
    \caption{Case 4 (K=10)}
  \end{subfigure}\\[-0.5mm]
  \begin{subfigure}{0.49\textwidth}
    \includegraphics[width=\textwidth]{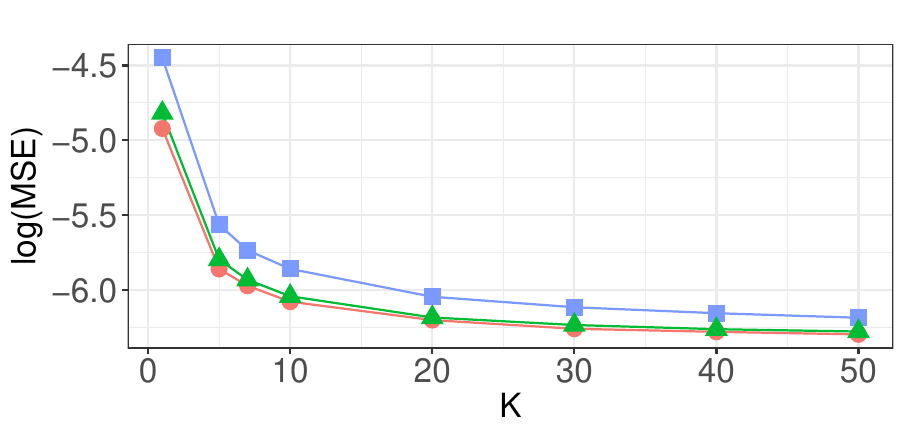}\\[-1cm]
    \caption{Case 4 (fixed $r=1000$)}
  \end{subfigure}
  \begin{subfigure}{0.49\textwidth}
    \includegraphics[width=\textwidth]{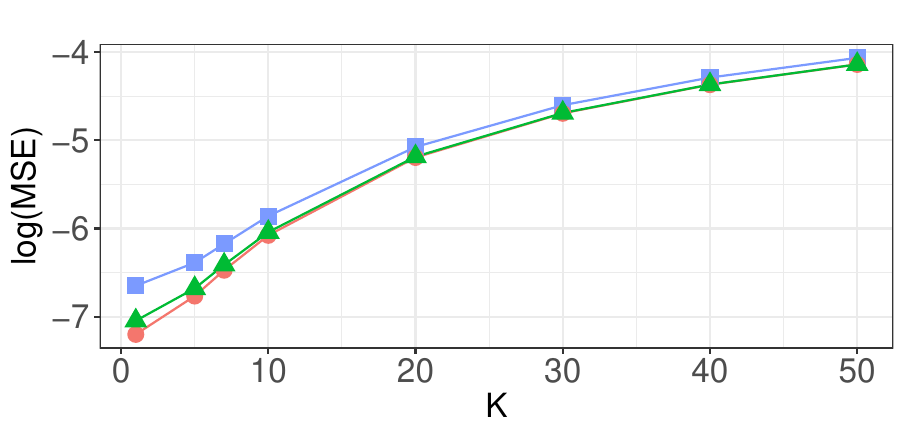}\\[-1cm]
    \caption{Case 4 (fixed $Kr=10000$)}
  \end{subfigure}\\[-0.5mm]
  \caption{Log MSEs for different combination of $r$ and $K$ with $r_0=200$ and $\varrho=0.2$ based on MV (red circle), MVc (green triangle) and uniform subsampling (blue square) methods.}
  \label{fig:5}
\end{figure}

  Now we evaluate the performance of the proposed subsampling method for statistical inference under different values of $r$ and $K$.
 As an example, we take $\beta_2$ as the parameter of interest and construct 95\% confidence intervals for it. The estimator given in \eqref{eq:24} is used to estimate the variance-covariance matrices based on selected subsamples.
Table \ref{tab:tab1} reports empirical coverage probabilities and average lengths over the four synthetic data sets with $r_0=200$ and $\varrho=0.2$.
It is clear that MV and MVc based subsampling methods have similar performances and they are uniformly better than the uniform subsampling method. As $r$ or $K$ increases, lengths of confidence intervals decrease. The 95\% confidence intervals in Case 3 are longer than those in other cases with the same subsample sizes.
 This coincides with the aforementioned results.

\begin{table}[H]
  \centering
  \caption{Empirical coverage probabilities and average lengths of 95\% confidence intervals for $\beta_2$ with $r_0=200$ and $\varrho=0.2$.}
      \setlength{\tabcolsep}{3mm}{
    \begin{tabular}{ccccccccc}
    \hline
          &       &   r    & \multicolumn{2}{c}{MV} & \multicolumn{2}{c}{MVc} & \multicolumn{2}{c}{UNIF} \\
          &       &       & coverage & length & coverage & length & coverage & length \\
          \hline
    \multirow{4}[0]{*}{case1} & \multirow{2}[0]{*}{k=1} & 1000  & 0.950  & 0.1867  & 0.949  & 0.1880  & 0.944  & 0.1932  \\
          &       & 1500  & 0.949  & 0.1829  & 0.946  & 0.1837  & 0.945  & 0.1871  \\
          & \multirow{2}[0]{*}{k=5} & 1000  & 0.947  & 0.1774  & 0.944  & 0.1776  & 0.931  & 0.1783  \\
          &       & 1500  & 0.951  & 0.1766  & 0.946  & 0.1767  & 0.935  & 0.1771  \\
          \hline
    \multirow{4}[0]{*}{case2} & \multirow{2}[0]{*}{k=1} & 1000  & 0.935  & 0.1619  & 0.944  & 0.1638  & 0.927  & 0.1680  \\
          &       & 1500  & 0.940  & 0.1587  & 0.937  & 0.1600  & 0.934  & 0.1627  \\
          & \multirow{2}[0]{*}{k=5} & 1000  & 0.945  & 0.1542  & 0.934  & 0.1546  & 0.936  & 0.1550  \\
          &       & 1500  & 0.938  & 0.1536  & 0.938  & 0.1538  & 0.932  & 0.1539  \\
          \hline
    \multirow{4}[0]{*}{case3} & \multirow{2}[0]{*}{k=1} & 1000  & 0.957  & 1.8103  & 0.956  & 1.8504  & 0.947  & 1.8961  \\
          &       & 1500  & 0.954  & 1.7788  & 0.956  & 1.8058  & 0.941  & 1.8368  \\
          & \multirow{2}[0]{*}{k=5} & 1000  & 0.956  & 1.7344  & 0.951  & 1.7418  & 0.936  & 1.7503  \\
          &       & 1500  & 0.951  & 1.7280  & 0.951  & 1.7325  & 0.945  & 1.7374  \\
          \hline
    \multirow{4}[0]{*}{case4} & \multirow{2}[0]{*}{k=1} & 1000  & 0.935  & 0.1949  & 0.928  & 0.1977  & 0.935  & 0.2034  \\
          &       & 1500  & 0.927  & 0.1913  & 0.933  & 0.1932  & 0.936  & 0.1970  \\
          & \multirow{2}[0]{*}{k=5} & 1000  & 0.928  & 0.1862  & 0.928  & 0.1865  & 0.949  & 0.1877  \\
          &       & 1500  & 0.930  & 0.1854  & 0.927  & 0.1856  & 0.946  & 0.1864  \\
          \hline
    \end{tabular}}%
  \label{tab:tab1}%
\end{table}%

Additional simulation results on both estimation efficiency and computational efficiency with larger full data sizes and higher dimensions are available in the supplementary material.

\subsection{Citation Number Data Set}
The number of citations is an important factor about the quality of a research paper, and it is of interest to most of the researchers in every field. As a result, study of paper citations itself has become an interesting research topic. In this example, we applied the proposed method to a real data set {\red about over four million} papers associated with abstract, authors, year, venue, title, type and citation numbers (\cite{Tang:08KDD}).
The data set is available at \url{https://www.aminer.cn/citation}, and our goal is to model the number of citations using features extracted from the text information about the articles.

The original data set is in text format, and we extract the following numerical features to characterize each article. First, the number of years between the year the paper was published and the year of 2018 ($x_1$). This feature describes the time effect since the citation numbers are nondecreasing in $x_1$.
We categorize the length of the abstract of each paper into  detail/brief/non-present status, and bring two indicator variables to denote them. Specifically, $x_2=1$ if the paper has an abstract with more than 100 words and $x_2=0$ otherwise; and $x_3=1$ if the paper has an abstract with less than 100 words and $x_3=0$ otherwise.
Similarly, we characterize the length of the title for each paper by defining $x_4=1$ if the title contains more than 10 words and $x_4=0$ otherwise.
We also consider the publication type, and use $x_5=1$ to denote journal papers and $x_5=0$ for the rest of papers.
To measure the influence of the journal or publisher, we use the newest SJR score ($x_6$) provided by \url{https://www.scimagojr.com}. We also consider the SJR ranking and let $x_7=1$ for journals or publishers that are marked with ``Q1'' and let $x_7=0$ otherwise.
The author information of each paper is also taken into account. We define $x_8$ as the average number of author publications for each paper, which is calculated by dividing the total number of publications from the author(s) of the paper before 2018 by the total number of author(s) in the paper.
We remove all the incomplete cases in the data set, and there are $n=2,803,027$ data points after the data cleaning.

To describe the relationship between the number of citations and the aforementioned features, a Poisson regression is used. The estimated mean model from the quasi-likelihood estimator based on the full data set is given as below:
\begin{equation*}
\blue E(Y|X)=\exp(1.32+0.39x_1+1.44x_2+1.09x_3-0.26x_4+0.03x_5+0.20x_6+0.55x_7+0.21x_8).
\end{equation*}
From the fitted model, we have the following findings. 1) A detailed abstract helps to attract more citations while a detailed title may not be popular among scholars. This may be because follow-up papers often have longer titles compared with the original paper, but they usually gain less attentions.  2) The publication type is not critical to receive high number of citations comparing with other factors.   3) SJR ranking is critical to receive higher number of citations. This is because the paper published in high quality publishers usually receive more attentions. %
4) More productive authors gain more citations since they may have more influences.

To assess the performance of the proposed method in approximating the full data estimates, we apply them on the citation data for 1000 times and report the averages of parameter estimates along with the empirical standard errors in Table~\ref{tab:citenum}. The uniform subsampling method is also implemented for comparison. In this table, $\varrho=0.2$, $r_0=800$, and $r=4400$.
It is seen that all subsampling methods produce average estimates that are close to the full data estimates. However, the proposed methods have significantly smaller empirical standard errors. %

Similar to the simulation studies, we also compare our methods with the uniform subsampling method with various sampling budget $r$ varying from 2000 to 4400 and $r_0$ being fixed at 800.  Figure~\ref{fig:citedata} shows the results on the empirical MSE.
We see that MV and MVc perform similarly and they both dominate the uniform sampling method. This pattern is similar to that in the simulation studies.

\begin{table}%
\centering
  \caption{Average estimates for the Citation number data set from the proposed methods with $\varrho=0.2$, $r_0=800$, and $r=4400$. In the table $\beta_1,\ldots,\beta_8$ are the regression coefficients for $x_1,\ldots,x_8$ respectively, and $\beta_0$ is the intercept coefficient. The numbers in the parentheses are the empirical standard errors.}
  \setlength{\tabcolsep}{2.2mm}{\blue\footnotesize
    \begin{tabular}{lrrrrrrr}
    \hline
      & \multicolumn{3}{c}{K=1}  && \multicolumn{3}{c}{K=5} \\
      \cline{2-4} \cline{6-8}
          &\multicolumn{1}{c}{UNIF}& \multicolumn{1}{c}{MV} & \multicolumn{1}{c}{MVc}   && \multicolumn{1}{c}{UNIF} & \multicolumn{1}{c}{MV} & \multicolumn{1}{c}{MVc}  \\
    \hline
    $\beta_0$ & 1.29 (0.339)  & 1.35 (0.183)  & 1.35 (0.213)  && 1.47 (0.221)       & 1.44 (0.153)       & 1.44 (0.159)  \\
    $\beta_1$ & 0.41 (0.054)  & 0.39 (0.027)  & 0.39 (0.023)  && 0.38 (0.039)       & 0.38 (0.027)       & 0.38 (0.026)  \\
    $\beta_2$ & 1.46 (0.333)  & 1.42 (0.182)  & 1.42 (0.208)  && 1.37 (0.218)       & 1.40 (0.145)       & 1.40 (0.149)  \\
    $\beta_3$ & 1.11 (0.365)  & 1.07 (0.190)  & 1.07 (0.218)  && 1.02 (0.240)       & 1.05 (0.160)       & 1.05 (0.162)  \\
    $\beta_4$ & -0.25 (0.139)  & -0.26 (0.100)  & -0.26 (0.083)  && -0.25 (0.094)       & -0.25 (0.071)       & -0.25 (0.069) \\
    $\beta_5$ & 0.03 (0.159) & 0.03 (0.103) & 0.03 (0.099) && 0.04 (0.103)      & 0.04 (0.073)      & 0.04 (0.073)  \\
    $\beta_6$ & 0.21 (0.051)  & 0.21 (0.022)  & 0.20 (0.017)  && 0.21 (0.035)       & 0.21 (0.020)       & 0.21 (0.019) \\
    $\beta_7$ & 0.54 (0.185)  & 0.55 (0.114)  & 0.55 (0.111)  && 0.54 (0.117)  & 0.55 (0.093)  & 0.55 (0.094)  \\
    $\beta_8$ & 0.22 (0.044)  & 0.21 (0.023)  & 0.21 (0.017)  && 0.21 (0.027)       & 0.21 (0.017)       & 0.21 (0.016) \\
    \hline
    \end{tabular}}%
  \label{tab:citenum}%
\end{table}%

\begin{figure}[H]%
  \centering
  \begin{subfigure}{0.495\textwidth}
    \includegraphics[width=\textwidth]{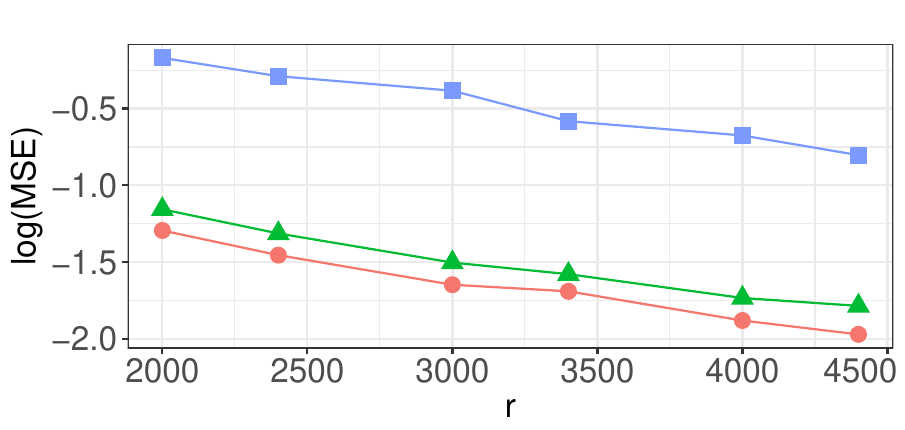}\\[-1cm]
    \caption{Log MSEs (K=1)}\label{fig:cite1}
  \end{subfigure}
  \begin{subfigure}{0.495\textwidth}
    \includegraphics[width=\textwidth]{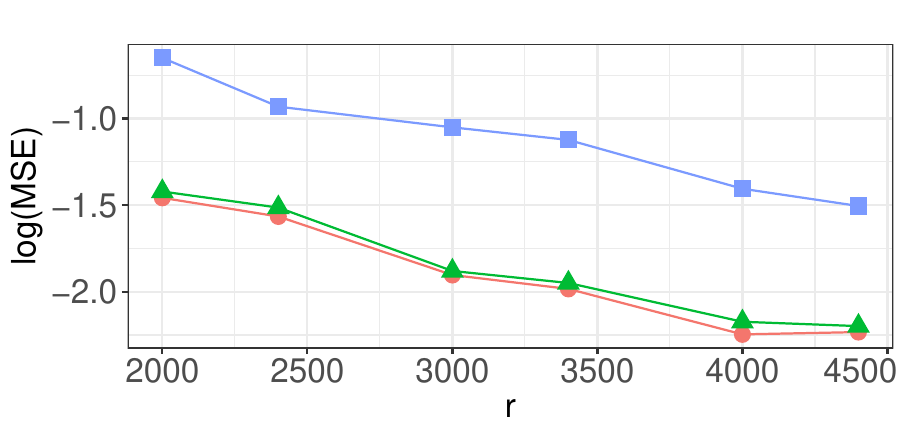}\\[-1cm]
    \caption{Log MSEs (K=5)}\label{fig:cite2}
  \end{subfigure}
  \caption{A graph showing the log of MSEs for the citation number data set with $r_0=400$ and different $r$ and partition number $K$ based on MV (red circle), MVc (green triangle) and uniform subsampling (blue square) methods. }
  \label{fig:citedata}
\end{figure}

\subsection{Airline On-time and Delay Data Set}

To track the on-time performance of domestic flights operated by large air carriers,  information about on-time, delayed, canceled, and diverted flights have been collected since October 1987. The full data set contains 123,534,969 records ($\sim$11 GB) which is available on \url{http://stat-computing.org/dataexpo/2009/the-data.html}.
One purpose for analyzing this data set is to build a model for airlines delays.
We first plot the histogram of actual arrive delays based on the pilot {\blue sample} and notice a very large discrepancy from normality. The distribution of actual delays are extremely skewed and heavy-tailed (see Figure~\ref{fig:aird1}).

In order to extract useful information about %
arrive delays, we use linear regression, log-linear regression, and Gamma regression to model the relationship between arrive delays and other covariate variables: $x_1$, the distance between airports; $x_2$, day/night status (binary; 1 if departure between 7 a.m. and 6 p.m., 0 otherwise); $x_3$, weekend/weekday status (binary; 1 if departure occurred during the weekend, 0 otherwise); and $x_4$, departure delay status (binary; 1 if the delay is 15 Minutes or More , 0 otherwise).
Note that both log linear and Gamma regression models are  defined for non-negative responses.
Thus we switch the locations of all the responses, i.e., add  1440 to all the responses.
Based on the pilot sample, the Bayesian information criterion values are 7312.672, -4407.750, and -4415.854 for linear regression, log-linear regression, and Gamma regression, respectively, which implies that the posterior probability for Gamma regression model is around 0.98 in the view of Bayesian model averaging \citep[see][]{Neath2012BIC}.
Thus we use Gamma regression for this case.
In addition, we drop the \verb"NA" values in the dataset. After data cleaning, we have $n=119,793,199$ data points.
Similar to the simulation studies, we also compare our method with the uniform subsampling method, and report the results  under various sampling budget $r$ varying from 2000 to 4400 with $r_0$ fixed at 800 in Figure~\ref{fig:aird}.
As expected, MV and MVc perform similarly and they both  outperform the uniform sampling method.

\begin{figure}[H]
  \centering
  \begin{subfigure}{0.495\textwidth}
    \includegraphics[width=\textwidth]{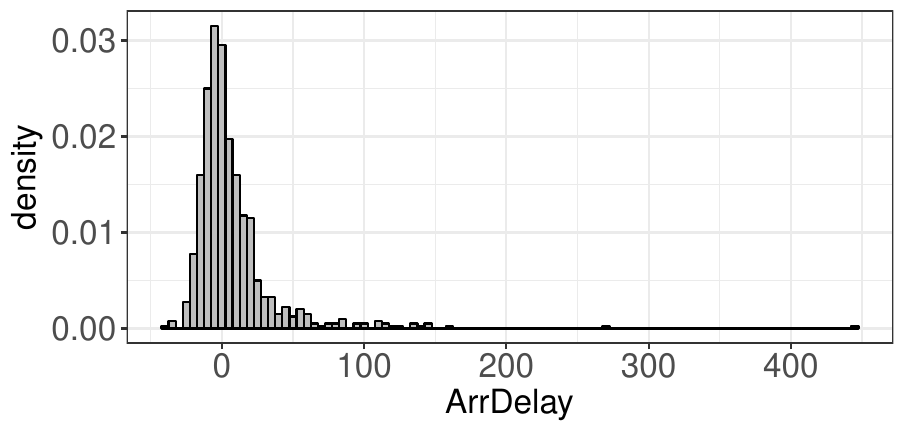}\\[-1cm]
    \caption{Actual Delays}\label{fig:aird1}
  \end{subfigure}
  \begin{subfigure}{0.495\textwidth}
    \includegraphics[width=\textwidth]{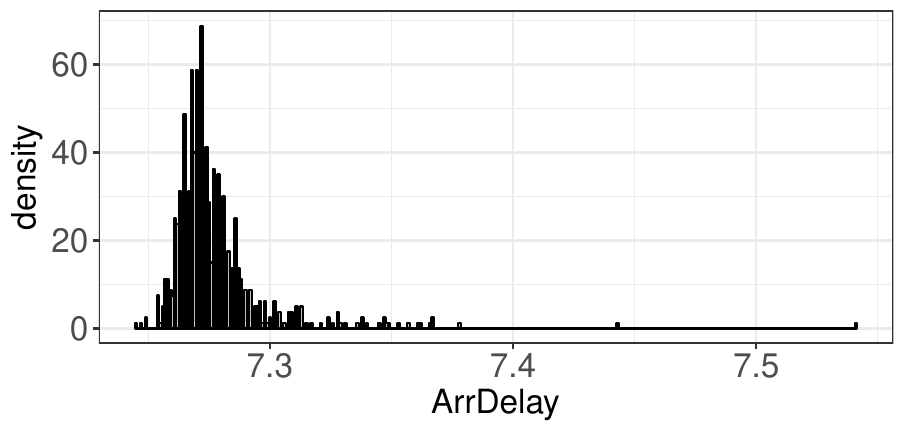}\\[-1cm]
    \caption{Log Actual Delays}\label{fig:aird2}
  \end{subfigure}
  \caption{Distribution of Actual Delays and Log-transformed Actual Delays based on the pilot samples ($r_0=800$). }
  \label{fig:airdv}
\end{figure}

\begin{figure}[H]
  \centering
  \begin{subfigure}{0.495\textwidth}
    \includegraphics[width=\textwidth]{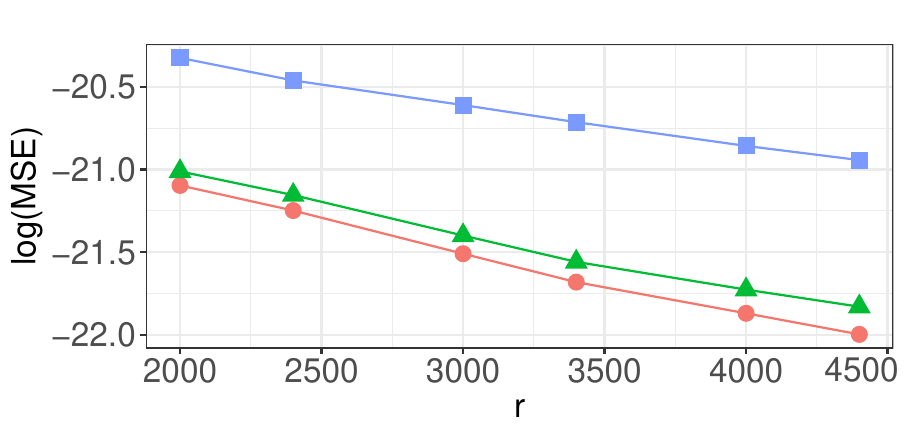}\\[-1cm]
    \caption{Log MSEs (K=1)}
  \end{subfigure}
  \begin{subfigure}{0.495\textwidth}
    \includegraphics[width=\textwidth]{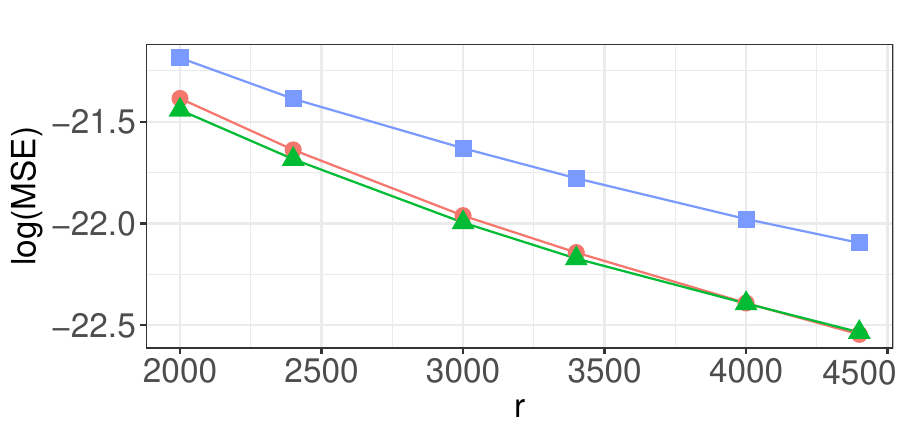}\\[-1cm]
    \caption{Log MSEs (K=5)}
  \end{subfigure}
  \caption{A graph showing the log of MSEs for the airline on-time and delay  data set with $r_0=800$, $\varrho=0.2$ and different ${r}$ and partition number $K$ based on MV (red circle), MVc (green triangle) and uniform subsampling (blue square) methods. }
  \label{fig:aird}
\end{figure}

\section{Conclusion}
In this paper, we have derived the optimal Poisson subsampling probabilities for quasi-likelihood estimation, and developed a distributed optimal subsampling method. We have investigated the theoretical properties of the proposed methods and carried out extensive numerical experiments on simulated and real data sets to evaluate their practical performance. Both theoretical results and numerical results demonstrate the great potential of the proposed method in extracting useful information from massive data sets.

\newpage
\appendix
  \begin{center}
    {\LARGE\bf Supplementary material for ``Optimal Distributed Subsampling for Maximum Quasi-Likelihood Estimators with Massive Data''}
\end{center}

\bigskip

\spacingset{1.45}
\renewcommand{\thesection}{S.\arabic{section}}
\renewcommand{\thelemma}{S.\arabic{lemma}}
\renewcommand{\theequation}{S.\arabic{equation}}
{\blue In this supplementary material we prove the theorems in the paper and present additional simulation results to evaluate the proposed methods.}

\section{Proofs}
Recall that we use  $Q(\bm\beta)$ to denote the estimating equation on full data set. For the subsamples, the weighted estimation equation (3) can be written as
\[Q^*(\bm\beta)=\sum_{i=1}^N\frac{\delta_i}{p_i}[y_i-\psi(\bm\beta^T\bm x_i)]\bm x_i,\]
where $\delta_i$ is the indicator variable that signifies whether $(\bm x_i, y_i)$ is included in the subsample.
 Denote the first derivative of $Q^*(\bm\beta)$ as $\dot{Q}^*(\bm\beta)=\partial Q^*(\bm\beta)/\partial\bm\beta$.

\subsection{Proofs of {Theorems~1 and~2}}

To prove Theorems~1 and~2, we start from proving the following lemmas.

\begin{lemma}\label{lem:lem3}
Under Assumptions 1, 2 and 5, conditional on $\mathcal{F}_N$, as $r\rightarrow\infty$ and $N\rightarrow\infty$,
\[\frac{1}{N}V_c^{-1/2}{Q}^*(\hat{\bm\beta}_{\rm QLE})\rightarrow N(\bm 0,I),\]
in distribution, where
\[V_c=\frac{1}{N^2}\sum_{i=1}^N\frac{1}{p_i}\{y_i-\psi(\hat{\bm\beta}_{\rm QLE}^T\bm x_i)\}^2\bm x_i\bm x_i^T-\frac{1}{N^2}\sum_{i=1}^N{\{y_i-\psi(\hat{\bm\beta}_{\rm QLE}^T\bm x_i)\}^2\bm x_i\bm x_i^T}.\]
\end{lemma}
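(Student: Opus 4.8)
The plan is to recognize $N^{-1}Q^*(\hat{\bm\beta}_{\rm QLE})$ as a normalized sum of conditionally independent terms and to apply the multivariate Lindeberg--Feller central limit theorem. Conditioning on $\mathcal F_N$, the only randomness is in the independent Bernoulli indicators $\delta_i$, so writing $\bm\eta_i=\delta_i\{y_i-\psi(\hat{\bm\beta}_{\rm QLE}^T\bm x_i)\}\bm x_i/(Np_i)$ we have $N^{-1}Q^*(\hat{\bm\beta}_{\rm QLE})=\sum_{i=1}^N\bm\eta_i$ with the $\bm\eta_i$ independent given $\mathcal F_N$. First I would compute the conditional first two moments. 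Since $E(\delta_i\mid\mathcal F_N)=p_i$ and $\hat{\bm\beta}_{\rm QLE}$ solves the full-data equation $Q(\hat{\bm\beta}_{\rm QLE})=\bm 0$,
\[
E\big(N^{-1}Q^*(\hat{\bm\beta}_{\rm QLE})\,\big|\,\mathcal F_N\big)=N^{-1}Q(\hat{\bm\beta}_{\rm QLE})=\bm 0,
\]
while $\mathrm{Var}(\delta_i\mid\mathcal F_N)=p_i(1-p_i)$ makes the conditional covariance of $\sum_i\bm\eta_i$ exactly equal to $V_c$. Thus the centering and scaling in the statement are the natural ones.

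With the covariance identified, by the Cram\'er--Wold device it suffices to establish a scalar CLT for the triangular array $\{V_c^{-1/2}(\bm\eta_i-E\bm\eta_i)\}_{i=1}^N$, whose conditional covariances sum to the identity. I would verify Lyapunov's condition of order three, which implies Lindeberg's. Using $|\delta_i-p_i|\le1$ so that $E(|\delta_i-p_i|^3\mid\mathcal F_N)\le p_i$,
\[
\sum_{i=1}^N E\big(\|V_c^{-1/2}(\bm\eta_i-E\bm\eta_i)\|^3\,\big|\,\mathcal F_N\big)
\le\|V_c^{-1/2}\|^3\,\frac{1}{N^3}\sum_{i=1}^N\frac{|y_i-\psi(\hat{\bm\beta}_{\rm QLE}^T\bm x_i)|^3\|\bm x_i\|^3}{p_i^2}.
\]
Assumption~5 bounds $p_i^{-1}$ by $O_P(N/r)$ uniformly, so $p_i^{-2}=O_P(N^2/r^2)$; Assumption~1 (which places $\hat{\bm\beta}_{\rm QLE}\in\Lambda$) together with the moment bounds in Assumption~2 and the law of large numbers give $N^{-1}\sum_i|y_i-\psi(\hat{\bm\beta}_{\rm QLE}^T\bm x_i)|^3\|\bm x_i\|^3=O_P(1)$ (after majorizing $|y_i-\psi|$ by $|y_i|+\sup_{\bm\beta\in\Lambda}|\psi(\bm\beta^T\bm x_i)|$ and applying H\"older). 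Hence the right-hand side is $O_P(\|V_c^{-1/2}\|^3 r^{-2})$.

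The main obstacle is controlling $\|V_c^{-1/2}\|=\lambda_{\min}(V_c)^{-1/2}$, i.e.\ showing $\lambda_{\min}(V_c)\gtrsim r^{-1}$ with probability tending to one; this is exactly what makes the Lyapunov bound collapse. To this end I would prove a matching lower bound on $rV_c$ through a weighted Cauchy--Schwarz inequality: writing $1=p_i^{1/2}p_i^{-1/2}$ and using $\sum_i p_i=r$, for every unit vector $\bm u$,
\[
\bm u^T (rV_c)\bm u\ \ge\ \Big(\frac{1}{N}\sum_{i=1}^N|y_i-\psi(\hat{\bm\beta}_{\rm QLE}^T\bm x_i)|\,|\bm u^T\bm x_i|\Big)^2,
\]
where in the regime $r/N\to c$ one retains the factor $\sqrt{1-p_i}$ inside the sum. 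A uniform law of large numbers over the compact unit sphere then bounds the right-hand side below by $\inf_{\|\bm u\|=1}\big(E|y_1-\psi(\bm\beta_t^T\bm x_1)|\,|\bm u^T\bm x_1|\big)^2$, which is strictly positive under the non-degeneracy of the design (in the same spirit as Assumption~3).

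This yields $\|V_c^{-1/2}\|^3=O_P(r^{3/2})$, so the Lyapunov sum above is $O_P(r^{3/2}r^{-2})=O_P(r^{-1/2})\to0$, and the conditional CLT follows. I expect the delicate case to be $r/N\to c\in(0,1]$, where the subtracted second term of $V_c$ and the data points with $p_i$ close to one are non-negligible; there I would keep the factor $1-p_i$ throughout the Cauchy--Schwarz step and argue that the bulk of points with $p_i$ bounded away from one still secures the positive lower bound on $\lambda_{\min}(rV_c)$.
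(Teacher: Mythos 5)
Your proposal is correct in substance and follows essentially the same route as the paper: condition on $\mathcal F_N$, observe that the only randomness is in the independent Bernoulli indicators, compute the conditional mean (zero, via $Q(\hat{\bm\beta}_{\rm QLE})=\bm 0$) and conditional covariance (exactly $V_c$), and then verify a Lindeberg-type condition by bounding the sum of third moments, using Assumption~5 to get the factor $\max_i(Np_i)^{-2}=O_P(r^{-2})$ and H\"older plus the moment bounds of Assumption~2 to show $N^{-1}\sum_i|y_i-\psi(\hat{\bm\beta}_{\rm QLE}^T\bm x_i)|^3\|\bm x_i\|^3=O_P(1)$. The paper does precisely this, bounding $\sum_iE\{\|\eta_i\|^2\mathbbm{1}_{\|\eta_i\|>\varepsilon}\mid\mathcal F_N\}\le\varepsilon^{-1}\sum_iE(\|\eta_i\|^3\mid\mathcal F_N)=O_P(r^{-2})$ and invoking the Lindeberg--Feller theorem.

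The one genuine difference is that you check Lyapunov's condition for the \emph{normalized} array $V_c^{-1/2}\bm\eta_i$, which forces you to prove $\lambda_{\min}(rV_c)$ is bounded away from zero; the paper checks the condition for the unnormalized $\bm\eta_i$ and never controls $\lambda_{\min}(V_c)$. Your instinct here is sound --- some control of $V_c^{-1/2}$ is needed to pass from a CLT for the sum to the normalized statement, and the paper is silent on this --- and your weighted Cauchy--Schwarz bound $\bm u^T(rV_c)\bm u\ge\bigl(N^{-1}\sum_i|y_i-\psi(\hat{\bm\beta}_{\rm QLE}^T\bm x_i)|\,|\bm u^T\bm x_i|\sqrt{1-p_i}\bigr)^2$ is the right device. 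Two caveats: (i) the strictly positive lower bound $\inf_{\|\bm u\|=1}\bigl(E|y_1-\psi(\bm\beta_t^T\bm x_1)|\,|\bm u^T\bm x_1|\bigr)^2>0$ is a non-degeneracy condition on the joint law of $(y,\bm x)$ that is not among the lemma's stated hypotheses (Assumptions 1, 2, 5); Assumption~3 concerns $\dot\psi(\bm\beta^T\bm x)\bm x\bm x^T$, not the squared residuals, so you are implicitly strengthening the hypotheses (the paper implicitly assumes invertibility of $V_c$ by writing $V_c^{-1/2}$, so this is a defensible but unstated addition). (ii) The uniform law of large numbers you invoke over the unit sphere involves $\hat{\bm\beta}_{\rm QLE}$, which depends on the full data, so you would need to route the argument through $\bm\beta_t$ using consistency and the Lipschitz condition of Assumption~4 --- again outside the lemma's stated assumption set. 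Neither issue is fatal, but both should be flagged if you keep the extra step.
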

\begin{proof}
Direct calculation shows that
\[\E\left\{\frac{1}{N}{Q}^*(\hat{\bm\beta}_{\rm QLE})\bigg|\mathcal{F}_N\right\}=\frac{1}{N}{Q}(\hat{\bm\beta}_{\rm QLE})=0,\]
and
\[\begin{split}
&\text{var}\left\{\frac{1}{N}Q^*(\hat{\bm\beta}_{\rm QLE})\bigg|\mathcal{F}_N\right\}\\
=&\frac{1}{N^2} \sum_{i=1}^N\frac{\text{var}(\delta_i|\mathcal{F}_N)}{p_i^2}\{y_i\bm x_i-\psi(\hat{\bm\beta}_{\rm QLE}^T\bm x_i)\bm x_i\}\{y_i\bm x_i-\psi(\hat{\bm\beta}_{\rm QLE}^T\bm x_i)\bm x_i\}^T\\
=&\frac{1}{N^2}\sum_{i=1}^N\frac{\{y_i-\psi(\hat{\bm\beta}_{\rm QLE}^T\bm x_i)\}^2\bm x_i\bm x_i^T}{p_i}-\frac{1}{N^2}\sum_{i=1}^N{\{y_i-\psi(\hat{\bm\beta}_{\rm QLE}^T\bm x_i)\}^2\bm x_i\bm x_i^T}.
\end{split}
\]

Now we check the Lindeberg-Feller condition under the conditional distribution.
Denote $\eta_i=\delta_i\{y_i-\psi(\hat{\bm\beta}_{\rm QLE}^T\bm x_i)\}\bm x_i/(Np_i)$.
For every $\varepsilon>0$,
\begin{equation}\label{eq:dbclt}
\begin{aligned}
\sum^N_{i=1} \E\{\|\eta_i\|^2
    \mathbbm{1}_{\|\eta_i\|>\varepsilon}\mid\mathcal{F}_N\} %
  &\le\frac{1}{\varepsilon}
    \sum^N_{i=1} \E(\|\eta_i\|^{3}\mid\mathcal{F}_N)\\
  &= \frac{1}{N^{3}}\frac{1}{\varepsilon}
    \sum^N_{i=1}\frac{\|y_i-{\psi}_i(\hat{\bm\beta}_{\rm QLE}^T\bm x_i)\|^{3}
    \|\bm x_i\|^{3}}{p_i^{2}}\\
  &\le \frac{1}{\varepsilon}
\left\{\max_{i=1,\ldots,N}\frac{1}{(Np_i)^{2}}\right\}\sum^N_{i=1}\frac{\|y_i-{\psi}_i(\hat{\bm\beta}_{\rm QLE}^T\bm x_i)\|^{3}\|\bm x_i\|^{3}}{N},
\end{aligned}
\end{equation}
where $\mathbbm{1}_{\cdot}$ is the indicator function.

Now we show that ${N}^{-1}\sum_{i=1}^N{\|y_i-{\psi}(\hat{\bm\beta}_{\rm QLE}^T\bm x_i)\|^3\|\bm x_i\|^3}={\blue O_{P}(1)}$. Note that
\begin{equation}\label{eq:s2}
\begin{split}
\sum\limits_{i = 1}^N {\frac{{{{| {y_i} - \psi (\hat{\bm\beta}_{\rm QLE}^T{\bm x_i})| }^3}{{\| {{\bm x_i}} \|}^3}}}{N}}  &\le \sum\limits_{i = 1}^N {\frac{{|y_i|^3{{\| {{\bm x_i}} \|}^3}}}{N}}  + 3\sum\limits_{i = 1}^N {\frac{{{y_i}^2\psi (\hat{\bm\beta}_{\rm QLE}^T{\bm x_i}){{\| {{\bm x_i}} \|}^3}}}{N}}\\
 &+3\sum\limits_{i = 1}^N {\frac{{|y_i|\psi^2(\hat{\bm\beta}_{\rm QLE}^T{\bm x_i}){{\| {{\bm x_i}} \|}^3}}}{N}}
+ \sum\limits_{i = 1}^N {\frac{{{\psi ^3}(\hat{\bm\beta}_{\rm QLE}^T{\bm x_i}){{\| {{\bm x_i}} \|}^3}}}{N}}.
 \end{split}
 \end{equation}
{\blue From (i) and (ii) in Assumption 2, we have  $N^{-1}\sum_{i=1}^N\|\bm x_i\|^6=O_P(1)$ and $N^{-1}\sum_{i=1}^N y_i^6=O_P(1)$ from the law of large numbers.
 Thus, from Holder's inequality, we have %
\begin{align}\label{eq:s3}
  &\sum\limits_{i = 1}^N {\frac{{|y_i|^3{{\| {{\bm x_i}}\|}^3}}}{N}}
\le \sqrt {\sum\limits_{i = 1}^N {\frac{{y_i^6}}{N}} } \sqrt {\sum\limits_{i = 1}^N {\frac{{{{\| {{\bm x_i}} \|}^6}}}{N}} }=O_P(1).
\end{align}}
Similarly, under Assumption 2, it can be shown that
\begin{eqnarray}
&& \sum\limits_{i = 1}^N {\frac{{{\psi ^3}(\hat{\bm\beta}_{\rm QLE}^T{\bm x_i}){{\| {{\bm x_i}} \|}^3}}}{N}} = O_P(1), \label{eq:s4}\\
&&  \sum\limits_{i = 1}^N {\frac{{{y_i}^2\psi (\hat{\bm\beta}_{\rm QLE}^T{\bm x_i}){{\| {{\bm x_i}} \|}^3}}}{N}} = O_P(1), \label{eq:s5}\\
&&  \sum\limits_{i = 1}^N {\frac{{{|y_i|}\psi^2 (\hat{\bm\beta}_{\rm QLE}^T{\bm x_i}){{\| {{\bm x_i}} \|}^3}}}{N}} = O_P(1).\label{eq:s6}
\end{eqnarray}
Here, the last two equalities come from the generalized Holder inequality \citep[See][Page 133]{schilling2017measures},
\begin{equation}\label{eq:holder}
  \frac{1}{N}\sum\limits_{i = 1}^N {| {{a_i}{b_i}{c_i}}|}  \le {\left(\frac{1}{N}\sum\limits_{i = 1}^N {| {a_i^3}|} \right)^{1/3}}{\left(\frac{1}{N}\sum\limits_{i = 1}^N {| {b_i^3} |} \right)^{1/3}}{\left(\frac{1}{N}\sum\limits_{i = 1}^N {| {c_i^3} |} \right)^{1/3}}.
\end{equation}
To be specific, the results come from the fact that
\begin{align*}
  \sum\limits_{i = 1}^N {\frac{{{y_i}^2\psi (\hat{\bm\beta}_{\rm QLE}^T{\bm x_i}){{\| {{\bm x_i}} \|}^3}}}{N}}
  &\le {\left(\sum\limits_{i = 1}^N {\frac{{y_i^6}}{N}} \right)^{1/3}}{\left(\sum\limits_{i = 1}^N {\frac{{\blue|{{\psi ^3}(\hat{\bm\beta}_{\rm QLE}^T{\bm x_i})}|}}{N}} \right)^{1/3}}{\left(\sum\limits_{i = 1}^N {\frac{{{{\| {{\bm x_i}} \|}^9}}}{N}} \right)^{1/3}}\\
  &\le {\left(\sum\limits_{i = 1}^N {\frac{{y_i^6}}{N}} \right)^{1/3}}{\left(\sum\limits_{i = 1}^N {\frac{{\blue{{\psi ^6}(\hat{\bm\beta}_{\rm QLE}^T{\bm x_i})}}}{N}} \right)^{1/6}}{\left(\sum\limits_{i = 1}^N {\frac{{{{\| {{\bm x_i}} \|}^9}}}{N}} \right)^{1/3}}
\end{align*}
and
\[\sum\limits_{i = 1}^N {\frac{{{|y_i|}\psi^2 (\hat{\bm\beta}_{\rm QLE}^T{\bm x_i}){{\| {{\bm x_i}} \|}^3}}}{N}}  \le {\left(\sum\limits_{i = 1}^N {\frac{{|y_i|^3}}{N}} \right)^{1/3}}{\left(\sum\limits_{i = 1}^N {\frac{{{\psi ^6}(\hat{\bm\beta}_{\rm QLE}^T{\bm x_i})}}{N}} \right)^{1/3}}{\left(\sum\limits_{i = 1}^N {\frac{{{{\| {{\bm x_i}} \|}^9}}}{N}} \right)^{1/3}}.\]

{\purple Combining (\ref{eq:s2}), (\ref{eq:s3}), (\ref{eq:s4}), (\ref{eq:s5}) and  (\ref{eq:s6})}, we have %
\begin{equation}\label{eq:ax-thm1}
  N^{-1}\sum\limits_{i = 1}^N {{{{{| {y_i} - \psi (\hat{\bm\beta}_{\rm QLE}^T{\bm x_i})| }^3}{{\| {{\bm x_i}} \|}^3}}}} =O_P(1).
\end{equation}

From \eqref{eq:dbclt}, \eqref{eq:ax-thm1}, and Assumption 5, we obtain
\begin{align*}
\sum^N_{i=1} E\{\|\eta_i\|^2
    \mathbbm{1}_{\|\eta_i\|>\varepsilon}|\mathcal{F}_N\}  \le \frac{1}{\varepsilon}
{\blue O_P(r^{-2})}\cdot O_P(1)=o_P(1).
\end{align*}
Thus, conditionally on $\mathcal{F}_N$, the desired result holds by the Lindeberg-Feller central limit theorem \citep[Proposition 2.27 of][]{Vaart2000Asymptotic}.
\end{proof}

\begin{lemma}\label{lem:lem4}
Under Assumptions 1 -- 5, as $N,r\rightarrow\infty$, for any $\bm s_r\rightarrow0$ in probability,
\begin{equation}\label{eq:lem4eq}
  \frac{1}{N}\sum_{i=1}^N\frac{\delta_i}{p_i}\dot{\psi}((\hat{\bm\beta}_{\rm QLE}+\bm s_r)^T\bm x_i)\bm x_i\bm x_i^T-\frac{1}{N}\sum_{i=1}^N\dot{\psi}(\hat{\bm\beta}_{\rm QLE}^T\bm x_i)\bm x_i\bm x_i^T=o_{P|\mathcal{F}_N}(1).
\end{equation}
\end{lemma}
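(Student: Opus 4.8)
The plan is to add and subtract the subsample-weighted sum evaluated at the unperturbed estimator $\hat{\bm\beta}_{\rm QLE}$, so that the left-hand side of \eqref{eq:lem4eq} splits as
$$\underbrace{\frac1N\sum_{i=1}^N\frac{\delta_i}{p_i}\{\dot\psi((\hat{\bm\beta}_{\rm QLE}+\bm s_r)^T\bm x_i)-\dot\psi(\hat{\bm\beta}_{\rm QLE}^T\bm x_i)\}\bm x_i\bm x_i^T}_{\text{(I)}}+\underbrace{\frac1N\sum_{i=1}^N\Big(\frac{\delta_i}{p_i}-1\Big)\dot\psi(\hat{\bm\beta}_{\rm QLE}^T\bm x_i)\bm x_i\bm x_i^T}_{\text{(II)}},$$
recalling that $N^{-1}\sum_i\dot\psi(\hat{\bm\beta}_{\rm QLE}^T\bm x_i)\bm x_i\bm x_i^T=\Sigma_\psi(\hat{\bm\beta}_{\rm QLE})$. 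I would then show that each term is $o_{P|\mathcal{F}_N}(1)$ in spectral norm; since the dimension $d$ is fixed, it suffices to control entries one at a time.

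For term (I), I would invoke the $m_2(\bm x_i)$-Lipschitz condition of Assumption~4 on $\dot\psi(\bm\beta^T\bm x_i)\bm x_i\bm x_i^T$, valid on the event $\hat{\bm\beta}_{\rm QLE}+\bm s_r\in\Lambda$, which holds with probability approaching one because $\bm s_r\to0$ and $\hat{\bm\beta}_{\rm QLE}$ is interior to $\Lambda$ by Assumption~1. This yields $\|\text{(I)}\|_s\le\|\bm s_r\|\cdot N^{-1}\sum_i(\delta_i/p_i)m_2(\bm x_i)$. The conditional mean of the averaged weight is $N^{-1}\sum_i m_2(\bm x_i)$, which is $O_P(1)$ by the law of large numbers since $E\{m_2(\bm x_1)\}<\infty$; because the summands are nonnegative, conditional Markov then gives $N^{-1}\sum_i(\delta_i/p_i)m_2(\bm x_i)=O_{P|\mathcal{F}_N}(1)$. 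Combined with $\|\bm s_r\|=o_P(1)$, term (I) is $o_{P|\mathcal{F}_N}(1)$.

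For term (II), observe that $E\{\text{(II)}\mid\mathcal{F}_N\}=\bm 0$ since $E(\delta_i\mid\mathcal{F}_N)=p_i$, so I would control it through its conditional variance. Using $\mathrm{var}(\delta_i\mid\mathcal{F}_N)=p_i(1-p_i)$ and the conditional independence of the $\delta_i$, the conditional variance of the $(a,b)$ entry is bounded by $N^{-2}\sum_i p_i^{-1}\{\dot\psi(\hat{\bm\beta}_{\rm QLE}^T\bm x_i)\}^2\|\bm x_i\|^4$. Factoring out $\max_i(Np_i)^{-1}=O_P(r^{-1})$ from Assumption~5 leaves $N^{-1}\sum_i\dot\psi^2(\hat{\bm\beta}_{\rm QLE}^T\bm x_i)\|\bm x_i\|^4$, which is $O_P(1)$ by Hölder's inequality together with $E\{\dot\psi^6\}<\infty$ and $E\|\bm x_1\|^9<\infty$ from Assumption~2. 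Hence the conditional variance is $O_P(r^{-1})\to0$, and conditional Chebyshev makes term (II) $o_{P|\mathcal{F}_N}(1)$.

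The main obstacle is term (II): the inverse weights $1/p_i$ threaten to inflate the variance, and the argument only closes because Assumption~5 forces $\max_i(Np_i)^{-1}=O_P(r^{-1})$, turning the naive $O(1)$ variance bound into one of order $r^{-1}$. Care is also needed to keep every estimate at the level of conditional (given $\mathcal{F}_N$) probability statements, so that the $O_P(1)$ bounds on the full-data averages and the conditional Chebyshev and Markov bounds combine correctly into the claimed $o_{P|\mathcal{F}_N}(1)$ conclusion.
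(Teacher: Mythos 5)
Your proposal is correct and follows essentially the same route as the paper: the identical two-term decomposition (your term (II) is the paper's $\Upsilon_I$, handled by conditional Chebyshev after factoring out $\max_i(Np_i)^{-1}=O_P(r^{-1})$ and bounding $N^{-1}\sum_i\dot\psi^2(\hat{\bm\beta}_{\rm QLE}^T\bm x_i)\|\bm x_i\|^4$ via H\"older and Assumption~2; your term (I) is the paper's $\Upsilon_{II}$, handled by the $m_2$-Lipschitz bound and conditional Markov). Your remark that the Lipschitz step requires $\hat{\bm\beta}_{\rm QLE}+\bm s_r\in\Lambda$ with probability approaching one is a small point the paper leaves implicit, but otherwise the arguments coincide.
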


\begin{proof}
Direct calculation shows that conditionally on $\mathcal{F}_N$,
\[\E\left\{\frac{1}{N}\sum_{i=1}^N\frac{\delta_i}{p_i}\dot{\psi}(\hat{\bm\beta}_{\rm QLE}^T\bm x_i)\bm x_i\bm x_i^T\bigg|\mathcal{F}_N\right\}=\frac{1}{N}\sum_{i=1}^N\dot{\psi}(\hat{\bm\beta}_{\rm QLE}^T\bm x_i)\bm x_i\bm x_i^T.\]
Let $\Sigma_{\psi,S}(\hat{\bm\beta}_{\rm QLE})=N^{-1}\sum_{i=1}^N{\delta_i}{p_i}^{-1}\dot{\psi}(\hat{\bm\beta}_{\rm QLE}^T\bm x_i)\bm x_i\bm x_i^T$, for any component $\Sigma_{\psi,S}(\hat{\bm\beta}_{\rm QLE})^{j_1j_2}$ of $\Sigma_{\psi,S}(\hat{\bm\beta}_{\rm QLE})$ where $1\le j_1,j_2\le p$,
\begin{align*}
  &{ E}\left\{\Sigma_{\psi,S}(\hat{\bm\beta}_{\rm QLE})^{j_1j_2}-\Sigma_\psi(\hat{\bm\beta}_{\rm QLE})^{j_1j_2}\Big|\mathcal{F}_N\right\}^2\\
  =&\sum^N_{i=1}\frac{p_i(1-p_i)}{p_i^2}
     \left\{\frac{\dot{\psi}(\hat{\bm\beta}_{\rm QLE}^T\bm x_i)x_{ij_1} x_{ij_2}}{N}\right\}^2\\
  =&\red\sum^N_{i=1}\frac{1}{p_i}
     \left\{\frac{\dot{\psi}(\hat{\bm\beta}_{\rm QLE}^T\bm x_i)x_{ij_1} x_{ij_2}}{N}\right\}^2
     -\sum_{i=1}^N\left\{\frac{\dot{\psi}(\hat{\bm\beta}_{\rm QLE}^T\bm x_i)x_{ij_1} x_{ij_2}}{N}\right\}^2\\
  \le&\red\sum^N_{i=1}\frac{1}{p_i}
     \left\{\frac{\dot{\psi}(\hat{\bm\beta}_{\rm QLE}^T\bm x_i)x_{ij_1} x_{ij_2}}{N}\right\}^2\\
\le&\red\left(\max_{i=1,\ldots,N}\frac{1}{Np_i}\right)\sum_{i=1}^N\frac{\dot{\psi}^2(\hat{\bm\beta}_{\rm QLE}^T\bm x_i)(x_{ij_1} x_{ij_2})^2}{N},
\\
  \le&\red\left(\max_{i=1,\ldots,N}\frac{1}{Np_i}\right)\sum_{i=1}^N\frac{\dot{\psi}^2(\hat{\bm\beta}_{\rm QLE}^T\bm x_i)\|\bm x_i\|^4}{N},
\end{align*}
{\red where the last equality is because {\purple $(x_{ij_1} x_{ij_2})^2\le x_{ij_1}^4+x_{ij_2}^4\le \|\bm x_i\|^4$}. } %
Utilizing (i), (iii) and (iv) in Assumption 2, it can be shown that
\[\sum_{i=1}^N\frac{\dot{\psi}^2(\hat{\bm\beta}_{\rm QLE}^T\bm x_i)\|\bm x_i\|^{\red 4}}{N}=O_P(1),\]
 by arguments similar to those used for Lemma \ref{lem:lem3}.
 Thus we have
 \[{ E}\left(\Sigma_{\psi,S}(\hat{\bm\beta}_{\rm QLE})^{j_1j_2}-\Sigma_\psi(\hat{\bm\beta}_{\rm QLE})^{j_1j_2}\Big|\mathcal{F}_N\right)^2=O_P(r^{-1})\]
 from Assumption 5.
 {\red Note that $\Sigma_\psi(\hat{\bm\beta}_{\rm QLE})^{j_1j_2}=O_P(1)$ under Assumption 3. }%
It is proved that
\[\Upsilon_{I}:=\frac{1}{N}\sum_{i=1}^N\frac{\delta_i}{{p}_i}\dot{\psi}(\hat{\bm\beta}_{\rm QLE}^T\bm x_i)\bm x_i\bm x_i^T-\frac{1}{N}\sum_{i=1}^N\dot{\psi}(\hat{\bm\beta}_{\rm QLE}^T\bm x_i)\bm x_i\bm x_i^T=o_{P|\mathcal{F}_N}(1),\]
from Chebyshev's inequality.

It remains to show that
\[\Upsilon_{II}:=\frac{1}{N}\sum_{i=1}^N\frac{\delta_i}{{p}_i}\{\dot{\psi}((\hat{\bm\beta}_{\rm QLE}+\bm s_r)^T\bm x_i)-\dot{\psi}(\hat{\bm\beta}_{\rm QLE}^T\bm x_i)\}\bm x_i\bm x_i^T=o_{P|\mathcal{F}_N}(1).\]
 {\purple According to Wely's theorem \citep[][Theorem 4.3.1]{Roger2013matrix} %
\[\begin{split}
\|\Upsilon_{II}\|_s&\le \frac{1}{N}\sum_{i=1}^N\frac{\delta_i}{{p}_i}\|\dot{\psi}((\hat{\bm\beta}_{\rm QLE}+\bm s_r)^T\bm x_i)-\dot{\psi}(\hat{\bm\beta}_{\rm QLE}^T\bm x_i)\}\bm x_i\bm x_i^T\|_s
\\
&\le\frac{1}{N}\sum_{i=1}^N\|\{\dot{\psi}((\hat{\bm\beta}_{\rm QLE}+\bm s_r)^T\bm x_i)-\dot{\psi}(\hat{\bm\beta}_{\rm QLE}^T\bm x_i)\}\bm x_i\bm x_i^T\|_s\\
&\le \frac{1}{N}\sum_{i=1}^N\frac{\delta_i}{{p}_i} m_2(\bm x_i) \|\bm s_r\| =\red o_{P}(1),
\end{split}\]
under Assumption 4. } %
Thus the result follows from the fact that $\|\Upsilon_{II}\|_s\ge0$.

\end{proof}

Now we are ready to prove Theorems 1 and 2.
\begin{proof}
The estimator $\tilde{\bm\beta}$ is the solution of
\begin{equation*}
  Q^*(\bm\beta)=\sum_{i=1}^N\frac{\delta_i}{p_i}\Big\{y_i-\psi(\bm\beta^T\bm x_i)\Big\}\bm x_i.
\end{equation*}

Note that
\begin{align}
 \E\left\{\frac{1}{N}Q^*({\bm\beta})\bigg|\mathcal{F}_N\right\}&=\frac{1}{N}Q({\bm\beta}), \label{eq:qstar1}\\
\nonumber \text{var}\left\{\frac{1}{N}Q^*({\bm\beta})\bigg|\mathcal{F}_N\right\}&=\frac{1}{N^2}\sum_{i=1}^N\frac{\{y_i-\psi({\bm\beta}^T\bm x_i)\}^2\bm x_i\bm x_i^T}{p_i}-\frac{1}{N^2}\sum_{i=1}^N\{y_i-\psi({\bm\beta}^T\bm x_i)\}^2\bm x_i\bm x_i^T\\
&\le \frac{2}{N^2}\sum_{i=1}^N\frac{\{y_i-\psi({\bm\beta}^T\bm x_i)\}^2\bm x_i\bm x_i^T}{p_i}
=O_{P|\mathcal{F}_N}(r^{-1}), \label{eq:qstar2}
\end{align}
by using the similar arguments in Lemma \ref{lem:lem3} under Assumptions 1, 2 and 5.

Therefore, as $r\rightarrow\infty$, $N^{-1}Q^*(\bm\beta)-N^{-1}Q(\bm\beta)\rightarrow0$ for all $\bm\beta\in\Lambda$ in conditional probability given $\mathcal{F}_N$.
Note that the parameter space is compact and $\hat{\bm\beta}_{\rm QLE}$ is the unique solution of $N^{-1}Q(\bm\beta)=\bm 0$ under Assumption 3  \citep[cf.][]{Tzavelas1998note}. Thus, from Theorem 5.9 and its remark of \cite{Vaart2000Asymptotic},  we have
\begin{equation}\label{eq:asy1}
  \|\tilde{\bm\beta}-\hat{\bm\beta}_{\rm QLE}\|=o_{P\mid\mathcal{F}_N}(1),
\end{equation}
as $N\rightarrow\infty, r\rightarrow\infty$, conditionally on $\mathcal{F}_N$ in probability.

By Taylor's expansion,
\begin{align}\label{eq:taylor}
  Q^*(\tilde{\bm\beta})
  &=Q^*(\hat{\bm\beta}_{\rm QLE})
    +\sum_{i=1}^N
    \frac{\delta_i}{p_i}\left(
    \begin{array}{c}
      \dot{\psi}(\acute{\bm{\beta}}_{(1)}^T\bm x_i)\red x_{i1} \\
      \vdots \\
      \dot{\psi}(\acute{\bm{\beta}}_{(d)}^T\bm x_i)\red x_{id} \\
    \end{array}
                        \right)
    \bm x_i^T(\tilde{\bm\beta}-\hat{\bm\beta}_{\rm QLE}),
\end{align} %
where all $\acute{\bm \beta}_{(1)},\ldots,\acute{\bm \beta}_{(d)}$ lie between $\hat{\bm\beta}_{\rm QLE}$ and $\tilde{\bm\beta}$.

{\red From (\ref{eq:asy1}), %
  for each $j=1,\ldots,d$, $\acute{\bm \beta}_{(j)}$ can be written as $\hat{\bm\beta}_{\rm QLE}+{\bm s_{(j)}}$ for some ${\bm s_{(j)}}=o_{P|\mathcal{F}_N}(1)$. Thus, from Lemma \ref{lem:lem4}, {\purple %
  \[{N}^{-1}\sum_{i=1}^N\frac{\delta_i}{p_i}\dot{\psi}(\acute{\bm \beta}_{(j)}^T\bm x_i)\bm x_i\bm x_i^T-{N}^{-1}\sum_{i=1}^N\dot{\psi}(\hat{\bm\beta}_{\rm QLE}^T\bm x_i)\bm x_i\bm x_i^T=o_{P|\mathcal{F}_N}(1),\]
  which implies that for every $j$,}
\[\frac{1}{N}\sum_{i=1}^N
    \frac{\delta_i}{p_i}\left(
      \begin{array}{c}
        \vdots \\
        \dot{\psi}(\acute{\bm{\beta}}_{(j)}^T\bm x_i) x_{ij} \\
        \vdots \\
      \end{array}
                        \right)
    \bm x_i^T= \frac{1}{N}\sum_{i=1}^N\left(
      \begin{array}{c}
        \vdots \\
        \dot{\psi}(\hat{\bm{\beta}}_{\rm QLE}^T\bm x_i) x_{ij} \\
        \vdots \\
      \end{array}
                        \right)
    \bm x_i^T+o_{P|\mathcal{F}_N}(1).\]}
{\red Therefore, %
\[\frac{1}{N}\sum_{i=1}^N
    \frac{\delta_i}{p_i}\left(
      \begin{array}{c}
        \dot{\psi}(\acute{\bm{\beta}}_{(1)}^T\bm x_i) x_{i1} \\
        \vdots \\
        \dot{\psi}(\acute{\bm{\beta}}_{(d)}^T\bm x_i) x_{id} \\
      \end{array}
                        \right)
    \bm x_i^T= \frac{1}{N}\sum_{i=1}^N\dot{\psi}(\hat{\bm\beta}_{\rm QLE}^T\bm x_i)\bm x_i\bm x_i^T+o_{P|\mathcal{F}_N}(1).\]}
{\blue By the definition of the quasi-likelihood estimator, the left-hand-side of (\ref{eq:taylor}) is zero. Thus,
\begin{equation}\label{eq:asy2}
\begin{split}
  \tilde{\bm\beta}-\hat{\bm\beta}_{\rm QLE}&=-\Sigma_\psi(\hat{\bm\beta}_{\rm QLE})^{-1}\frac{1}{N}Q^*(\hat{\bm\beta}_{\rm QLE})+o_{P|\mathcal{F}_N}(\|\tilde{\bm\beta}-\hat{\bm\beta}_{\rm QLE}\|)\\
  &=-\Sigma_\psi(\hat{\bm\beta}_{\rm QLE})^{-1}V_c^{1/2}V_c^{-1/2}\frac{1}{N}Q^*(\hat{\bm\beta}_{\rm QLE})+o_{P|\mathcal{F}_N}(\|\tilde{\bm\beta}-\hat{\bm\beta}_{\rm QLE}\|)\\
  &=O_{P|\mathcal{F}_N}(r^{-1/2})+o_{P|\mathcal{F}_N}(\|\tilde{\bm\beta}-\hat{\bm\beta}_{\rm QLE}\|),
\end{split}
\end{equation}
since $\Sigma_\psi(\hat{\bm\beta}_{\rm QLE})^{-1}=O_{P|\mathcal{F}_N}(1)$ under Assumption 3, $V_c^{1/2}=O_{P|\mathcal{F}_N}(r^{-1/2})$ from (\ref{eq:qstar2}) and $V_c^{-1/2}\frac{1}{N}Q^*(\hat{\bm\beta}_{\rm QLE})=O_{P|\mathcal{F}_N}(1)$ from Lemma 1.}
Therefore, $\tilde{\bm\beta}-\hat{\bm\beta}_{\rm QLE}+o_{P|\mathcal{F}_N}(\|\tilde{\bm\beta}-\hat{\bm\beta}_{\rm QLE}\|)=O_{P|\mathcal{F}_N}(r^{-1/2})$,
which implies that
\[\tilde{\bm\beta}-\hat{\bm\beta}_{\rm QLE}=O_{P|\mathcal{F}_N}(r^{-1/2}).\]%

For Theorem 2, applying (\ref{eq:asy2}),
 as $r\rightarrow\infty$, conditional on $\mathcal{F}_N$, it holds that
\[V^{-1/2}(\tilde{\bm\beta}-\hat{\bm\beta}_{\rm QLE})=-V^{-1/2}\Sigma_\psi(\hat{\bm\beta}_{\rm QLE})^{-1}V_c^{1/2}V_c^{-1/2}N^{-1}Q^*(\hat{\bm\beta}_{\rm QLE})+o_{P|\mathcal{F}_N}(1).\]
Thus, the result follows from Lemma \ref{lem:lem3} and Slutsky's theorem.
\end{proof}

\subsection{Proofs of~{Theorems~3~and~4}}

\begin{proof}
 If some elements of $\{\hbar_i\}_{i=1}^N$ are equal to zero, we can set the corresponding subsampling  probabilities as zero and then consider the subsampling probabilities among the rest. Thus, without loss of generality, we assume all $\hbar_i>0$.

In order to minimize the asymptotic mean square error, $\mathrm{tr}(V)$ in (6), it is sufficient to solve the following optimization problem:
\begin{eqnarray}\label{eq:opt-q}
  \min & & \tilde H=\sum_{i=1}^N\text{tr}\left[\frac{1}{p_i}
    \{y_i-{\psi}(\hat{\bm\beta}_{\rm QLE}^T\bm x_i)\}^2\|\Sigma_\psi(\hat{\bm\beta}_{\rm QLE})^{-1}\bm x_i\|^2\right] \\
 \nonumber  \text{s.t} & & \sum_{i=1}^N p_i=r, \qquad
0\le p_i\le 1 \quad\text{for}\quad i=1,\ldots,N.
\end{eqnarray}

 For brevity, we {denote  $\hbar_i^{\rm MV}:=|y_i-{\psi}(\hat{\bm\beta}_{\rm QLE}^T\bm x_i)|\|\Sigma_\psi(\hat{\bm\beta}_{\rm QLE})^{-1}\bm x_i\|$ as $\hbar_i$ for $i=1,\ldots,N$.
Without loss of generality, we further assume that $\hbar_1\le\hbar_2\le\cdots\le\hbar_N$.

{From the Cauchy-Schwarz inequality,
\begin{align*}
  \tilde{H}&=\sum_{i=1}^N\left[\frac{1}{p_i}\{y_i-{\psi}(\hat{\bm\beta}_{\rm QLE}^T\bm x_i)\}^2
    \|\Sigma_\psi(\hat{\bm\beta}_{\rm QLE})^{-1}\bm x_i\|^2\right]\\
  &=\frac{1}{r}\left(\sum_{j=1}^Np_j\right)\left(\sum_{i=1}^N\left[p_i^{-1}\hbar_i^2\right]\right)
  \ge\frac{1}{r}\left[\sum_{i=1}^N\hbar_i\right]^2,
\end{align*}
where  the equality in it holds if and only if  $p_i\propto \hbar_i$.}
Therefore, when $p_i= r\hbar_i/(\sum_{j=1}^N\hbar_j)$ satisfies that $p_i\le 1$ for all $i=1,\ldots,N$, $p_i$'s give the optimal solution.

Otherwise, we can  easily see that $p_N=1$ when  $r\hbar_N/(\sum_{j=1}^N\hbar_1)>1$.
Thus, the original problem (\ref{eq:opt-q}) turns into finding $p_1,\ldots,p_{N-1}$ which solve the following optimization problem:
\begin{eqnarray*}
  \min & & \sum_{i=1}^{N-1}\text{tr}\left[\frac{1}{p_i}
    \{y_i-{\psi}(\hat{\bm\beta}_{\rm QLE}^T\bm x_i)\}^2\|\Sigma_\psi(\hat{\bm\beta}_{\rm QLE})^{-1}\bm x_i\|^2\right] \\
 \nonumber  \text{s.t} & & \sum_{i=1}^{N-1} p_i=r-1,\qquad
0\le p_i\le 1 \quad\text{for}\quad i=1,\ldots,N-1.
\end{eqnarray*}
Obviously, this is a typical recursion problem, and the optimal solution is $\min\tilde{H}=\sum_{i=N-k+1}^N\hbar_i^2+(r-k)^{-1}(\sum_{i=1}^{N-k}\hbar_i)^2$, for some $k$ satisfying
\[\frac{(r-k+1)\hbar_{N-k+1}}{\sum_{i=1}^{N-k+1}\hbar_i}\ge 1\quad\text{and}\quad\frac{(r-k)\hbar_{N-k}}{\sum_{i=1}^{N-k}\hbar_i}< 1.\]

Suppose that $M$ exists such that
 \begin{equation*}
  \max_{i=1,\ldots,N}\frac{\hbar_i\wedge M}{\sum_{j=1}^N\hbar_j\wedge M}=\frac{1}{r},
   \end{equation*}
   and $\hbar_{N-k}< M\le \hbar_{N-k+1}$.
   It follows that $\sum_{i=1}^{N-k}\hbar_i=(r-k)M$.
   Therefore $\min\tilde{V}=\sum_{i=N-k+1}^N\hbar_i^2+(r-k)M^2$.

  Substituting $p_i^{\rm MV}={(\sum_{j=1}^N\hbar_j\wedge M)^{-1}}r{(\hbar_i\wedge M)}$ into \eqref{eq:opt-q}, the following equation holds.
  \[\begin{split}
  \tilde H &=\sum_{i=N-k+1}^N\hbar_i^2+\frac{1}{r}\left(\sum_{i=1}^{N-k}\hbar_i\right)^2+\frac{1}{r}\left(\sum_{i=N-k+1}^NM\right)\left(\sum_{i=1}^{N-k}\hbar_i\right)\\
  &=\sum_{i=N-k+1}^N\hbar_i^2 + \frac{1}{r}(r-k)^2M^2+\frac{1}{r}(r-k)kM^2\\
  &=\sum_{i=N-k}^N\hbar_i^2+(r-k)M^2 =\min\tilde{H}.
  \end{split}\]
  Thus, $p_i^{\rm MV}$ is the optimal solution of \eqref{eq:opt-q}.

  Now we will show that $M$ exists and satisfying $\hbar_{N-k}< M\le \hbar_{N-k+1}$.
  Note that $k$ satisfies
  \[\frac{(r-k+1)\hbar_{N-k+1}}{\sum_{i=1}^{N-k+1}\hbar_i}\ge 1\quad\text{and}\quad\frac{(r-k)\hbar_{N-k}}{\sum_{i=1}^{N-k}\hbar_i}< 1.\]
  The following inequality holds by fetching $M=\hbar_{N-k+1}$,
  \[\frac{(r-k+1)\hbar_{N-k+1}+(k-1)M}{\sum_{i=1}^{N-k+1}\hbar_i+(k-1)M}\ge 1.\]
  This implies $(\hbar_N\wedge M)/(\sum_{j=1}^N\hbar_N\wedge M)\ge 1/r$.
  Similarly, we know $(\hbar_N\wedge M)/(\sum_{j=1}^N\hbar_j\wedge M)<1/r$ by fetching $M=\hbar_{N-k}$.
  Thus the assertion $M$ exists and satisfying $\hbar_{N-k}< M\le \hbar_{N-k+1}$ follows by the facts that $\max_{i=1,\ldots,N}{\hbar_i\wedge M}/{(\sum_{j=1}^N\hbar_j\wedge M)}$ is an continuous function on $M$ conditioning on $\hbar_1,\ldots,\hbar_N$.

  Also note that for any $\hbar_N\ge M'>M$, $M'\wedge \hbar_N\ge M\wedge \hbar_N$ and $(M'/M)\sum_{i=1}^N\hbar_i\wedge M\ge \sum_{i=1}^N\hbar_i\wedge M'$.
  Thus $\hbar_N\wedge M/(\sum_{i=1}^N\hbar_i\wedge M)$ is nondecreasing on $M\in(\hbar_1,\hbar_N)$.
  Therefore
  \begin{equation*}
  \max_{i=1,\ldots,N}\frac{\hbar_i\wedge M}{\sum_{j=1}^N\hbar_j\wedge M}=\frac{1}{r},
   \end{equation*}
   indicates that $\hbar_{N-k}< M\le \hbar_{N-k+1}$.}

 The proof for Theorem~4 is similar, so we omit the details.
\end{proof}

\subsection{Proofs of {Theorems~5 and 6}}
{\blue Since $\tilde{p}_i^{\rm sos}\ge \varrho r/N$, we have $\max_{i=1,\ldots,N}(Np_i)^{-1}=O_P(r^{-1})$. %
  Theorem 1 indicates Theorem~5.
Thus it remains to show that  Theorem 6 holds.}
\begin{proof}
Note that $r_0r^{-1/2}\rightarrow0$, the contribution of the first step subsample to the {estimation equation is $o_{P\mid \mathcal{F}_N}(r^{-1/2})$. %
Thus, we can focus on the subsamples drawn in the second step only. Here we reuse the notation of $Q^*(\bm\beta)$ to represent the corresponding estimation equation. To be precise,}
\begin{equation*}
Q^*(\bm\beta)=\sum_{i=1}^{{N}} \frac{\delta_i}{\tilde p_i^{\rm sos}\wedge 1}[y_i-\psi(\bm\beta^T\bm x_i)]\bm x_i,
\end{equation*}
where %
$\tilde{p}_i^{\mathrm{sos}}$ is defined in (21) %
and $\delta_i=1$ if and only if the corresponding data point is selected in the subsample set in second step.

From Lemma \ref{lem:lem3}, {conditionally on $\tilde{\bm\beta}_0,\mathcal{F}_N$}, {it holds that}%
\[N^{-1}\tilde V_c^{-1/2}Q^*(\hat{\bm\beta}_{\rm QLE})\rightarrow N(\bm 0,I),\]
in distribution, where
\[\tilde V_c=\frac{1}{N^2}\sum_{i=1}^N\frac{1}{\tilde{p}_i^{\rm sos}\wedge 1}\{y_i-\psi(\hat{\bm\beta}_{\rm QLE}^T\bm x_i)\}^2\bm x_i\bm x_i^T-\frac{1}{N^2}\sum_{i=1}^N\{y_i-\psi(\hat{\bm\beta}_{\rm QLE}^T\bm x_i)\}^2\bm x_i\bm x_i^T,\]
since $\hat\Psi$ is a consistency estimator of $N^{-1}\sum_{i=1}^N|y_i-\psi(\tilde{\bm\beta}_0^T\bm x_i)|h(\bm x_i)$.

Let $\tilde\hbar_i^{\rm os}$ have the same expression as $\hbar^{\rm os}$ defined in (19) except that $\hat{\bm\beta}_{\rm QLE}$  is replaced by $\tilde{\bm\beta}_0$. For clarity, we consider the MVc case first. {  Recall  $\|A\|_s$ denotes the spectral norm of matrix $A$.
The distance between $\tilde V_c$ and $V_c$ can be quantified as
\begin{equation}\label{eq:vdistance}
\begin{split}
\|\tilde V_c- V_c\|_s&= \left\|\frac{1}{N^2}\sum_{i=1}^N\frac{1}{{p}_i^{\rm sos}\wedge 1}\{y_i-\psi(\hat{\bm\beta}_{\rm QLE}^T\bm x_i)\}^2\bm x_i\bm x_i^T\left(\frac{{p}_i^{\rm sos}\wedge 1}{\tilde{p}_i^{\rm sos}\wedge 1}-1\right)\right\|_s\\
&\le\frac{1}{N^2}\sum_{i=1}^N\frac{1}{{p}_i^{\rm sos}\wedge 1}\left|\frac{{p}_i^{\rm sos}\wedge 1}{\tilde{p}_i^{\rm sos}\wedge 1}-1\right|\{y_i-\psi(\hat{\bm\beta}_{\rm QLE}^T\bm x_i)\}^2\|\bm x_i\|^2\\
&\le \left(\max_{i=1,\ldots,N}\frac{1}{N{p}_i^{\rm sos}}\right)\sum_{i=1}^N\frac{1}{N}\left|\frac{{p}_i^{\rm sos}\wedge 1}{\tilde{p}_i^{\rm sos}\wedge 1}-1\right|\{y_i-\psi(\hat{\bm\beta}_{\rm QLE}^T\bm x_i)\}^2\|\bm x_i\|^2\\
&\le \left(\varrho r\right)^{-1}\sum_{i=1}^N\frac{1}{N}\left|\frac{{p}_i^{\rm sos}\wedge 1}{\tilde{p}_i^{\rm sos}\wedge 1}-1\right|\{y_i-\psi(\hat{\bm\beta}_{\rm QLE}^T\bm x_i)\}^2\|\bm x_i\|^2.
  \end{split}
\end{equation}

Simple calculation yields,
\begin{small}
\begin{eqnarray*}
\nonumber  &~& \sum_{i=1}^N\left|\frac{{p}_i^{\rm sos}\wedge 1}{\tilde{p}_i^{\rm sos}\wedge 1}-1\right|\frac{\{y_i-{\psi}(\hat{\bm\beta}_{\rm QLE}^T\bm x_i)\}^2\|\bm x_i\|^2}{N}\\
\nonumber  &\le& \sum_{i=1}^N \frac{\frac{\sum_{i=1}^N\hbar_i}{\sum_{i=1}^N\tilde\hbar_i}\left|\tilde\hbar_i-\hbar_i\right| + \left|\frac{\sum_{i=1}^N\hbar_i}{\sum_{i=1}^N\tilde\hbar_i}-1\right|\hbar_i}{\varrho  N^{-1}\sum_{i=1}^N\hbar_i}\frac{\{y_i-{\psi}(\hat{\bm\beta}_{\rm QLE}^T\bm x_i)\}^2\|\bm x_i\|^2}{N}\\
\nonumber  &=&\frac{\sum_{i=1}^N\hbar_i}{\sum_{i=1}^N\tilde\hbar_i}\sum_{i=1}^N \frac{\left|\tilde\hbar_i-\hbar_i\right|\{y_i-{\psi}(\hat{\bm\beta}_{\rm QLE}^T\bm x_i)\}^2\|\bm x_i\|^2}{N \varrho  N^{-1}\sum_{i=1}^N\hbar_i}+\left|\frac{\sum_{i=1}^N\hbar_i}{\sum_{i=1}^N\tilde\hbar_i}-1\right|\sum_{i=1}^N\frac{|y_i-{\psi}(\hat{\bm\beta}_{\rm QLE}^T\bm x_i)|^3\|\bm x_i\|^3}{N{\varrho  N^{-1}\sum_{i=1}^N\hbar_i}}\\
  &\le& \left|\frac{\sum_{i=1}^N\hbar_i}{\sum_{i=1}^N\tilde\hbar_i}-1\right|\frac{1}{\varrho  N^{-1}\sum_{i=1}^N\hbar_i}\sum_{i=1}^N\frac{|y_i-{\psi}(\hat{\bm\beta}_{\rm QLE}^T\bm x_i)|^3\|\bm x_i\|^3}{N} \\
  &+&\frac{\sum_{i=1}^N\hbar_i}{\sum_{i=1}^N\tilde\hbar_i}\frac{1}{\varrho  N^{-1}\sum_{i=1}^N\hbar_i}\sum_{i=1}^N\frac{|{\psi}(\hat{\bm\beta}_{\rm QLE}^T\bm x_i)-{\psi}(\tilde{\bm\beta}_{0}^T\bm x_i)|\{y_i-{\psi}(\hat{\bm\beta}_{\rm QLE}^T\bm x_i)\}^2\|\bm x_i\|^3}{N}, \\
\end{eqnarray*}
\end{small}
  where the first inequality comes from the facts
  \begin{align}\label{eq:ax-thm6-2}
\nonumber \left|\frac{{p}_i^{\rm sos}\wedge 1}{\tilde{p}_i^{\rm sos}\wedge 1}-1\right|&\le\left|\frac{(1-\varrho)\frac{r\hbar_i}{\sum_{i=1}^N\hbar_i}+\varrho\frac{r}{N}-(1-\varrho)\frac{r\tilde\hbar_i}{\sum_{i=1}^N \tilde\hbar_i}-\varrho\frac{r}{N}}{[(1-\varrho)\frac{r \hbar_i}{\sum\hbar_i}+\varrho\frac{r}{N}]\wedge 1}\right|\\
\nonumber&\le (1-\varrho)\left|\frac{\frac{\tilde\hbar_i}{\sum_{i=1}^N\tilde\hbar_i}-\frac{\hbar_i}{\sum_{i=1}^N\hbar_i}}{\varrho {N}^{-1}}\right|\\
\nonumber&\le \left|\frac{\left|\frac{\tilde\hbar_i}{\sum_{i=1}^N\tilde\hbar_i}-\frac{\hbar_i}{\sum_{i=1}^N\tilde\hbar_i}\right|+\left|\frac{\hbar_i}{\sum_{i=1}^N\tilde\hbar_i}-\frac{\hbar_i}{\sum_{i=1}^N\hbar_i}\right|}{\varrho {N}^{-1}}\right|\\
&=\frac{\frac{\sum_{i=1}^N\hbar_i}{\sum_{i=1}^N\tilde\hbar_i}\left|\tilde\hbar_i-\hbar_i\right| + \left|\frac{\sum_{i=1}^N\hbar_i}{\sum_{i=1}^N\tilde\hbar_i}-1\right|(\hbar_i)}{\varrho  N^{-1}\sum_{i=1}^N\hbar_i}.
\end{align}

To well exam the distance between $\tilde V_c$ and $V_c$, we will show the following equalities hold:{\purple
\begin{small}
\begin{align}
&\left|\frac{\sum_{i=1}^N\hbar_i}{\sum_{i=1}^N\tilde\hbar_i}-1\right|\frac{1}{\varrho  N^{-1}\sum_{i=1}^N\hbar_i}\sum_{i=1}^N\frac{|y_i-{\psi}(\hat{\bm\beta}_{\rm QLE}^T\bm x_i)|^3\|\bm x_i\|^3}{N}= o_P(1), \label{eq:ax-thm6-2-1}\\
&\frac{\sum_{i=1}^N\hbar_i}{\sum_{i=1}^N\tilde\hbar_i}\frac{1}{\varrho  N^{-1}\sum_{i=1}^N\hbar_i}\sum_{i=1}^N\frac{|{\psi}(\hat{\bm\beta}_{\rm QLE}^T\bm x_i)-{\psi}(\tilde{\bm\beta}_{0}^T\bm x_i)|\{y_i-{\psi}(\hat{\bm\beta}_{\rm QLE}^T\bm x_i)\}^2\|\bm x_i\|^3}{N}=o_P(1). \label{eq:ax-thm6-2-2}
\end{align}
\end{small}}

 Now we begin with showing (\ref{eq:ax-thm6-2-1}). For the sake of clarity, we first consider the case that $\tilde\hbar_i^{\rm os}$ and $\hbar_i^{\rm os}$ are selected as $\tilde\hbar_i^{\rm MVc}$ and $\hbar_i^{\rm MVc}$ respectively.
{\purple According to the triangle inequality,
\begin{align}\label{eq:ax-thm6}
\nonumber&\quad\left|\frac{1}{N}\sum_{i=1}^N|y_i-\psi(\tilde{\bm\beta}_0^T\bm x_i)|\|\bm x_i\|-\frac{1}{N}\sum_{i=1}^N|y_i-\psi(\hat{\bm\beta}_{\rm QLE}^T\bm x_i)|\|\bm x_i\|\right|\\
\nonumber&\le\frac{1}{N}\sum_{i=1}^N\left||y_i-\psi(\tilde{\bm\beta}_0^T\bm x_i)|-|y_i-\psi(\hat{\bm\beta}_{\rm QLE}^T\bm x_i)|\right|\|\bm x_i\|\\
\nonumber&\le \frac{1}{N}\sum_{i=1}^N |\psi(\tilde{\bm\beta}_0^T\bm x_i)-\psi(\hat{\bm\beta}_{\rm QLE}^T\bm x_i)|\|\bm x_i\|\\
\nonumber&\le \sqrt{\frac{1}{N}\sum_{i=1}^N |\psi(\tilde{\bm\beta}_0^T\bm x_i)-\psi(\hat{\bm\beta}_{\rm QLE}^T\bm x_i)|^2}\sqrt{\frac{1}{N}\sum_{i=1}^N\|\bm x_i\|^2 }\\
&\le \sqrt{\frac{1}{N}\sum_{i=1}^N m_1^2(\bm x_i)\|\tilde{\bm\beta}_0-\hat{\bm\beta}_{\rm QLE}\|^2}\sqrt{\frac{1}{N}\sum_{i=1}^N\|\bm x_i\|^2 }
=o_P(1),
\end{align}}
where the  last equality holds due to Assumptions 2, 4 by using the holder inequality and the fact that $\|\tilde{\bm\beta}_0-\hat{\bm\beta}_{\rm QLE}\|=o_{P\mid\mathcal{F}_N}(1)=o_P(1)$ \citep[see][Theorem 3.3]{xiong2008some}.
Therefore,
\begin{equation}\label{eq:s20}
{\sum_{i=1}^N\hbar_i}={\sum_{i=1}^N\tilde\hbar_i}+o_P(1).
\end{equation}%
Similarly it also can be shown
\[\frac{1}{N}\sum_{i=1}^N|\psi({\bm\beta}_{t}^T\bm x_i)-\psi(\tilde{\bm\beta}_{0}^T\bm x_i)|\|\bm x_i\|=o_P(1),\]
{\purple by noting that $\hat{\bm\beta}_{\rm QLE}$ is a consistent estimator of ${\bm\beta}_{t}$. }
Thus,
\[
\frac{1}{N}\sum_{i=1}^N\tilde\hbar_i:=\frac{1}{N}\sum_{i=1}^N|y_i-\psi(\tilde{\bm\beta}_{0}^T\bm x_i)|\|\bm x_i\|= \frac{1}{N}\sum_{i=1}^N|y_i-\psi({\bm\beta}_{t}^T\bm x_i)|\|\bm x_i\|+o_P(1).
\]
By the law of large number, we have
\[\frac{1}{N}\sum_{i=1}^N|y_i-\psi({\bm\beta}_{t}^T\bm x_i)|\|\bm x_i\|=E|y_1-\psi({\bm\beta}_{t}^T\bm x_1)|\|\bm x_1\| +o_P(1).\]
Obviously, $\E|y_1-\psi({\bm\beta}_{t}^T\bm x_1)|\|\bm x_1\|$ is a positive constant under the model setting.
Hence,
\begin{equation}\label{eq:s21}
\left(N^{-1}\sum_{i=1}^N\tilde\hbar_i\right)^{-1}=O_P(1).
\end{equation}
{\purple Following the facts we have dervied in (\ref{eq:s20}) and (\ref{eq:s21}), it holds that
\begin{eqnarray}
\left|\frac{\sum_{i=1}^N\hbar_i}{\sum_{i=1}^N\tilde\hbar_i}-1\right|&=&o_P(1),\label{eq:s22}\\
{\left(\varrho  N^{-1}\sum_{i=1}^N\hbar_i\right)^{-1}}&=&O_P(1).\label{eq:s23}
\end{eqnarray}

Combining the results (\ref{eq:ax-thm1}), (\ref{eq:s22})  and (\ref{eq:s23}), we have (\ref{eq:ax-thm6-2-1}).}
 Thus it remains to show (\ref{eq:ax-thm6-2-2}).
 Using the Holder's inequality (\ref{eq:holder}), it follows:
  \begin{small}
  \begin{eqnarray*}\label{eq:s24}
  \nonumber~~&&\sum_{i=1}^N\frac{|{\psi}(\hat{\bm\beta}_{\rm QLE}^T\bm x_i)-{\psi}(\tilde{\bm\beta}_{0}^T\bm x_i)|\{y_i-{\psi}(\hat{\bm\beta}_{\rm QLE}^T\bm x_i)\}^2\|\bm x_i\|^3}{N}\\
 \nonumber &\le&\sum_{i=1}^N\frac{|{\psi}(\hat{\bm\beta}_{\rm QLE}^T\bm x_i)-{\psi}(\tilde{\bm\beta}_{0}^T\bm x_i)|y_i^2\|\bm x_i\|^3}{N}+2\sum_{i=1}^N\frac{|{\psi}(\hat{\bm\beta}_{\rm QLE}^T\bm x_i)-{\psi}(\tilde{\bm\beta}_{0}^T\bm x_i)||y_i||{\psi}(\hat{\bm\beta}_{\rm QLE}^T\bm x_i)|\|\bm x_i\|^3}{N}\\
 \nonumber &&+\sum_{i=1}^N\frac{|{\psi}(\hat{\bm\beta}_{\rm QLE}^T\bm x_i)-{\psi}(\tilde{\bm\beta}_{0}^T\bm x_i)||{\psi}(\hat{\bm\beta}_{\rm QLE}^T\bm x_i)|^2\|\bm x_i\|^3}{N}\\
 \nonumber &\le&\left(\sum_{i=1}^N\frac{|{\psi}(\hat{\bm\beta}_{\rm QLE}^T\bm x_i)-{\psi}(\tilde{\bm\beta}_{0}^T\bm x_i)|^3}{N}\right)^{1/3}\left(\sum_{i=1}^N\frac{y_i^6}{N}\right)^{1/3}\left(\sum_{i=1}^N\frac{\|\bm x_i\|^9}{N}\right)^{1/3}\\
 \nonumber &&+2\left(\sum_{i=1}^N\frac{|{\psi}(\hat{\bm\beta}_{\rm QLE}^T\bm x_i)-{\psi}(\tilde{\bm\beta}_{0}^T\bm x_i)|^3}{N}\right)^{1/3}\left(\sum_{i=1}^N\frac{|y_i|^3|{\psi}(\hat{\bm\beta}_{\rm QLE}^T\bm x_i)|^3}{N}\right)^{1/3}\left(\sum_{i=1}^N\frac{\|\bm x_i\|^9}{N}\right)^{1/3}\\
  &&+\left(\sum_{i=1}^N\frac{|{\psi}(\hat{\bm\beta}_{\rm QLE}^T\bm x_i)-{\psi}(\tilde{\bm\beta}_{0}^T\bm x_i)|^3}{N}\right)^{1/3}\left(\sum_{i=1}^N\frac{{\psi}(\hat{\bm\beta}_{\rm QLE}^T\bm x_i)^6}{N}\right)^{1/3}\left(\sum_{i=1}^N\frac{\|\bm x_i\|^9}{N}\right)^{1/3}.
  \end{eqnarray*}
  \end{small}
 {\purple Thus in order to obtain (\ref{eq:ax-thm6-2-2}), %
  it is sufficient to show the following equalities hold:
  \begin{eqnarray}
    &&\sum_{i=1}^N\frac{|{\psi}(\hat{\bm\beta}_{\rm QLE}^T\bm x_i)-{\psi}(\tilde{\bm\beta}_{0}^T\bm x_i)|^3}{N}=o_P(1), \label{eq:ax-thm6-1-1}\\
    &&\sum_{i=1}^N\frac{y_i^6}{N}=O_P(1),\ \sum_{i=1}^N\frac{\|\bm x_i\|^9}{N}=O_P(1),\ \sum_{i=1}^N\frac{{\psi}(\hat{\bm\beta}_{\rm QLE}^T\bm x_i)^6}{N}=O_P(1),\label{eq:ax-thm6-1-2}
  \end{eqnarray}
  since ${N}^{-1}\sum_{i=1}^N{|y_i|^3|{\psi}(\hat{\bm\beta}_{\rm QLE}^T\bm x_i)|^3}\le{({N}^{-1}\sum_{i=1}^N{|y_i|^6})}^{1/2}{({N}^{-1}\sum_{i=1}^N|{\psi}(\hat{\bm\beta}_{\rm QLE}^T\bm x_i)|^6)}^{1/2}$.}
  From Assumption 2,  (\ref{eq:ax-thm6-1-2}) holds due to Markov's inequality. Now we check (\ref{eq:ax-thm6-1-1}). Under Assumption 4, it can be shown that
  \begin{equation*}
  \begin{split}
 \frac{1}{N}\sum_{i=1}^N |\psi(\tilde{\bm\beta}_0^T\bm x_i)-\psi(\hat{\bm\beta}_{\rm QLE}^T\bm x_i)|^3
&\le \frac{1}{N}\sum_{i=1}^N m_1^3(\bm x_i)\|\tilde{\bm\beta}_0-\hat{\bm\beta}_{\rm QLE}\|^3\\
&= \|\tilde{\bm\beta}_0-\hat{\bm\beta}_{\rm QLE}\|^3 \frac{1}{N}\sum_{i=1}^N m_1^3(\bm x_i)\\
&=o_P(1),
\end{split}
\end{equation*}
  where the  last equality holds due to  the fact that $\|\tilde{\bm\beta}_0-\hat{\bm\beta}_{\rm QLE}\|=o_{P\mid\mathcal{F}_N}(1)=o_P(1)$. %
  {\purple The results (\ref{eq:ax-thm6-1-1}) and (\ref{eq:ax-thm6-1-2}) implies that
  \begin{equation}\label{eq:s26}
  \sum_{i=1}^N\frac{|{\psi}(\hat{\bm\beta}_{\rm QLE}^T\bm x_i)-{\psi}(\tilde{\bm\beta}_{0}^T\bm x_i)|\{y_i-{\psi}(\hat{\bm\beta}_{\rm QLE}^T\bm x_i)\}^2\|\bm x_i\|^3}{N}=o_P(1)
  \end{equation}
Therefore,  (\ref{eq:ax-thm6-2-2}) holds by noting the results in (\ref{eq:s23}) and (\ref{eq:s22}).}
}

Now let us consider the case that $\tilde\hbar_i^{\rm os}$ and $\hbar_i^{\rm os}$ are selected as $\tilde\hbar_i^{\rm MV}$ and $\hbar_i^{\rm MV}$ respectively.
For brevity, let $\tilde\Sigma$ and $\hat\Sigma$ denote $\Sigma_\psi(\tilde{\bm\beta}_{0})$ and $\Sigma_\psi(\hat{\bm\beta}_{\rm QLE})$ respectively.
From Assumption 3, Lemma \ref{lem:lem4}, (\ref{eq:ax-thm6}) in this case turns into
\begin{align}\label{eq:s27}
\nonumber&\quad\left|\frac{1}{N}\sum_{i=1}^N|y_i-\psi(\tilde{\bm\beta}_0^T\bm x_i)|\|\tilde\Sigma^{-1}\bm x_i\|-\frac{1}{N}\sum_{i=1}^N|y_i-\psi(\hat{\bm\beta}_{\rm QLE}^T\bm x_i)|\|\hat\Sigma^{-1}\bm x_i\|\right|\\
\nonumber&\le \frac{1}{N}\sum_{i=1}^N |\psi(\tilde{\bm\beta}_0^T\bm x_i)-\psi(\hat{\bm\beta}_{\rm QLE}^T\bm x_i)|\|\tilde\Sigma^{-1}\bm x_i\|+ \frac{1}{N}\sum_{i=1}^N |y_i-\psi(\hat{\bm\beta}_{\rm QLE}^T\bm x_i)|\|(\tilde{\Sigma}^{-1}-\hat\Sigma^{-1})\bm x_i\|\\
\nonumber&\le \frac{\lambda_{\max}(\tilde\Sigma^{-1})}{N}\sum_{i=1}^N |\psi(\tilde{\bm\beta}_0^T\bm x_i)-\psi(\hat{\bm\beta}_{\rm QLE}^T\bm x_i)|\|\bm x_i\|
+ o_P(1)\\
&=o_P(1),
\end{align}
where the first inequality follows from the triangle inequality, and the last equality holds due to the same reason as (\ref{eq:ax-thm6}).
Using the similar arguments, it holds that
\[
\frac{1}{N}\sum_{i=1}^N\tilde\hbar_i:=\frac{1}{N}\sum_{i=1}^N|y_i-\psi(\tilde{\bm\beta}_{0}^T\bm x_i)|\|\tilde\Sigma^{-1}\bm x_i\|\ge \frac{\lambda_{\min}(\tilde\Sigma^{-1})}{N}\sum_{i=1}^N|y_i-\psi({\bm\beta}_{t}^T\bm x_i)|\|\bm x_i\|+o_P(1).
\]
Hence, $(N^{-1}\sum_{i=1}^N\tilde\hbar_i)^{-1}=O_P(1)$. {\purple  Combing this result with (\ref{eq:s27}), it is obviously that (\ref{eq:s22}) and (\ref{eq:s23}) also hold for this case.}

{\purple Combing (\ref{eq:ax-thm1}), (\ref{eq:s22}), (\ref{eq:s23}),  (\ref{eq:s26}), and Assumption 3, we have that}
\begin{eqnarray*}
  0&\le& \sum_{i=1}^N\left|\frac{{p}_i^{\rm sos}\wedge 1}{\tilde{p}_i^{\rm sos}\wedge 1}-1\right|\frac{\{y_i-{\psi}(\hat{\bm\beta}_{\rm QLE}^T\bm x_i)\}^2\|\bm x_i\|^2}{N}\\
  &\le& \sum_{i=1}^N \frac{\frac{\sum_{i=1}^N\hbar_i}{\sum_{i=1}^N\tilde\hbar_i}\left|\tilde\hbar_i-\hbar_i\right| + \left|\frac{\sum_{i=1}^N\hbar_i}{\sum_{i=1}^N\tilde\hbar_i}-1\right|\hbar_i}{\varrho  N^{-1}\sum_{i=1}^N\hbar_i}\frac{\{y_i-{\psi}(\hat{\bm\beta}_{\rm QLE}^T\bm x_i)\}^2\|\bm x_i\|^2}{N}\\
  &\le&\frac{\sum_{i=1}^N\hbar_i}{\sum_{i=1}^N\tilde\hbar_i}\sum_{i=1}^N \frac{\left|\tilde\hbar_i-\hbar_i\right|\{y_i-{\psi}(\hat{\bm\beta}_{\rm QLE}^T\bm x_i)\}^2\|\bm x_i\|^2}{N \varrho  N^{-1}\sum_{i=1}^N\hbar_i}\\
  &&~~~+\frac{\lambda_{\max}(\hat\Sigma^{-1})}{\varrho  N^{-1}\sum_{i=1}^N\hbar_i}\left|\frac{\sum_{i=1}^N\hbar_i}{\sum_{i=1}^N\tilde\hbar_i}-1\right|\sum_{i=1}^N\frac{|y_i-{\psi}(\hat{\bm\beta}_{\rm QLE}^T\bm x_i)|^3\|\bm x_i\|^3}{N}\\
  &\le&\frac{\sum_{i=1}^N\hbar_i}{\sum_{i=1}^N\tilde\hbar_i}\frac{\lambda_{\max}(\hat\Sigma^{-1})+\lambda_{\max}(\tilde\Sigma^{-1})}{\varrho  N^{-1}\sum_{i=1}^N\hbar_i}\sum_{i=1}^N\frac{|{\psi}(\hat{\bm\beta}_{\rm QLE}^T\bm x_i)-{\psi}(\tilde{\bm\beta}_{0}^T\bm x_i)|\{y_i-{\psi}(\hat{\bm\beta}_{\rm QLE}^T\bm x_i)\}^2\|\bm x_i\|^3}{N}\\
  &&~~~+\frac{\lambda_{\max}(\hat\Sigma^{-1})}{\varrho  N^{-1}\sum_{i=1}^N\hbar_i}\left|\frac{\sum_{i=1}^N\hbar_i}{\sum_{i=1}^N\tilde\hbar_i}-1\right|\sum_{i=1}^N\frac{|y_i-{\psi}(\hat{\bm\beta}_{\rm QLE}^T\bm x_i)|^3\|\bm x_i\|^3}{N}\\
  &=& o_{P}(1),
\end{eqnarray*}
where the last equality holds by noting that $\lambda_{\max}(\hat\Sigma^{-1})$ and $\lambda_{\max}(\tilde\Sigma^{-1})$ are $O_P(1)$. %

{\blue
Combing (\ref{eq:vdistance}), (\ref{eq:ax-thm6-2-1}), and (\ref{eq:ax-thm6-2-2}), it follows that  $\|\tilde V_c-V_c\|_s=o_P(r^{-1})$.
Therefore, the desired results follow by Lemma \ref{lem:lem3} and Slutsky's theorem by noting
 \begin{equation*}
 \begin{split}
   &~~~~V^{-1/2}\Sigma_\psi(\hat{\bm\beta}_{\rm QLE})^{-1}( \tilde V_c)^{1/2}\{V^{-1/2}\Sigma_\psi(\hat{\bm\beta}_{\rm QLE})^{-1}(\tilde V_c)^{1/2}\}^T\\
   &=V^{-1/2}\Sigma_\psi(\hat{\bm\beta}_{\rm QLE})^{-1}(\tilde V_c)\Sigma_\psi(\hat{\bm\beta}_{\rm QLE})^{-1}V^{-1/2}\\
   &=V^{-1/2}\Sigma_\psi(\hat{\bm\beta}_{\rm QLE})^{-1}( V_c)\Sigma_\psi(\hat{\bm\beta}_{\rm QLE})^{-1}V^{-1/2}+o_{P|\mathcal{F}_N}(r^{-1/2}) \\
   &=I+o_{P|\mathcal{F}_N}(r^{-1/2}),
   \end{split}
 \end{equation*}
 where the last equality is by the definition of $V_c$.}
\end{proof}

\subsection{Proofs of Theorems~7 and 8}
To prove  Theorem 7, we first establish some lemmas on the  estimator ${\tilde{\bm\beta}_j}$ {which is} calculated on a single machine $\mathcal{F}_{Nj}$.
For simplicity, {let $p_{ji}$ denote $\tilde{p}_{ji}^{\rm sos}$ for $j=1,\ldots,K$, and $i=1,\ldots,n$}.

\begin{lemma}\label{lem:local}
If Assumptions 1 -- 4  hold, then conditional on subset $\mathcal{F}_{Nj}$, as $n\rightarrow\infty$, with probability approaching one, the subsample QLE $\tilde{\bm\beta}_j$ based on subsamples {\purple in} $\mathcal{F}_{Nj}$ satisfies
\begin{equation}\label{eq:result11-ub}
  \pr(r^\alpha\|{\tilde{\bm\beta}_j}-\bm\beta_t\|>\Delta)=O( r^{2\alpha-1})
\end{equation}
for any $\Delta>0$ and $\alpha\in(1/4,1/2)$, where $\bm\beta_t$ is the true value of $\bm\beta$. %

\end{lemma}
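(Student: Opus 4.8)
The plan is to center the analysis at the true value $\bm\beta_t$ (rather than at $\hat{\bm\beta}_{\rm QLE}$, as in Theorems~1 and~2) and to turn the first-order condition $Q_j^*(\tilde{\bm\beta}_j)=\bm 0$ into a polynomial tail bound by combining a ``basin of attraction'' argument with Chebyshev's inequality. First I would record that, although Assumption~5 is not assumed here, the shrinkage floor in~(21) forces $p_{ji}=\tilde p_{ji}^{\rm sos}\wedge 1\ge \varrho r/n$, so the per-machine analogue $\max_{i\le n}(np_{ji})^{-1}=O_P(r^{-1})$ holds automatically; this is precisely what lets the whole argument run under Assumptions~1--4 only.

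The key quantity is $Q_j^*(\bm\beta_t)=n^{-1}\sum_{S_{jn}}p_{ji}^{-1}\{y_{ji}-\psi(\bm\beta_t^T\bm x_{ji})\}\bm x_{ji}$. Conditional on $\mathcal F_{Nj}$ its mean is the machine-$j$ full-data estimating function at the truth, $n^{-1}\sum_{i=1}^n\{y_{ji}-\psi(\bm\beta_t^T\bm x_{ji})\}\bm x_{ji}$, whose squared norm is $O_P(n^{-1})$ by the central limit theorem because $\E[\{y-\psi(\bm\beta_t^T\bm x)\}\bm x]=\bm 0$; its conditional variance is the Poisson subsampling variance, which by the moment bounds in Assumption~2 and the probability floor — exactly as in the proof of Lemma~1 — is $O_P(r^{-1})$. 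Since $n\ge r$, I would conclude that on a data event of probability tending to one, $\E(\|Q_j^*(\bm\beta_t)\|^2\mid\mathcal F_{Nj})=O_P(r^{-1})$.

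Next I would run the monotonicity step. For a direction $\bm u$ with $\|\bm u\|=\Delta$, a Taylor expansion gives $\bm u^T Q_j^*(\bm\beta_t+r^{-\alpha}\bm u)=\bm u^T Q_j^*(\bm\beta_t)+r^{-\alpha}\bm u^T\dot Q_j^*(\bar{\bm\beta})\bm u$, where $-\dot Q_j^*(\bar{\bm\beta})=n^{-1}\sum_{S_{jn}}p_{ji}^{-1}\dot\psi(\bar{\bm\beta}^T\bm x_{ji})\bm x_{ji}\bm x_{ji}^T$ is positive definite since $\dot\psi>0$. Using the subsampling and data laws of large numbers together with the Lipschitz condition (Assumption~4) to replace $\bar{\bm\beta}$ by $\bm\beta_t$, this weighted Hessian converges to $\Sigma_\psi(\bm\beta_t)$, whose smallest eigenvalue is bounded below by some $c>0$ with probability tending to one (Assumption~3). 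Hence the second term is $\le -c\,r^{-\alpha}\Delta^2$ while the first is at most $\Delta\|Q_j^*(\bm\beta_t)\|$ in magnitude, so $\bm u^T Q_j^*(\bm\beta_t+r^{-\alpha}\bm u)<0$ for every such $\bm u$ whenever $\|Q_j^*(\bm\beta_t)\|<c\,r^{-\alpha}\Delta$. Because the quasi-log-likelihood is convex (Assumption~3), this sign condition on the sphere of radius $r^{-\alpha}\Delta$ traps the unique root $\tilde{\bm\beta}_j$ inside the ball, so $\{\|\tilde{\bm\beta}_j-\bm\beta_t\|>r^{-\alpha}\Delta\}$ is contained, up to the probability-$o(1)$ set where the eigenvalue bound fails, in $\{\|Q_j^*(\bm\beta_t)\|\ge c\,r^{-\alpha}\Delta\}$.

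Finally, Chebyshev's inequality and the second-moment bound yield $\pr(\|\tilde{\bm\beta}_j-\bm\beta_t\|>r^{-\alpha}\Delta\mid\mathcal F_{Nj})\le \{c\,r^{-\alpha}\Delta\}^{-2}\,\E(\|Q_j^*(\bm\beta_t)\|^2\mid\mathcal F_{Nj})=O(r^{2\alpha-1})$, valid for $\alpha<1/2$; the lower restriction $\alpha>1/4$ is the range needed downstream, since it is exactly what makes the squared radius $r^{-2\alpha}=o(r^{-1/2})$ negligible in the aggregation of Theorem~7. I expect the main obstacle to be the uniform control of the weighted Hessian $\dot Q_j^*(\bar{\bm\beta})$ at the moving point $\bar{\bm\beta}$: keeping it close to the positive-definite $\Sigma_\psi(\bm\beta_t)$ requires simultaneously the Poisson-subsampling law of large numbers (the per-machine analogue of Lemma~2), the i.i.d. law of large numbers over the $n$ observations, and the Lipschitz bound of Assumption~4, and it is there that the floor on $p_{ji}$ and the moment assumptions must be used with care.
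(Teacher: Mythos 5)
Your overall strategy is the same as the paper's: center the score at the true parameter $\bm\beta_t$, use positivity of the weighted Hessian to convert the location of the root $\tilde{\bm\beta}_j$ into a bound of the form $\|\tilde{\bm\beta}_j-\bm\beta_t\|\le C\|Q_j^*(\bm\beta_t)\|$ on a good event, and then apply Chebyshev's inequality to the second moment of $Q_j^*(\bm\beta_t)$, which is $O(r^{-1})$ because the cross terms vanish at $\bm\beta_t$ and the shrinkage floor in the subsampling probabilities controls the Poisson variance. (The paper phrases the trapping step through the quasi-likelihood objective $U_S$ rather than through the sign of $\bm u^TQ_j^*$ on a sphere of radius $r^{-\alpha}\Delta$, but the two are equivalent.)

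The gap is in your treatment of the event on which the weighted Hessian $\Sigma_{\psi,S}(\acute{\bm\beta})$ at the intermediate point fails to have its smallest eigenvalue bounded below. You dispose of it with ``probability tending to one,'' i.e.\ an $o(1)$ bound, and you assert that the restriction $\alpha>1/4$ is only needed downstream in Theorem~7. But the lemma claims the \emph{rate} $\pr(r^\alpha\|\tilde{\bm\beta}_j-\bm\beta_t\|>\Delta)=O(r^{2\alpha-1})$, and the total probability is the Chebyshev term \emph{plus} the probability of the bad Hessian event; an $o(1)$ bound on the latter only yields $\pr(\cdot)=o(1)$, not the claimed rate. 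The paper closes this by a separate quantitative computation --- Chebyshev bounds in Frobenius norm on $\Sigma_{\psi,S}(\acute{\bm\beta})-\Sigma_{\psi,n}(\acute{\bm\beta})$ and on $\Sigma_{\psi,n}(\acute{\bm\beta})-E\Sigma_{\psi,n}(\acute{\bm\beta})$ --- showing the bad event has probability $O(r^{-1})+O(n^{-1})=O(r^{-1/2})$, and it is precisely the condition $\alpha>1/4$, i.e.\ $2\alpha-1>-1/2$, that allows this term to be absorbed into $O(r^{2\alpha-1})$ \emph{within this lemma}, not merely downstream. So your sketch needs an explicit tail bound with a rate for the Hessian deviation (your ``main obstacle'' paragraph lists the right ingredients --- the probability floor, the moment conditions, the Lipschitz assumption --- but never produces a rate), and the role you assign to $\alpha>1/4$ is misplaced.
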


\begin{proof}
Without of loss generality, we assume $j=1$.
Recall that the estimator $\tilde{\bm\beta}_1$ is the solution of the estimation equation $n^{-1}\sum_{i=1}^n{\delta_{1i}}{p_{1i}^{-1}}\{y_{1i}-\psi(\bm\beta^T\bm x_{1i})\}\bm x_{1i}=\bm 0$.
From \cite{Wedderburn1974Quasi}, it is clear to see that, under Assumption 3, $\tilde{\bm\beta}_1$ achieves the minimum of the following quasi-likelihood function:
\begin{equation}
  U_S(\bm\beta):=n^{-1}\sum_{i=1}^n\frac{\delta_{1i}}{p_{1i}}\int_{y_{1i}}^{\psi(\bm\beta^T\bm x_{1i})}\frac{t-y_{1i}}{\dot\psi(\psi^{-1}(t))}dt,
\end{equation}
with $\partial U_S/\partial{\bm\beta}=-n^{-1}\sum_{i=1}^n\frac{\delta_{1i}}{p_{1i}}[y_{1i}-\psi(\bm\beta^T\bm x_{1i})]\bm x_{1i}$, where $\psi^{-1}(t)$ denotes the inverse function of $\psi(t)$.
Note that $\dot\psi(t)>0$ in our model setting which indicates the existence of $\psi^{-1}(t)$.
By Taylor expansion,
\begin{equation}
\begin{split}
U_S(\tilde{\bm\beta}_1)=& U_S({\bm\beta}_t) - n^{-1}\sum_{i=1}^n\frac{\delta_{1i}}{p_{1i}}\{y_{1i}-\psi({\bm\beta}_t^T\bm x_{1i})\}\bm x_{1i}^T(\tilde{\bm\beta}_1-\bm\beta_t)\\
&+\frac{1}{2}(\tilde{\bm\beta}_1-\bm\beta_t)^T\left\{n^{-1}\sum_{i=1}^n\frac{\delta_{1i}}{p_{1i}}{\dot{\psi}}(\acute{\bm{\beta}}^T\bm x_{1i})\bm x_{1i}\bm x_{1i}^T\right\}(\tilde{\bm\beta}_1-\bm\beta_t),\\
\ge& U_S({\bm\beta}_t) - \left\|n^{-1}\sum_{i=1}^n\frac{\delta_{1i}}{p_{1i}}[y_{1i}-\psi({\bm\beta}_t^T\bm x_{1i})]\bm x_{1i}\right\|\left\|\tilde{\bm\beta}_1-\bm\beta_t\right\|\\
&+\frac{1}{2}\lambda_{\min}(\Sigma_{\psi,S}(\acute{\bm\beta}))\left\|\tilde{\bm\beta}_1-\bm\beta_t\right\|^2,
\end{split}
\end{equation}
where $\acute{\bm{\beta}}$ lies between ${\bm\beta}_{t}$ and $\tilde{\bm\beta}_1$ and $\Sigma_{\psi,n}(\acute{\bm\beta}):=n^{-1}\sum_{i=1}^n{\dot{\psi}}(\acute{\bm{\beta}}^T\bm x_{1i})\bm x_{1i}\bm x_{1i}^T$.

Simple calculation yields
\begin{equation}
  \begin{split}
&~~\left\|n^{-1}\sum_{i=1}^n\frac{\delta_{1i}}{p_{1i}}\{y_{1i}-\psi({\bm\beta}_t^T\bm x_{1i})\}\bm x_{1i}\right\|\left\|\tilde{\bm\beta}_1-\bm\beta_t\right\|\\
&\ge U_S({\bm\beta}_t)-U_S(\tilde{\bm\beta}_1)+2^{-1}\lambda_{\min}(\Sigma_{\psi,S}(\acute{\bm\beta}))\left\|\tilde{\bm\beta}_1-\bm\beta_t\right\|^2\\
&\ge 2^{-1}\lambda_{\min}(\Sigma_{\psi,S}(\acute{\bm\beta}))\left\|\tilde{\bm\beta}_1-\bm\beta_t\right\|^2,%
\end{split}
\end{equation}
by noting $U_S(\tilde{\bm\beta}_1)\le U_S({\bm\beta}_t).$
Thus based on the following event
\begin{equation*}\label{eq:as-lambda2-ub}
\begin{split}
 \Xi_{1,\lambda}&=\{ 0.5 C_\psi\leq \lambda_{\min}(\Sigma_{\psi,S}(\acute{\bm\beta}))\},
\end{split}
\end{equation*}
it follows that
\[\begin{split}
\|\tilde{\bm\beta}_{1}-\bm\beta_t\|\le \frac{4}{C_\psi}\left\|n^{-1}\sum_{i=1}^n\frac{\delta_{1i}}{p_{1i}}\{y_{1i}-\psi({\bm\beta}_t^T\bm x_{1i})\}\bm x_{1i}\right\|,
\end{split}\]
which implies
\[\|\tilde{\bm\beta}_{1}-\bm\beta_t\|^2\le \frac{16}{C_\psi^2}\left\|n^{-1}\sum_{i=1}^n\frac{\delta_{1i}}{p_{1i}}\{y_{1i}-\psi({\bm\beta}_t^T\bm x_{1i})\}\bm x_{1i}\right\|^2. \]
Thus,
\begin{equation}\label{eq:ax-lemma3-4}
\begin{split}
&{\rm pr}(\|\tilde{\bm\beta}_{1}-\bm\beta_t\|\ge\Delta,\Xi_{1,\lambda}\ \text{happens})=\E \mathbbm{1}_{\{\|\tilde{\bm\beta}_{1}-\bm\beta_t\|\ge\Delta,\Xi_{1,\lambda}\ \text{happens}.\}}\\
\le&\E\left\{ \frac{16}{C_\psi^2\Delta^2}\left\|n^{-1}\sum_{i=1}^n\frac{\delta_{1i}}{p_{1i}}\{y_{1i}-\psi({\bm\beta}_t^T\bm x_{1i})\}\bm x_{1i}\right\|^2 \times \mathbbm{1}_{\{\|\tilde{\bm\beta}_{1}-\bm\beta_t\|\ge\Delta,\Xi_{1,\lambda}\ \text{happens}\}}\right\}\\
\le& \E\left\{ \frac{16}{C_\psi^2\Delta^2}\left\|n^{-1}\sum_{i=1}^n\frac{\delta_{1i}}{p_{1i}}\{y_{1i}-\psi({\bm\beta}_t^T\bm x_{1i})\}\bm x_{1i}\right\|^2 \right\}.
\end{split}
\end{equation}
To deal with (\ref{eq:ax-lemma3-4}), we note that
\begin{eqnarray}\label{eq:ax-thm6-4}
\nonumber&\quad&\E\left\{\left\|n^{-1}\sum_{i=1}^n\frac{\delta_{1i}}{p_{1i}}\{y_{1i}-\psi({\bm\beta}_t^T\bm x_{1i})\}\bm x_{1i}\right\|^2\right\}\\
\nonumber&=& \E\left\{\E\left(\sum_{k=1}^d\left[\frac{1}{n}\sum_{i=1}^n\frac{\delta_{1i}}{p_{1i}}\{y_{1i}-\psi({\bm\beta}_t^T\bm x_{1i})\}x_{1ik}\right]^2\bigg|\mathcal{F}_{N1},\tilde{\bm\beta}_0\right)\right\}\\
&=& \E\bigg[\sum_{k=1}^d\frac{1}{n^2}\sum_{i=1}^n\frac{1}{p_{1i}}\{y_{1i}-\psi({\bm\beta}_t^T\bm x_{1i})\}^2x_{1ik}^2\bigg]\\
\nonumber&+&\sum_{k=1}^d\E\bigg[\sum_{i=1}^n\sum_{i\neq j}\{y_{1i}-\psi({\bm\beta}_t^T\bm x_{1i})\}\{y_{1i}-\psi({\bm\beta}_t^T\bm x_{1j})\}x_{1ik}x_{1jk}\bigg]\\
\nonumber&\le& \frac{1}{\varrho r }\E\left[\sum_{k=1}^d\frac{1}{n}\sum_{i=1}^n\{y_{1i}-\psi({\bm\beta}_t^T\bm x_{1i})\}^2x_{1ik}^2\right]
\le  \frac{d}{\varrho r }\E[\{y_1-\psi({\bm\beta}_t^T\bm x_{11})\}^2\|\bm x_{11}\|^2]\\
\nonumber&\le&   \frac{d}{\varrho r }\sqrt{\E[\{y_1-\psi({\bm\beta}_t^T\bm x_{11})\}^4]\E\{\|\bm x_{11}\|^4\}}
\le  \frac{d\mathfrak{C}_1}{\varrho r },
\end{eqnarray}
{\purple where  $\mathfrak{C}_1$ denotes for some positive constant, the second last inequality is from Holder inequality, and {the} last inequality comes from Assumption 2.}

{\purple Combining} (\ref{eq:ax-lemma3-4}) and (\ref{eq:ax-thm6-4}), it yields
\begin{equation}\label{eq:ax-thm6-3}
\begin{split}
  \pr(r^{\alpha}\|\tilde{\bm\beta}_1-\bm\beta_t\|>\Delta, \Xi_{1,\lambda} \text{happens})&\le  \frac{16r^{2\alpha-1}d\mathfrak{C}_1}{C_{\psi}^{2}\Delta^2\varrho}.
  \end{split}
\end{equation}

Now we evaluate the probability that   $\Xi_{1,\lambda}$  happens.

From (\ref{eq:asy2}), it holds that $\|\tilde{\bm\beta}_1-\hat{\bm\beta}_{\rm QLE}\|=O_P(r^{-1/2})$. And also note $\|\hat{\bm\beta}_{\rm QLE}-\bm\beta_t\|=O_P(n^{-1/2})$ under Assumptions 1 -- 3 \citep[see][Chapter 36 Theorem 3.1]{NEWEY1994Large}. Thus it is clear to see that $\|\tilde{\bm\beta}_1-{\bm\beta}_t\|=O_P(r^{-1/2})$. %
As $r\to\infty$, with probability approaching one, we have $\E\{\E_{\acute{\bm\beta}}(\Sigma_{\psi,n}(\acute{\bm\beta})|\acute{\bm\beta},\mathcal{F}_{N1})\}\ge \E\inf_{\bm\beta\in\Lambda}(\Sigma_{\psi,n}(\acute{\bm\beta})|\mathcal{F}_{N1})$ where $\E_{\acute{\bm\beta}}$ means the expectation is taken with respect to $\acute{\bm\beta}$ conditional on $\mathcal{F}_{N1}$. %
Thus from Assumption 3, it holds that $\E(\Sigma_{\psi,n}(\acute{\bm\beta}))\ge \E(\E\inf_{\bm\beta\in\Lambda}(\Sigma_{\psi,n}(\acute{\bm\beta})|\mathcal{F}_{N1}))\ge C_\psi I_d$ for some constant $C_\psi$  by taking expectation  on both sides of the inequality. {Therefore, let $\Xi_{2,\lambda}:=\{\lambda_{\max}(E\Sigma_{\psi,n}(\acute{\bm\beta})-\Sigma_{\psi,S}(\acute{\bm\beta}))\le 0.5C_\psi\}$, it holds that $\Xi_{2,\lambda}\subseteq\Xi_{1,\lambda}$ according to the facts that $\lambda_{\min}(E\Sigma_{\psi,n}(\acute{\bm\beta})+\Sigma_{\psi,S}(\acute{\bm\beta})-E\Sigma_{\psi,n}(\acute{\bm\beta}))\ge \lambda_{\min}(E\Sigma_{\psi,n}(\acute{\bm\beta}))+\lambda_{\min}(\Sigma_{\psi,S}(\acute{\bm\beta})-E\Sigma_{\psi,n}(\acute{\bm\beta}))
=\lambda_{\min}(E\Sigma_{\psi,n}(\acute{\bm\beta}))-\lambda_{\max}(E\Sigma_{\psi,n}(\acute{\bm\beta})-\Sigma_{\psi,S}(\acute{\bm\beta}))$,
where the last equality due to the fact $\lambda_{\min}(A)=-\lambda_{\max}(-A)$.
Let $\Xi_{3,\lambda}:=\{\|\Sigma_{\psi,n}(\acute{\bm\beta})-\Sigma_{\psi,S}(\acute{\bm\beta})\|_s\le 4^{-1}C_\psi\}$,  %
 and $\Xi_{4,\lambda}:=\{\|\Sigma_{\psi,n}(\acute{\bm\beta})-E\Sigma_{\psi,n}(\acute{\bm\beta})\|_s\le 4^{-1}C_\psi\}$. %
It follows easily that $\Xi_{3,\lambda}\cap\Xi_{4,\lambda}\subseteq \Xi_{2,\lambda}$ which implies $\pr(\Xi_{1,\lambda}^c)\le \pr(\Xi_{2,\lambda}^c)\le \pr(\Xi_{3,\lambda}^c)+\pr(\Xi_{4,\lambda}^c).$} {\purple Thus, in order to get the desired result, it is sufficient to evaluate the probabilities that   $\Xi_{3,\lambda}$ and  $\Xi_{4,\lambda}$  happens respectively.}

From  Assumption 2, we have
\begin{align*}
  {E}\left\{{\Sigma}_{\psi, S}^{j_1j_2}(\acute{\bm\beta})-{\Sigma}_{\psi,n}^{j_1j_2}(\acute{\bm\beta})\right\}^2
  =&{ E}\left[{ E}\left\{{\Sigma}_{\psi, S}^{j_1j_2}(\acute{\bm\beta})-{\Sigma}_{\psi,n}^{j_1j_2}(\acute{\bm\beta})\mid\mathcal{F}_{N1},\acute{\bm\beta}\right\}^2\right]\\
  =&{ E}\left({ E}\left[\sum^n_{i=1}\frac{p_i(1-p_i)}{p_i^2}
     \left\{\frac{\dot{\psi}(\acute{\bm\beta}^T\bm x_{1i})x_{1ij_1} x_{1ij_2}}{n}\right\}^2\bigg|\acute{\bm\beta}\right]\right)\\
  \le&{ E}\left({ E}\left[\sum^n_{i=1}\frac{1}{p_i}
     \left\{\frac{\dot{\psi}(\acute{\bm\beta}^T\bm x_{1i})\|\bm x_{1i}\|^2}{n}\right\}^2\bigg|\acute{\bm\beta}\right]\right)\\
  \le&{ E}\left[{ E}\left(\max_i\frac{1}{np_i}\right)\left\{\sum^n_{i=1}
     \frac{\dot{\psi}^2(\acute{\bm\beta}^T\bm x_{1i})\|\bm x_{1i}\|^4}{n}\bigg|\acute{\bm\beta}\right\}\right]\\
  \le&{ E}\left[\left(\frac{1}{\varrho r}\right){ E}\left\{\sqrt{\sum^n_{i=1}
     \frac{\dot{\psi}^4(\acute{\bm\beta}^T\bm x_{1i})}{n}}\sqrt{\sum^n_{i=1}\frac{\|\bm x_{1i}\|^8}{n}}\bigg|\acute{\bm\beta}\right\}\right]\\
  \le&\left(\frac{1}{\varrho r}\right){ E}\left[\sqrt{{ E}\left\{\sum^n_{i=1}
     \frac{\dot{\psi}^4(\acute{\bm\beta}^T\bm x_{1i})}{n}\bigg|\acute{\bm\beta}\right\}}\sqrt{{ E}\left(\sum^n_{i=1}\frac{\|\bm x_{1i}\|^8}{n}\bigg|\acute{\bm\beta}\right)}\right]\\
  \le&\left(\frac{1}{\varrho r}\right)\left(\sqrt{{ E}\left[{ E}\left\{\sum^n_{i=1}
     \frac{\dot{\psi}^4(\acute{\bm\beta}^T\bm x_{1i})}{n}\bigg|\acute{\bm\beta}\right\}\right]}\sqrt{{ E}\sum^n_{i=1}\frac{\|\bm x_{1i}\|^8}{n}}\right)\\
  =&\left(\frac{1}{\varrho r}\right)\left(\sqrt{{ E}\left[{ E}\left\{{\dot{\psi}^4(\acute{\bm\beta}^T\bm x_{11})}|\acute{\bm\beta}\right\}\right]}\sqrt{{ E}{\|\bm x_{11}\|^8}}\right)\\
 \le&\left(\frac{1}{\varrho r}\right)\left\{\sqrt{{ E}\left(\sup_{\bm\beta\in\Lambda}{\dot{\psi}^4({\bm\beta}^T\bm x_{11})}\right)}\sqrt{{ E}{\|\bm x_{11}\|^8}}\right\}
      =O(r^{-1}).
\end{align*}
Thus
\begin{equation}\label{eq:ax-lem3-4}
\begin{split}
  \pr(\Xi_{3,\lambda}^c)&\le \pr(\|\Sigma_{\psi,n}(\acute{\bm\beta})-\Sigma_{\psi,S}(\acute{\bm\beta})\|_F\ge 4^{-1}C_\psi)\\
  &\le \frac{16{ E}\|\Sigma_{\psi,n}(\acute{\bm\beta})-\Sigma_{\psi,S}(\acute{\bm\beta})\|_F^2}{C_\psi^2}%
  =O(r^{-1}),
  \end{split}
\end{equation}
where $\|A\|_F^2:={\text{tr}(A^TA)}$ denotes the Frobenius norm of matrix $A$.

Similarly, for $1\le j_1,j_2\le p$, let ${\Sigma}_{\psi, n}(\acute{\bm\beta})^{j_1j_2}$ and $E{\Sigma}_{\psi,n}(\acute{\bm\beta})^{j_1j_2}$ denote for the $(j_1,j_2)$-th component of ${\Sigma}_{\psi,n}(\acute{\bm\beta})$ and $E{\Sigma}_{\psi,n}(\acute{\bm\beta})$, respectively.
From Assumption 2,
\begin{align*}
  &{ E}\left\{{\Sigma}_{\psi, n}^{j_1j_2}(\acute{\bm\beta})-E{\Sigma}_{\psi,n}^{j_1j_2}(\acute{\bm\beta})\right\}^2
  ={ E}\left[{ E}\left\{{\Sigma}_{\psi, n}^{j_1j_2}(\acute{\bm\beta})-E{\Sigma}_{\psi,n}^{j_1j_2}(\acute{\bm\beta})\bigg|\acute{\bm\beta}\right\}^2\right]\\
  =&{ E}\left[{ E}\left\{\sum^n_{i=1}
     \left(\frac{\dot{\psi}(\acute{\bm\beta}^T\bm x_{1i})x_{1ij_1} x_{1ij_2}}{n}\right)^2\bigg|\acute{\bm\beta}\right\}\right]-{ E}[E\{{\Sigma}_{\psi,n}^{j_1j_2}(\acute{\bm\beta})\mid\acute{\bm\beta}\}^2]\\
  \le&{ E}\left[{ E}\left\{\sum^n_{i=1}
     \left(\frac{\dot{\psi}(\acute{\bm\beta}^T\bm x_{1i})\|\bm x_{1i}\|^2}{n}\right)^2\bigg|\acute{\bm\beta}\right\}\right]\\
  \le&\frac{1}{n}\sqrt{{ E}\sup_{\bm\beta\in\Lambda}\{\dot{\psi}^4({\bm\beta}^T\bm x_{11})\}{ E}\{\|\bm x_{11}\|^8\}}=O(n^{-1}).
\end{align*}
Hence, it can be shown
\begin{equation}\label{eq:ax-lem3-6}
  \pr(\Xi_{4,\lambda}^c)\le \pr(\|\Sigma_{\psi,n}(\acute{\bm\beta})-E\Sigma_{\psi,n}(\acute{\bm\beta})\|_F\ge 4^{-1}C_\Psi)=O(n^{-1}).
\end{equation}

Combining (\ref{eq:ax-lem3-4}) %
and (\ref{eq:ax-lem3-6}), it follows
\begin{equation}
  \pr(\Xi_{1,\lambda}^c)\le \pr(\Xi_{2,\lambda}^c)\le \pr(\Xi_{3,\lambda}^c)+\pr(\Xi_{4,\lambda}^c)%
  =O(r^{-1/2}).
\end{equation}

Therefore, as $n\rightarrow\infty$, the desired result holds, with probability approaching one, from the fact
\[\begin{split}
 &\pr(\text{equation (\ref{eq:ax-thm6-3}) holds})=\pr(\text{equation (\ref{eq:ax-thm6-3}) holds},\Xi_{1,\lambda}\ \text{happens})\\
 &\quad+ \pr(\text{equation (\ref{eq:ax-thm6-3}) holds},\Xi_{1,\lambda}\ \text{not happens}) \\
 &\le \pr(\text{equation (\ref{eq:ax-thm6-3}) holds},\Xi_{1,\lambda}\ \text{happens})+\pr(\Xi_{1,\lambda}\ \text{not happens})\\
 &= O(r^{2\alpha-1})+O(r^{-1/2})\\
 &= O(r^{2\alpha-1}).%
 \end{split}\]

\end{proof}

{Recall that $\breve{\bm\beta}_{Kr}$ denote the estimator obtained from Algorithm~3 %
and $\breve{\bm\beta}$ denote the QLE obtained from the pooling subsamples. {\blue The difference between $\tilde{\bm\beta}_{Kr}$ and $\breve{\bm\beta}$ is summarized in the following lemma.} }
\begin{lemma}\label{lem:comb}
If Assumptions 1 -- 4  hold,  the estimate $\tilde{\bm\beta}_0$ based on the first step sample exists, $r_0{(Kr)}^{-1}\rightarrow0$ and the partition number $K$   satisfies $K=O(r^\eta)$ for some $\eta$ in $[0,\min(1-2\alpha,4\alpha-1))$ where $\alpha\in(1/4,1/2)$, then
  as ${r}\rightarrow\infty$ and $n\rightarrow\infty$, %
  with probability approaching one, it holds that
  \begin{equation*}
  \sqrt{Kr}\|\tilde{\bm\beta}_{Kr}-\breve{\bm\beta}\| =o_P(1).
\end{equation*}

\end{lemma}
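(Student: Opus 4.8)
The plan is to exploit the fact that the aggregated estimator $\tilde{\bm\beta}_{Kr}$ is a $\dot Q_j^*$-weighted average of the local roots $\tilde{\bm\beta}_j$, and to compare it with the pooled QLE $\breve{\bm\beta}$, which by definition solves $\sum_{j=0}^K Q_j^*(\breve{\bm\beta})=\bm 0$. First I would Taylor-expand each local estimating equation about its own root $\tilde{\bm\beta}_j$ and evaluate at $\breve{\bm\beta}$; since $Q_j^*(\tilde{\bm\beta}_j)=\bm 0$,
\[
Q_j^*(\breve{\bm\beta})=\dot Q_j^*(\tilde{\bm\beta}_j)(\breve{\bm\beta}-\tilde{\bm\beta}_j)+\tilde R_j,\qquad
\tilde R_j=\int_0^1\bigl\{\dot Q_j^*(\tilde{\bm\beta}_j+t(\breve{\bm\beta}-\tilde{\bm\beta}_j))-\dot Q_j^*(\tilde{\bm\beta}_j)\bigr\}\,dt\,(\breve{\bm\beta}-\tilde{\bm\beta}_j).
\]
Summing over $j=0,\dots,K$, using $\sum_j Q_j^*(\breve{\bm\beta})=\bm 0$ and the definition $\sum_j\dot Q_j^*(\tilde{\bm\beta}_j)\tilde{\bm\beta}_j=\{\sum_j\dot Q_j^*(\tilde{\bm\beta}_j)\}\tilde{\bm\beta}_{Kr}$, yields the clean identity
\[
\tilde{\bm\beta}_{Kr}-\breve{\bm\beta}=\Bigl\{\textstyle\sum_{j=0}^K\dot Q_j^*(\tilde{\bm\beta}_j)\Bigr\}^{-1}\sum_{j=0}^K\tilde R_j,
\]
so the whole problem reduces to bounding the accumulated second-order remainder against the aggregated Hessian.

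Then I would establish three ingredients. (i) Each $\dot Q_j^*(\tilde{\bm\beta}_j)$ concentrates around $-\Sigma_\psi$ with smallest eigenvalue bounded away from zero under Assumption~3 (arguing as in Lemma~\ref{lem:lem4}); summing $K+1$ such matrices gives $\lambda_{\min}\gtrsim K$, hence $\|\{\sum_j\dot Q_j^*(\tilde{\bm\beta}_j)\}^{-1}\|_s=O_P(K^{-1})$. (ii) Assumption~4 (Lipschitz continuity of $\dot\psi(\bm\beta^T\bm x)\bm x\bm x^T$) gives $\|\tilde R_j\|\le \tfrac12 M_j\|\breve{\bm\beta}-\tilde{\bm\beta}_j\|^2$ with $M_j=n^{-1}\sum_{S_{jn}}p_{ji}^{-1}m_2(\bm x_{ji})$, and the law of large numbers together with the weight control $\max_i (np_{ji})^{-1}=O_P(r^{-1})$ gives $\sum_{j=0}^K M_j=O_P(K)$. (iii) The estimator $\breve{\bm\beta}$ approximates $\hat{\bm\beta}_{\rm QLE}$ at rate $(Kr)^{-1/2}$ and hence $\bm\beta_t$ at rate $O_P((Kr)^{-1/2})$, while Lemma~\ref{lem:local} controls each $\|\tilde{\bm\beta}_j-\bm\beta_t\|$; a union bound,
\[
\pr\Bigl(\max_{1\le j\le K}r^{\alpha}\|\tilde{\bm\beta}_j-\bm\beta_t\|>\Delta\Bigr)\le K\cdot O(r^{2\alpha-1})=O(Kr^{2\alpha-1}),
\]
then gives $\max_j\|\breve{\bm\beta}-\tilde{\bm\beta}_j\|=O_P(r^{-\alpha})$ uniformly, provided $Kr^{2\alpha-1}\to0$, i.e.\ $\eta<1-2\alpha$.

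Combining (ii) and (iii), $\|\sum_j\tilde R_j\|\le\tfrac12\max_j\|\breve{\bm\beta}-\tilde{\bm\beta}_j\|^2\sum_jM_j=O_P(Kr^{-2\alpha})$, and with (i) this gives $\|\tilde{\bm\beta}_{Kr}-\breve{\bm\beta}\|=O_P(K^{-1})\cdot O_P(Kr^{-2\alpha})=O_P(r^{-2\alpha})$, so that
\[
\sqrt{Kr}\,\|\tilde{\bm\beta}_{Kr}-\breve{\bm\beta}\|=O_P\bigl(K^{1/2}r^{1/2-2\alpha}\bigr)=O_P\bigl(r^{\eta/2+1/2-2\alpha}\bigr),
\]
which is $o_P(1)$ exactly when $\eta<4\alpha-1$. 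The two requirements $\eta<1-2\alpha$ and $\eta<4\alpha-1$ are precisely the range $\eta\in[0,\min(1-2\alpha,4\alpha-1))$ in the hypothesis, and both constraints are nonvacuous for $\alpha\in(1/4,1/2)$. I would handle the pilot block $j=0$ separately, showing that under $r_0(Kr)^{-1}\to0$ its weight $\dot Q_0^*(\tilde{\bm\beta}_0)$ and its remainder are asymptotically negligible relative to the $K$ machine blocks, so that it disturbs neither the lower eigenvalue bound nor the remainder sum.

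The main obstacle is ingredient (iii): because Lemma~\ref{lem:local} only delivers the slower local rate $r^{-\alpha}$ with $\alpha<1/2$ (not the parametric $r^{-1/2}$), each machine's remainder is only $O_P(r^{-2\alpha})$ rather than $O_P(r^{-1})$, and summing $K$ of them forces a genuine trade-off between the number of machines and the per-machine size. Making the displacement bound uniform over all $K$ machines requires the polynomial tail bound of Lemma~\ref{lem:local} through the union bound, and it is this step --- rather than the routine eigenvalue and Lipschitz estimates --- that pins down the admissible growth $K=O(r^\eta)$ with $\eta<\min(1-2\alpha,4\alpha-1)$.
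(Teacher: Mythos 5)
Your proposal is correct and follows essentially the same route as the paper's proof: the same Taylor expansion of each local estimating equation about its own root $\tilde{\bm\beta}_j$, the same identity expressing $\tilde{\bm\beta}_{Kr}-\breve{\bm\beta}$ as the aggregated-Hessian inverse applied to the summed second-order remainders, the same quadratic remainder bound from Assumption~4, and the same union bound over the polynomial tail of Lemma~\ref{lem:local}, yielding exactly the two constraints $\eta<1-2\alpha$ and $\eta<4\alpha-1$. The only cosmetic differences are your use of the integral form of the remainder and your slightly cleaner bookkeeping of where each constraint on $\eta$ originates.
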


\begin{proof}
Since $r_0(Kr)^{-1/2}\rightarrow0$, the contribution of the first step subsample to the {estimation equation}  is  $o_{P\mid \mathcal{F}_N}((Kr)^{-1/2})$.
Thus we can focus on the second step only.
{Recall that
\begin{equation}
  \tilde{\bm\beta}_{Kr}:=\left(\sum_{j=1}^K\dot{Q}_j^*(\tilde{\bm\beta}_j)\right)^{-1}\sum_{j=1}^K\dot{Q}_j^*(\tilde{\bm\beta}_j){\tilde{\bm\beta}}_j,
\end{equation}
and $\breve{\bm\beta}$ is solution of
\begin{equation}
  Q^*(\bm\beta):=\sum_{j=1}^KQ_j^*(\bm\beta)=0,
\end{equation}
where $Q_j^*(\bm\beta)$'s are defined in Algorithm~3.}

Under Assumption 3, it is clear that (24) is well defined as $n\rightarrow\infty$. %
By Taylor expansion,
\begin{equation}\label{eq:ax31}
  Q_j^*(\breve{\bm\beta})=Q_j^*(\tilde{\bm\beta}_j)+\dot{Q}_j^*(\tilde{\bm\beta}_j)(\breve{\bm\beta}-\tilde{\bm\beta}_j)+R_{j}^*,
\end{equation}
where $R_j^*$ is the Remainder with the form
\begin{equation*}
  R_j^*= \left[n^{-1}\sum_{i=1}^n\frac{\delta_{ji}}{p_{ji}}\tilde{\dot{\psi}}(\acute{\bm{\beta}}^T\bm x_{ji})\bm x_{ji}\bm x_{ji}^T-n^{-1}\sum_{i=1}^n\frac{\delta_{ji}}{p_{ji}}{\dot{\psi}}(\tilde{\bm\beta}_j^T\bm x_{ji})\bm x_{ji}\bm x_{ji}^T\right](\breve{\bm\beta}-\tilde{\bm\beta}_j),
\end{equation*}
where $\tilde{\dot{\psi}}(\acute{\bm{\beta}}^T\bm x_{ji})=\text{diag}(\dot{\psi}(\acute{\bm{\beta}}_{(1)}^T\bm x_{ji}),\ldots,\dot{\psi}(\acute{\bm{\beta}}_{(d)}^T\bm x_{ji}))$ and $\acute{\bm \beta}_{(1)},\ldots,\acute{\bm \beta}_{(d)}$ lies between $\breve{\bm\beta}$ and $\tilde{\bm\beta}_j$. Here $\text{diag}(\cdot)$ represents a diagonal matrix with diagonal elements $\dot{\psi}(\acute{\bm{\beta}}_{(1)}^T\bm x_{ji}),\ldots,\dot{\psi}(\acute{\bm{\beta}}_{(d)}^T\bm x_{ji})$.

From Assumption 4, it follows that
\begin{equation*}
\begin{split}
&\left\|\frac{\partial\psi(\acute{\bm\beta}_{(k)}^T\bm x_{ji})}{\partial\beta_k}\bm x_{ji}^T-\frac{\partial\psi(\tilde{\bm\beta}_j^T\bm x_{ji})}{\partial\beta_k}\bm x_{ji}^T\right\|_s
\le\left\|\dot{\psi}(\bm\beta_{(k)}^T\bm x_{ji})\bm x_{ji}\bm x_{ji}^T-\dot{\psi}(\tilde{\bm\beta}_j^T\bm x_{ji})\bm x_{ji}\bm x_{ji}^T\right\|_s\\
\le&m_2(\bm x_{ji})\|\acute{\bm\beta}_{(k)}-\tilde{\bm\beta}_j\|\le m_2(\bm x_{ji})\|\breve{\bm\beta}-\tilde{\bm\beta}_j\|,
\end{split}
\end{equation*}
for $k=1,\ldots,d, $ {\blue where the last inequality is due to the facts that $\acute{\bm \beta}_{(1)},\ldots,\acute{\bm \beta}_{(d)}$ lies between $\breve{\bm\beta}$ and $\tilde{\bm\beta}_j$.}
It follows:
\begin{align*}
&\left\|n^{-1}\sum_{i=1}^n\frac{\delta_{ji}}{p_{ji}}\tilde{\dot{\psi}}(\acute{\bm{\beta}}^T\bm x_{ji})\bm x_{ji}\bm x_{ji}^T-n^{-1}\sum_{i=1}^n\frac{\delta_{ji}}{p_{ji}}{\dot{\psi}}(\tilde{\bm\beta}_j^T\bm x_{ji})\bm x_{ji}\bm x_{ji}^T\right\|_s\\
\le&n^{-1}\sum_{i=1}^n\frac{\delta_{ji}}{p_{ji}}\left\|\tilde{\dot{\psi}}(\acute{\bm{\beta}}^T\bm x_{ji})\bm x_{ji}\bm x_{ji}^T-{\dot{\psi}}(\tilde{\bm\beta}_j^T\bm x_{ji})\bm x_{ji}\bm x_{ji}^T\right\|_s\\
\le&n^{-1}\sum_{i=1}^n\frac{\delta_{ji}}{p_{ji}}\left\|\tilde{\dot{\psi}}(\acute{\bm{\beta}}^T\bm x_{ji})\bm x_{ji}\bm x_{ji}^T-{\dot{\psi}}(\tilde{\bm\beta}_j^T\bm x_{ji})\bm x_{ji}\bm x_{ji}^T\right\|_F\\
=&n^{-1}\sum_{i=1}^n\frac{\delta_{ji}}{p_{ji}}\left(\sum_{k=1}^d\left\|\frac{\partial\psi(\bm\beta_{(k)}^T\bm x_{ji})}{\partial\beta_k}\bm x_{ji}^T-\frac{\partial\psi(\tilde{\bm\beta}_j^T\bm x_{ji})}{\partial\beta_k}\bm x_{ji}^T\right\|_s^2\right)^{1/2}\\
\le&n^{-1}\sqrt{d}\sum_{i=1}^n\frac{\delta_{ji}}{p_{ji}}m_2(\bm x_{ji})\|\breve{\bm\beta}-\tilde{\bm\beta}_j\|.
\end{align*}
If the following events happens
\[\Xi_{6,m}=\left\{n^{-1}\sqrt{d}\sum_{i=1}^n\frac{\delta_{ji}}{p_{ji}}m_2(\bm x_{ji})\|\breve{\bm\beta}-\tilde{\bm\beta}_j\|\le 2U\right\}. \]
We have
$\|R_j\|\le 2U\|\breve{\bm\beta}-\tilde{\bm\beta}_j\|$.
Using the fact $Q^*(\breve{\bm\beta})=0$ and $Q_j^*(\tilde{\bm\beta}_j)=0$, it holds that $\breve{\bm\beta}-\tilde{\bm\beta}_{Kr}=(\sum_{j=1}^K\dot{Q}_j^*(\tilde{\bm\beta}_j))^{-1}\sum_{j=1}^KR_j^*$ by taking the summation on both side of (\ref{eq:ax31}) and definition of $\tilde{\bm\beta}_{Kr}$.
Thus for all $\omega\in\Phi_{N,K,\Delta}$, if $\Xi_{1,\lambda}$ defined in Lemma \ref{lem:local} and $\Xi_{6,m}$  happen on all machine $\mathcal{F}_{Nj}$'s $(j=1,\ldots,K)$, the following inequalities hold,
\begin{equation}\label{eq:ax-lem3-8}
\begin{split}
\|\breve{\bm\beta}-\tilde{\bm\beta}_{Kr}\|&\le \left\|\left(\frac{1}{K}\sum_{j=1}^K\dot{Q}_j^*(\tilde{\bm\beta}_j)\right)^{-1}\right\|_s\left\|\frac{1}{K}\sum_{j=1}^KR_j^*\right\|\\
&\le \frac{4U}{KC_{\psi}}\sum_{j=1}^K\|\breve{\bm\beta}-\tilde{\bm\beta}_{j}\|
\le C_1r^{-2\alpha}\Delta,
\end{split}
\end{equation}
where $C_1:=4U/C_{\psi}$ and the second last inequality due to the facts ${K}^{-1}\sum_{j=1}^K\dot{Q}_j^*(\tilde{\bm\beta}_j))\ge 0.5C_{\psi}I_d$ if $\Xi_{1,\lambda}$ defined in Lemma \ref{lem:local} happens on all machines, %
 as $n\rightarrow\infty$. %

{\purple Define $\Xi_{5,n,j,\Delta}=\{r^{\alpha}\|\tilde{\bm\beta}_j-\bm\beta_t\|\le \Delta\}$, $\Xi_{5,N,\Delta}=\{(Kr)^{\alpha}\|\breve{\bm\beta}-\bm\beta_t\|\le \Delta\}$ and $\Phi_{N,K,\Delta}=\cap_{j=1}^K \Xi_{5,n,j,\Delta}\cap \Xi_{5,N,\Delta}$.}
Then for any %
 $K=O(r^\eta)$ with $\eta<\min(1-2\alpha,4\alpha-1)$, we have $\Phi_{N,K,\Delta}\subset\{\sqrt{Kr}\|\tilde{\bm\beta}-\tilde{\bm\beta}_{Kr}\|\le\Delta_1\}$ for any $\Delta_1:=C_1\Delta$, when  $\Xi_{1,\lambda}$  happens and $r$ is large enough, since $\sqrt{Kr}\|\breve{\bm\beta}-\tilde{\bm\beta}_{Kr}\|\le \sqrt{Kr} r^{-2\alpha}\Delta_1=O(r^{(1+\eta-4\alpha)/2})\Delta_1$.

Recall that for $\omega\in \Phi_{N,K,\Delta} $, $\|\tilde{\bm\beta}_{j}-\bm\beta_t\|\le r^{-1/2}\Delta$ and $\|\breve{\bm\beta}-\bm\beta_t\|\le (Kr)^{-1/2}\Delta$.
Thus $\|\breve{\bm\beta}-\tilde{\bm\beta}_j\|\le r^{-1/2}\Delta+(Kr)^{-1/2}\Delta.$
Combine the above results, it holds that
\begin{equation}\label{eq:ax-lem3-7}
\begin{split}
&\pr(\Xi_{6,m}^c\cap\Phi_{N,K,\Delta})= E\mathbbm{1}_{\{\Xi_{6,m}^c ~\text{and}~ \Phi_{N,K,\Delta} ~\text{happen}\}}\\
\le&(2U)^{-1}{ E}\left\{\left|n^{-1}\sqrt{d}\sum_{i=1}^n\frac{\delta_{ji}}{p_{ji}}m_2(\bm x_{ji})\|\breve{\bm\beta}-\tilde{\bm\beta}_j\|\right|\mathbbm{1}_{\Phi_{N,K,\Delta}~\text{happens}}\right\}\\
\le& (2U)^{-1}\E\left\{\left(\frac{\Delta}{r^{1/2}}+\frac{\Delta}{(Kr)^{1/2}}\right)\left(\frac{\sqrt{d}}{n}\sum_{i=1}^n\frac{\delta_{ji}}{p_{ji}}m_2(\bm x_{ji})\right) \mathbbm{1}_{\{\Phi_{N,K,\Delta} \text{happens}\}}\right\}\\
\le& (2U)^{-1}\sqrt{d}C_m\left(\frac{\Delta}{r^{1/2}}+\frac{\Delta}{(Kr)^{1/2}}\right)
=O(r^{-1/2}).
\end{split}
\end{equation}

{By (\ref{eq:ax-lem3-8}), (\ref{eq:ax-lem3-7}) and the results in Lemma \ref{lem:local}}, as $r \rightarrow \infty$, for any $\Delta_1>0$, it follows that
\begin{small}
\begin{align*}
 &\pr(\sqrt{Kr}\|\breve{\bm\beta}-\tilde{\bm\beta}_{Kr}\|>\Delta_1)\\
 =&\pr(\sqrt{Kr}\|\breve{\bm\beta}-\tilde{\bm\beta}_{Kr}\|>\Delta_1,\Phi_{N,K,\Delta} ~  \text{happens},\  \Xi_{1,\lambda}\  \text{and}\   \Xi_{6,m}  \ \text{happen on all Machines})\\
+& \pr(\sqrt{Kr}\|\breve{\bm\beta}-\tilde{\bm\beta}_{Kr}\|>\Delta_1, \Phi_{N,K,\Delta}~  \text{happens, all}\ \Xi_{1,\lambda}\ \text{happen and some}\
  \Xi_{6,m}\  \text{not happen})\\
+& \pr(\sqrt{Kr}\|\breve{\bm\beta}-\tilde{\bm\beta}_{Kr}\|>\Delta_1, \Phi_{N,K,\Delta}~ \text{not happens or some}\  \Xi_{1,\lambda}\  \text{not happen}) \\
 \le & \pr(\sqrt{Kr}\|\breve{\bm\beta}-\tilde{\bm\beta}_{Kr}\|>\Delta_1, \Phi_{N,K,\Delta} ~\text{happens, }\  \Xi_{1,\lambda}\  \text{and}\   \Xi_{6,m}  \ \text{happen on all Machines})\\
+&\pr(\Phi_{N,K,\Delta} ~ \text{happens, }~  \Xi_{6,m}~ \text{not happen on some of the Machines})\\
+&\pr(\Phi_{N,K,\Delta}^c)+P(\Xi_{1,\lambda}\   \ \text{not happen on some of the Machines})\\
 \le & \pr(\sqrt{Kr}\|\breve{\bm\beta}-\tilde{\bm\beta}_{Kr}\|>\Delta_1, \Phi_{N,K,\Delta} ~\text{happens,}\  \Xi_{1,\lambda}\  \text{and}\   \Xi_{6,m}  \ \text{happen on all Machines})\\
+&\pr(\Phi_{N,K,\Delta} ~ \text{happens, }  \Xi_{6,m}~ \text{not happen on some of the Machines})+\pr(\Phi_{N,K,\Delta}^c)\\
+&\pr(\Xi_{1,\lambda}\   \text{not happen on some of the Machines})\\
=& \sqrt{K}O(r^{1/2-2\alpha})+ KO(r^{-1}) + O({(Kr)}^{2\alpha-1})+KO({r}^{2\alpha-1}) + KO(r^{-1/2})=o(1).
 \end{align*}
\end{small}

\end{proof}

Now we prove Theorem~7.
\begin{proof}

From the definition of $\breve{\bm\beta}$ in Lemma \ref{lem:comb} and Theorem 3.3 in \cite{xiong2008some}, {it can be shown that} %
\[  \breve{\bm\beta}-\hat{\bm\beta}_{\rm QLE}=O_{P|\mathcal{F}_N}((Kr)^{-1/2}).\]
Combining this result with Lemma \ref{lem:comb}, the desired result holds by fetching $\alpha=1/3$.

\end{proof}

{Based on Lemma \ref{lem:comb} and Theorem 6, the proof of Theorem~8 is in the following}.
\begin{proof}

From the definition of $\tilde{\bm\beta}_{Kr}$, we known that  $\tilde{\bm\beta}_{Kr}-\breve{\bm\beta}=o_P((Kr)^{-1})$, where $\breve{\bm\beta}$ is calculated from Algorithm 2 by using the pooling subsamples.  Combine this result with Theorem 6, the desired result follows immediately from the Slutsky's Theorem.

\end{proof}

\blue
\section{Additional Simulation Results}
To further evaluate our proposed methods, we consider the following additional cases with a lager $N=5,000,000$.

\begin{enumerate}[{Case S}1:]
\item The true value of $\bm\beta$ is a $d\times1$ vector of 0.5 with $d=7$, and  $\bm x$ follows a multivariate normal distribution $N(\bm {0.15}, I_d)$, where $I_d$ is the  $d\times d$  identity matrix.

\item The true value of $\bm\beta$ is a $d\times1$ vector of 0.5 with $d=7$, and  $\bm x$ follows a multivariate normal distribution $N(\bm {0.15}, \Sigma)$, where $\Sigma$ is the  $d\times d$  matrix with the $(i,j)$-th entry being $0.5^{|i-j|}$.

\item The true value of $\bm\beta$ is a $d\times1$ vector of 0.5 with $d=7$, and $\bm x$ follows a multivariate $t$ distribution with degrees of freedom 9, $\bm x\sim t_9(\bm {0.15}, I_d)/10$, where the $I_d$ is defined in Case S1. For this set up, only part of the Assumptions in Sections 2.2, 3.2 and 4 are satisfied.

\item The true value of $\bm\beta$ is a $d\times1$ vector with $d=35$, whose first ten elements are 0.5, the last five elements are -0.1, and the rest are 0.2. In this case, $\bm x$ follows a multivariate normal distribution with $N(\bm \mu, I_d)$, where $\bm\mu$ is a $35\times1$ vector whose first seven elements are 0.15 and rest are zeros.

\end{enumerate}

To be aligned with the the settings described in Section 5 of the main paper, we also demonstrate our methods on Poisson distribution with mean ${E}(y|\bm {x} )=\exp(\bm\beta^T\bm{x})$.
It {\bluee is} worth mentioning that in Cases S1, S2, and S4, all the Assumptions 1--4 are fully satisfied. However, in Case S3, the conditions in (iii) and (iv) of Assumption 2 are not satisfied.

Here, we fix $r_0=400$ and $\varrho=0.2$.
We choose the sample size $r$  to be 500, 700, 1000, 1200, 1500, 1700 and 2000. Since we will consider a case where the full data QLE cannot be calculated, we use the true value of $\bm\beta$ to calculate the MSE in this section, i.e., we calculate the MSE from
\[{\blue \text{MSE}=T^{-1}\sum_{t=1}^T\|{\bm\beta}_{\bm p}^{(t)}-{\bm\beta}\|^2},\]
where ${\bm\beta}_{\bm{p}}^{(t)}$ is the estimate from the $t$-th subsample with subsampling probability $\bm p$ and $\bm\beta$ is the true parameter. The results are reported  in Figure \ref{fig:additional_distribution}.

\begin{figure}[!htp]
  \centering
  \begin{subfigure}{0.45\textwidth}
    \includegraphics[width=\textwidth]{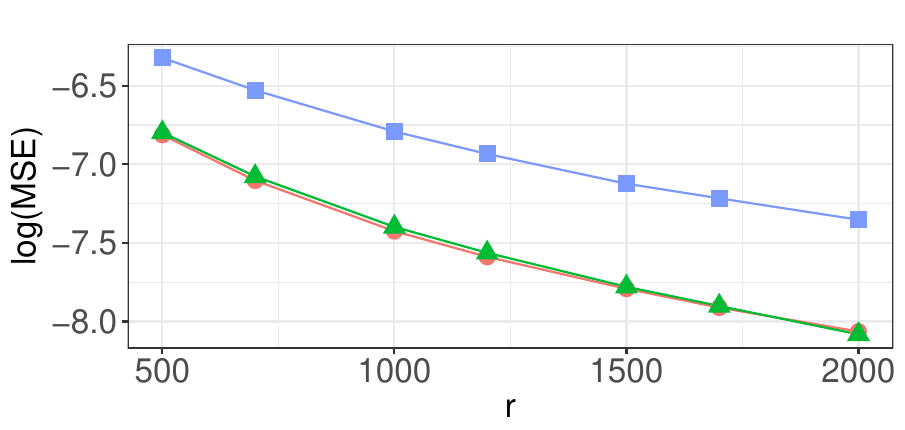}\\[-1.2cm]
    \caption{Case S1 (K=1)}
  \end{subfigure}
  \begin{subfigure}{0.45\textwidth}
    \includegraphics[width=\textwidth]{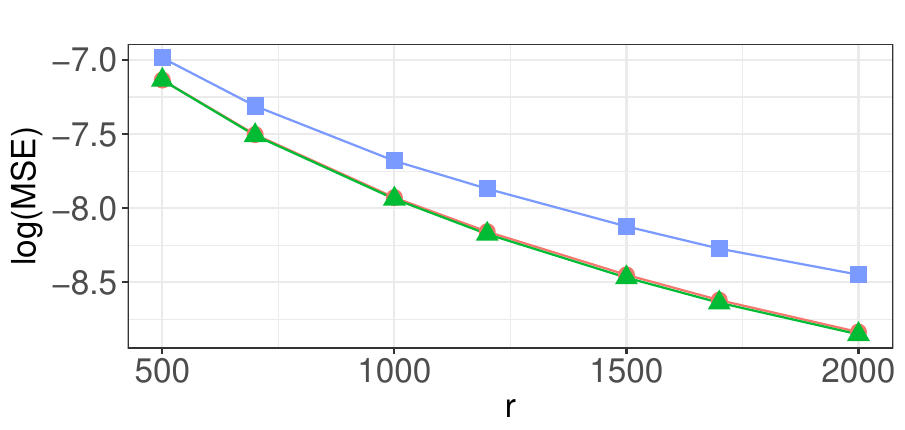}\\[-1.2cm]
    \caption{Case S1 (K=5)}
  \end{subfigure}\\[-0.5mm]
  \begin{subfigure}{0.45\textwidth}
    \includegraphics[width=\textwidth]{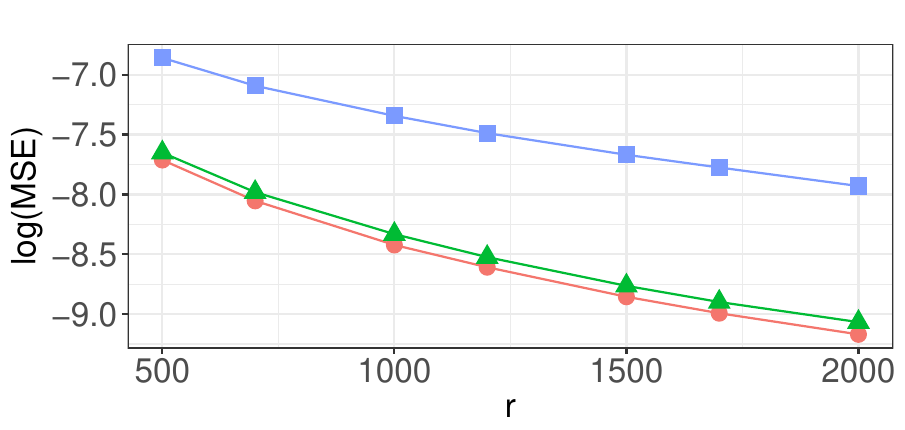}\\[-1.2cm]
    \caption{Case S2 (K=1)}
  \end{subfigure}
  \begin{subfigure}{0.45\textwidth}
    \includegraphics[width=\textwidth]{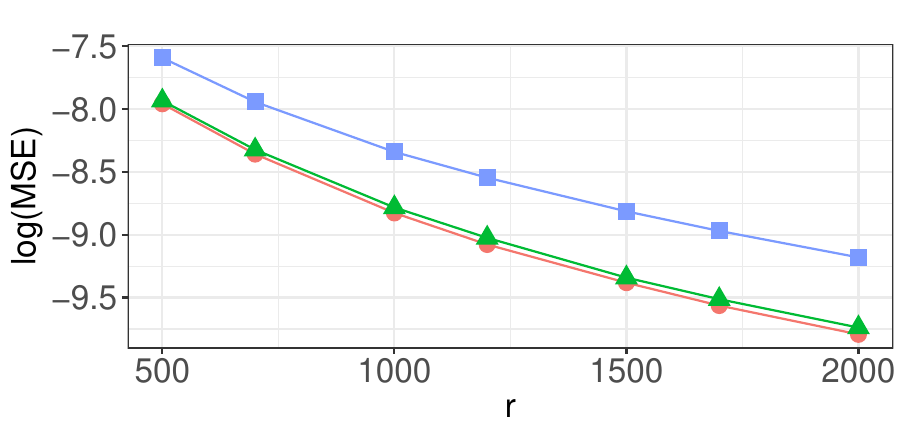}\\[-1.2cm]
    \caption{Case S2 (K=5)}
  \end{subfigure}\\[-0.5mm]
   \begin{subfigure}{0.45\textwidth}
    \includegraphics[width=\textwidth]{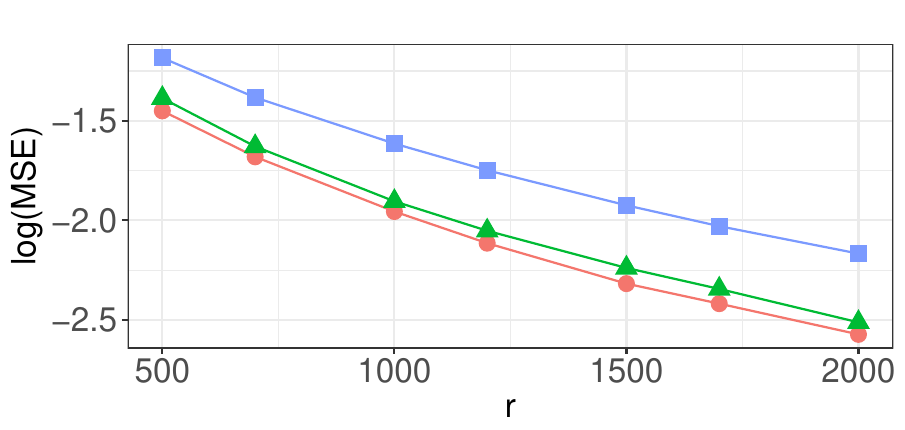}\\[-1.2cm]
    \caption{Case S3 (K=1)}
  \end{subfigure}
  \begin{subfigure}{0.45\textwidth}
    \includegraphics[width=\textwidth]{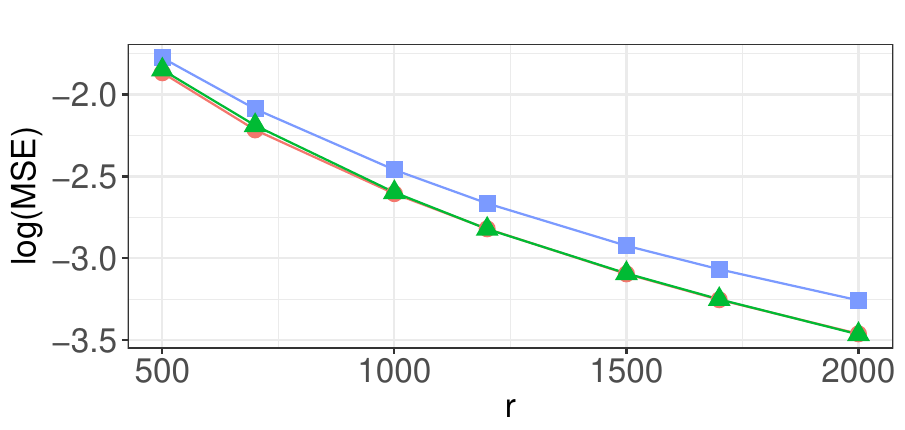}\\[-1cm]
    \caption{Case S3 (K=5)}
  \end{subfigure}\\[-0.5mm]
  \begin{subfigure}{0.45\textwidth}
    \includegraphics[width=\textwidth]{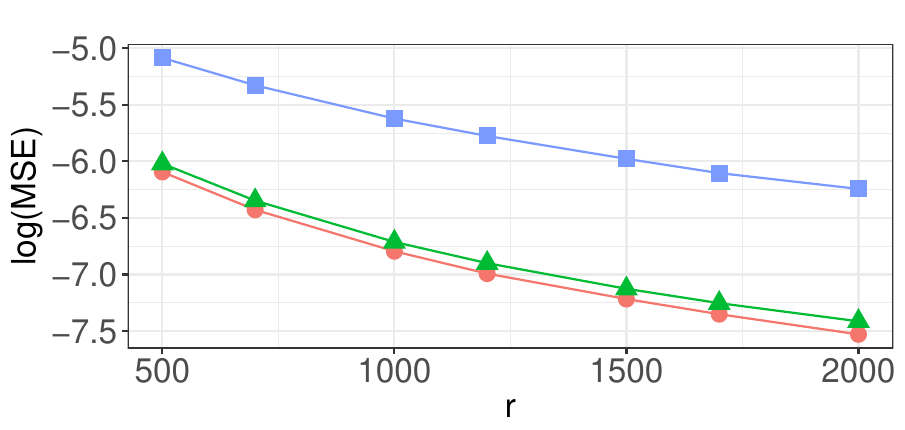}\\[-1.2cm]
    \caption{Case S4 (K=1)}
  \end{subfigure}
  \begin{subfigure}{0.45\textwidth}
    \includegraphics[width=\textwidth]{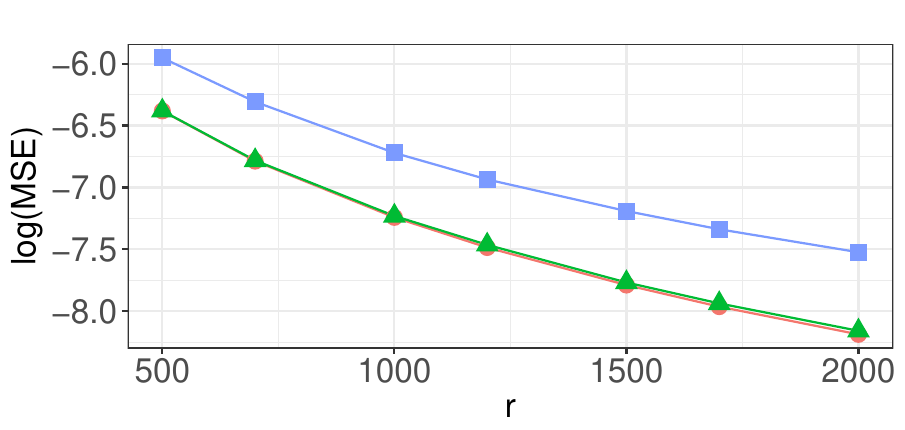}\\[-1.2cm]
    \caption{Case S4 (K=5)}
  \end{subfigure}\\ [-0.5mm]
  \caption{A graph showing the log of MSE with different ${r}$ and $K$ for different distributions of covariates based on MV (red circle), MVc (green triangle) and uniform subsampling (blue square) methods where $r_0=400$ and $\varrho=0.2$.}
  \label{fig:additional_distribution}
\end{figure}

In Figure \ref{fig:additional_distribution}, the relative patterns among different methods are similar to Cases 1--4 in the main paper for all the five additional datasets. Furthermore, the results in Case S3 indicate
that our methods have higher estimation efficiency than the uniform subsampling method even when some regularity assumptions required in the theoretical investigations are not satisfied.

Now we evaluate the computational efficiency of the subsampling
strategies. We implemented all methods using the R programming language and recorded the computing times of the three subsampling strategies (uniform, MV, MVc) using \verb"Sys.time()" function. {\red To simulate the distributed computing environments, we used the \verb"foreach" and \verb"doSNOW" packages in \verb"R".}
Computations {\bluee were} carried out on a desktop computer with Window 10 platform and an Intel I7 processor. Each subsampling strategy {\blue was} repeated for 50 times.  Figure~\ref{fig:time} shows the results for the four cases listed above with different $r$ and $k$, and a fixed $r_0=400$. The computing time for using the full data
is also given  for comparisons.

\begin{figure}[!htp]
  \centering
  \begin{subfigure}{0.45\textwidth}
    \includegraphics[width=\textwidth]{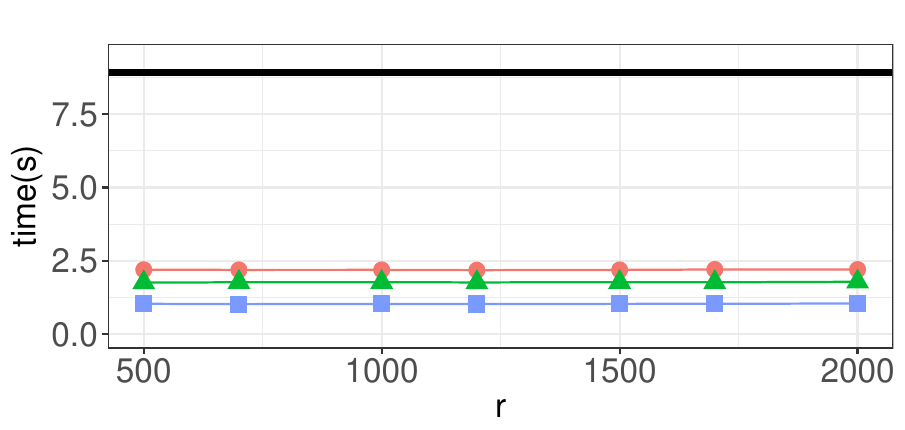}\\[-1.2cm]
    \caption{Case S1 (K=1)}
  \end{subfigure}
  \begin{subfigure}{0.45\textwidth}
    \includegraphics[width=\textwidth]{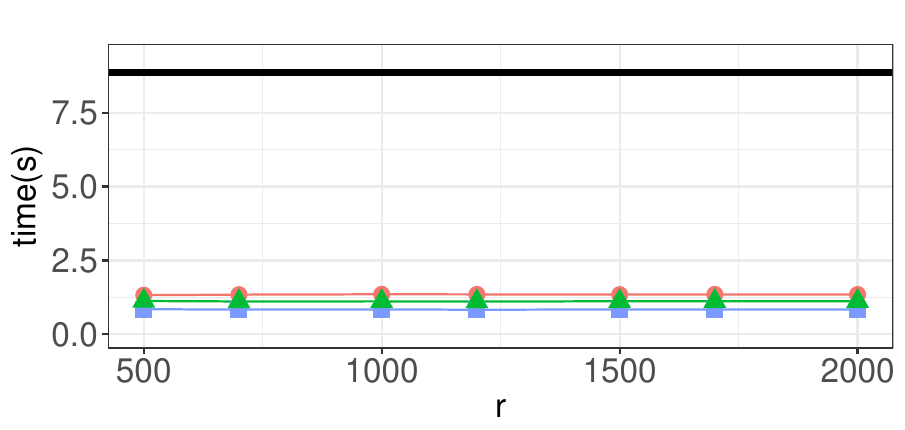}\\[-1.2cm]
    \caption{Case S1 (K=5)}
  \end{subfigure}\\[-0.5mm]
  \begin{subfigure}{0.45\textwidth}
    \includegraphics[width=\textwidth]{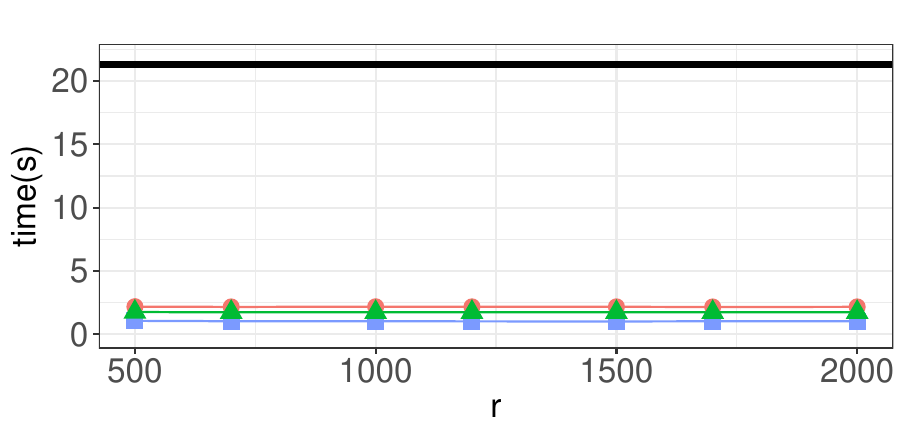}\\[-1.2cm]
    \caption{Case S2 (K=1)}
  \end{subfigure}
  \begin{subfigure}{0.45\textwidth}
    \includegraphics[width=\textwidth]{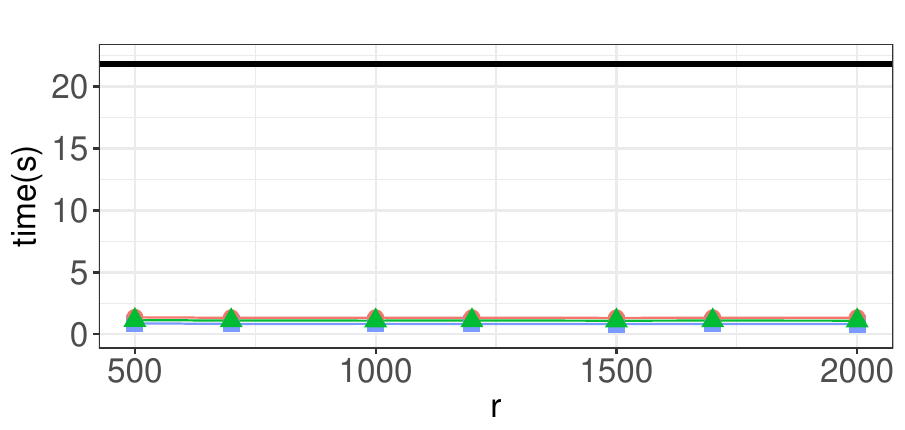}\\[-1.2cm]
    \caption{Case S2 (K=5)}
  \end{subfigure}\\[-0.5mm]
   \begin{subfigure}{0.45\textwidth}
    \includegraphics[width=\textwidth]{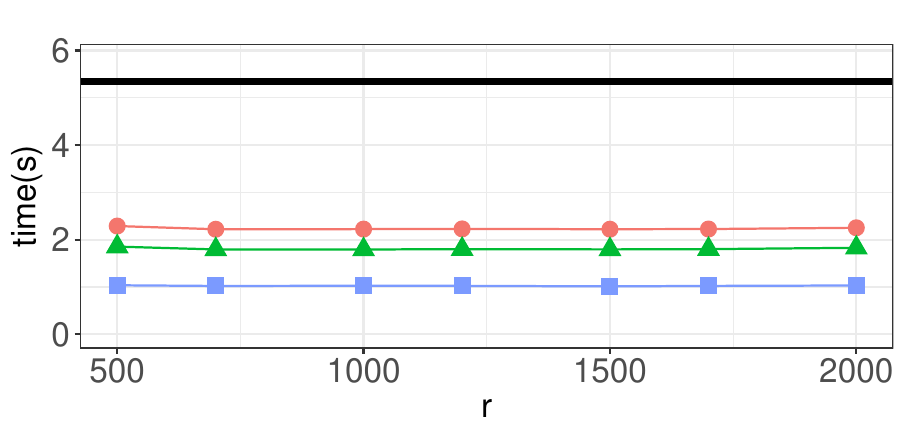}\\[-1.2cm]
    \caption{Case S3 (K=1)}
  \end{subfigure}
  \begin{subfigure}{0.45\textwidth}
    \includegraphics[width=\textwidth]{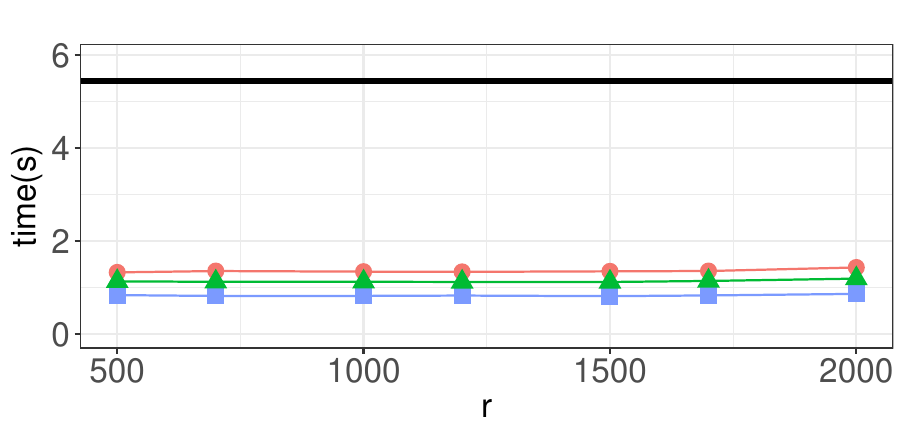}\\[-1cm]
    \caption{Case S3 (K=5)}
  \end{subfigure}\\[-0.5mm]
  \begin{subfigure}{0.45\textwidth}
    \includegraphics[width=\textwidth]{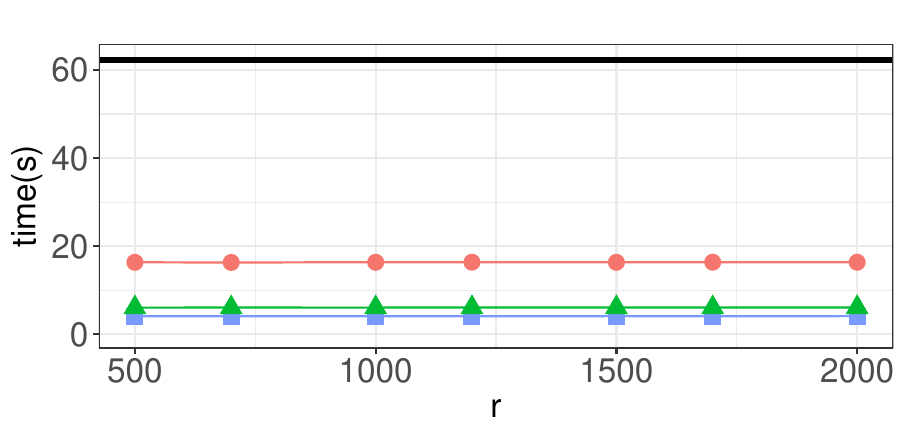}\\[-1.2cm]
    \caption{Case S4 (K=1)}
  \end{subfigure}
  \begin{subfigure}{0.45\textwidth}
    \includegraphics[width=\textwidth]{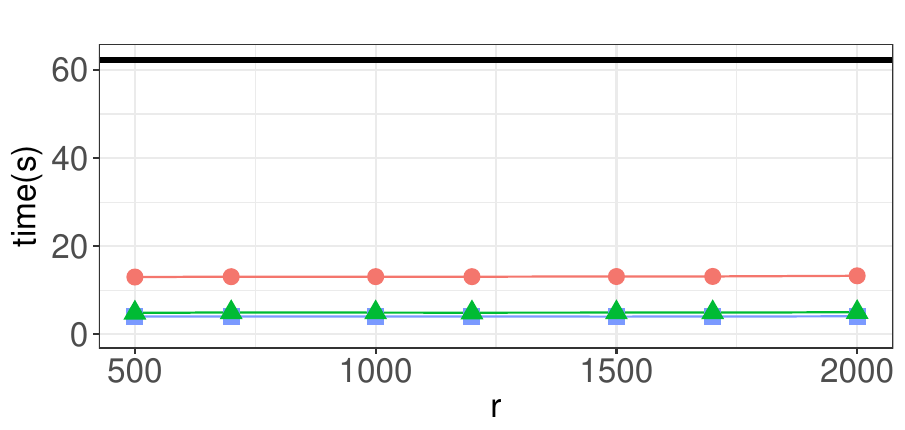}\\[-1.2cm]
    \caption{Case S4 (K=5)}
  \end{subfigure}\\ [-0.5mm]
  \caption{A graph showing the computational time (in seconds) with different ${r}$ and $K$ for different distributions of covariates based on MV (red circle), MVc (green triangle) and uniform subsampling (blue square) methods where $r_0=400$ and $\varrho=0.2$. The black solid line in the picture stands for the computing time for the full sample QLE.}
  \label{fig:time}
\end{figure}

Figure \ref{fig:time} reveals that the computation time is not very sensitive to the subsample size. All the subsampling algorithms take significantly less computing time compared with the full data QLE.
This agrees with the fact that the computational complexity for the full data QLE with Newton-Raphson method is $O(\zeta N d^2)$ while subsampling methods do not need to perform iterative calculations on the full data. Here $\zeta$ stands for the number of iterations in the Newton-Raphson method.
 In most cases, the MVc method requires significantly less
computational time compared with the MV method, especially for the last case. %
This is because the MV method requires $O(Nd^2 + \zeta rd^2)$ time and the MVc methods requires $O(Nd + \zeta rd^2)$. If $N>\zeta rd$, then these time complexities are $O(Nd^2)$ and $O(Nd)$, for the MV and MVc methods, respectively.
The uniform sampling always takes the least computation time, since it does not involve the computation of the sampling probability. Its time complexities is $O(N + \zeta rd^2)$, which reduces to $O(N)$ if $N>\zeta rd^2$.

\red To further investigate the case that full data QLE are infeasible to calculate due to limited RAM, we consider the scenario that the data is loaded into the RAM block-by-bock. For this, we consider the following Case S5 with $N=5,000,000$ and $d=140$. This is also a case with higher dimension.

\begin{enumerate}[{Case S}1:]
\setcounter{enumi}{4}
\item The true value of $\bm\beta$ is a $d\times1$ vector with $d=140$ whose first 35 elements are the same as $\bm\beta$ in Case S4 and rest are zeros. Here, $\bm x$ follows a multivariate normal distribution with $N(\bm \mu, I_d)$, where $\bm\mu$ is a $140\times1$ vector whose first seven elements are 0.15 and rest are zeros. %
\end{enumerate}

In this case, we generate and store the full data in five files, with each file about $\sim$2.4 GB containing 1,000,000 observations. In this setup, we obtain the pilot estimators for each block independently and combine the five pilot estimators with the five optimal estimators using (24). The results about MSE and computing time are in Figures S3(a) and S3(b), respectively.

\begin{figure}[!htp]
  \centering
  \begin{subfigure}{0.45\textwidth}
    \includegraphics[width=\textwidth]{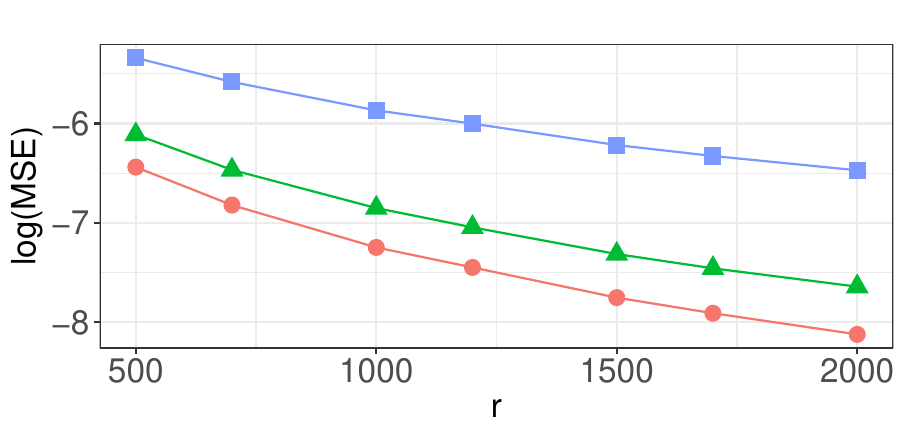}\\[-1cm]
    \caption{Log MSE}\label{fig:block1}
  \end{subfigure}
  \begin{subfigure}{0.45\textwidth}
    \includegraphics[width=\textwidth]{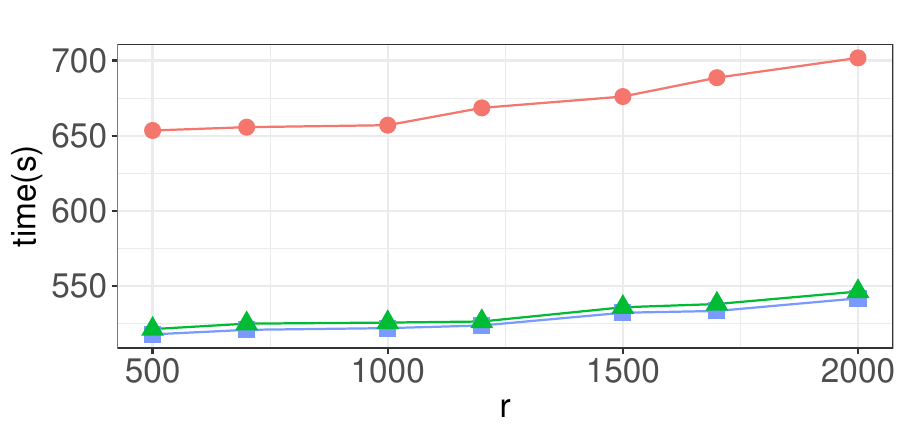}\\[-1cm]
    \caption{Times (K=5)}\label{fig:block2}
  \end{subfigure}\\[-1mm]
  \caption{{\bluee A graph showing the log MSE and computational time (in seconds) with different ${r}$ for Case S5 based on MV (red circle), MVc (green triangle) and uniform subsampling (blue square) methods where $r_0=400$ and $\varrho=0.2$.} }
  \label{fig:block}
\end{figure}

From Figure \ref{fig:block}, it is clear to see that the relative performance among the three methods are the same as in Figures \ref{fig:additional_distribution} and \ref{fig:time}. %
In Figure S3(b), the times for loading all data files are also recorded. Note that communication between CPU and hard drive is much slower than communication between CPU and RAM. Thus the computational time in Figure S3(b) are significantly longer than those observed in Figure~\ref{fig:time}. We also see that MVc is much faster than MV in this case as the dimension $d$ is larger.

In order to show the trade-off more intuitively between estimation efficiency and computation cost, we plot the log MSEs against the used CPU times for Case S5.  %
The results are presented in Figure~\ref{fig:S4}.
\begin{figure}[!htp]
  \centering
    \includegraphics[width=\textwidth]{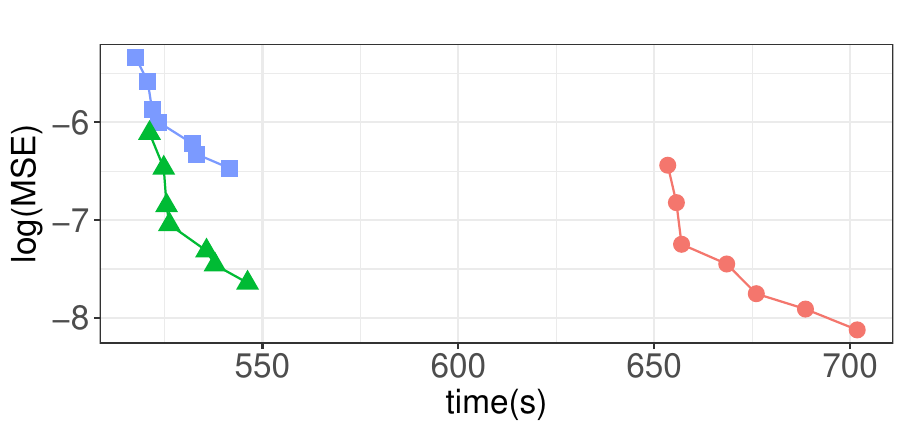}\\[-0.5cm]
  \caption{A graph showing the computational time (in seconds) and log MSE with different ${r}$ for Case S5 based on MV (red circle), MVc (green triangle) and uniform subsampling (blue square) methods where $r_0=400, K=5$ and $\varrho=0.2$. }%
  \label{fig:S4}
\end{figure}

From Figure~\ref{fig:S4}, %
the MVc approach produce a smaller MSE compared with the uniform subsampling method with the same CPU time, and this advantage becomes more evident with the increase of the used CPU time. For the MV approach, although it produces a smaller MSE compared with the uniform subsampling method and the MVc approach with the same subsample size, it requires a longer time to achieve the same level of accuracy. Note that this does not mean the MVc approach is always better than the MV approach. For example, if the available memory only allow the analysis of a subsample of size $r$ while the computational time is relatively cheap, then the MV approach may be preferable as it often results in more informative subsamples.

To evaluate the asymptotic normality visually, we create histograms for parameter estimates from the $T=1000$ repetitions of the simulation. Figure \ref{fig:R1} reports the results for parameter $\beta_1$ in Case S1 when $r=2000$. The red curves are the normal density functions with the same means and standard deviations for the $1000$ subsample estimates. From Figure \ref{fig:R1}, we see that the distribution of the subsampling estimators are very close to normal distributions. The results for other parameters and other cases and thus are omitted.

\begin{figure}[H]
  \centering
    \begin{subfigure}{0.3\textwidth}
    \includegraphics[width=\textwidth]{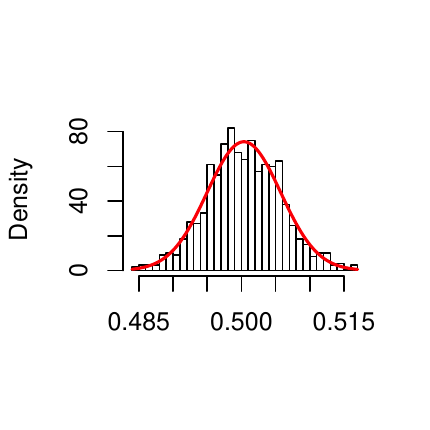}\\[-1.5cm]
    \caption{UNIF }
  \end{subfigure}
  \begin{subfigure}{0.3\textwidth}
    \includegraphics[width=\textwidth]{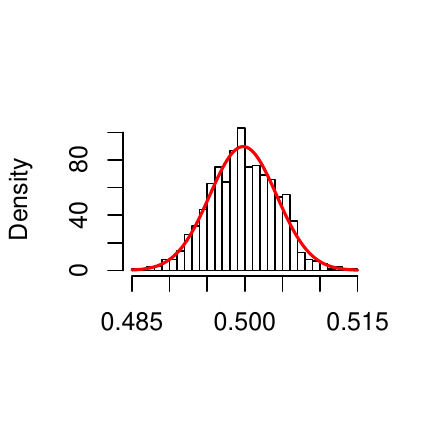}\\[-1.5cm]
    \caption{MV}
  \end{subfigure}
  \begin{subfigure}{0.3\textwidth}
    \includegraphics[width=\textwidth]{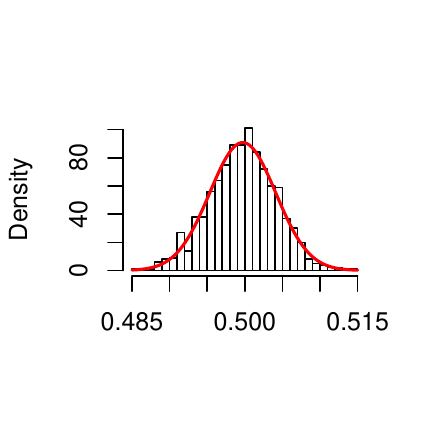}\\[-1.5cm]
    \caption{MVc}
  \end{subfigure}\\[-0.5mm]
  \caption{Empirical distribution of $\beta_1$ for Case S1 with $r=2000$, $r_0=400$ and $\varrho=0.2$ based on MV, MVc and uniform subsampling methods. The red curve is the estimated normal density through the method of moments.}
  \label{fig:R1}
\end{figure}

\color{black}

\bibliographystyle{jasa}

\bibliography{report}
\end{document}